\crefname{equation}{}{}
\crefname{equations}{}{}
\crefname{figure}{Fig.}{Fig.}
\crefname{definition}{Definition}{Definitions}
\crefname{remark}{Remark}{Remarks}
\crefname{lemma}{Lemma}{Lemmas}
\crefname{section}{Section}{Sections}
\crefname{app}{Appendix}{Appendices}
\newenvironment{bprooftree}
{\leavevmode\hbox\bgroup}
{\DisplayProof\egroup}
\tikzstyle{black}=[fill=black, draw=black, shape=circle, minimum size=6pt, inner sep=0pt, outer sep=0pt]
\tikzstyle{conv}=[fill=none, draw=black, shape=rectangle, inner sep=3pt]
\tikzstyle{letter}=[fill=none, inner sep=0pt, draw=black, regular polygon, regular polygon sides=4,rounded corners]
\tikzstyle{dist}=[fill=none, draw=black, isosceles triangle, isosceles triangle apex angle=60, inner sep=4pt, rotate=180, anchor=center, shape border uses incircle]
\tikzstyle{white}=[fill=white, draw=black, shape=circle, minimum size=6pt, inner sep=0pt, outer sep=0pt, tikzit draw=black, tikzit fill=white]
\tikzstyle{basic box}=[draw, inner sep=2pt, fill=white, rectangle, minimum height=1.2em, minimum width=1em]
\tikzstyle{single}=[-, draw=black]
\tikzstyle{double}=[-, draw=black, line width=2pt, tikzit fill=cyan, tikzit draw=cyan]
\tikzstyle{boxes}=[-, fill=white, dashed]
\newtheorem{assumption}[theorem]{Assumption}
\newcommand{\id}{\mathrm{id}}
\newcommand{\tensor}{\otimes} 
\newcommand{\FStoch}{\mathsf{FStoch}}
\newcommand{\FS}{\FStoch_{\scriptscriptstyle \mathsf{tv}}}
\newcommand{\Set}{\mathsf{Set}}
\newcommand{\Mat}{\mathsf{Mat}}
\newcommand{\EnrMat}{\mathsf{Mat}^{\scriptscriptstyle\leq}}
\newcommand{\bN}{\mathbb{N}}
\newcommand{\bR}{\mathbb{R}}
\newcommand{\VV}{\mathcal{V}}
\newcommand{\qV}{V}
\newcommand{\enCat}[1]{{#1}\mathsf{HMet}}
\newcommand{\enCatsym}[1]{{#1}\mathsf{PMet}}
\newcommand{\PMet}{\enCatsym{\zeroinfQ}}
\newcommand{\qVCat}{\enCat{\qV}}
\newcommand{\qVCatsym}{\enCatsym{\qV}}
\newcommand{\qVCatsum}{\qVCat_{\vten}}
\newcommand{\qVCatsymsum}{\qVCatsym_{\vten}}
\newcommand{\qVCatmax}{\qVCat_{\vmeet}}
\newcommand{\qVCatsymmax}{\qVCatsym_{\vmeet}}
\newrobustcmd{\Hclosesum}[1]{#1^{\mathsf{H}\vten}}
\newrobustcmd{\Pclosesum}[1]{#1^{\mathsf{P}\vten}}
\newrobustcmd{\Hclosemax}[1]{#1^{\mathsf{H}\vmeet}}
\newrobustcmd{\Pclosemax}[1]{#1^{\mathsf{P}\vmeet}}
\newrobustcmd{\Hclosestar}[1]{#1^{\mathsf{H}\star}}
\newrobustcmd{\Pclosestar}[1]{#1^{\mathsf{P}\star}}
\newrobustcmd{\Hclosesumsum}[1]{#1^{\mathsf{H}\vten\vten}}
\newrobustcmd{\Pclosesumsum}[1]{#1^{\mathsf{P}\vten\vten}}
\newrobustcmd{\Hclosesummax}[1]{#1^{\mathsf{H}\vten\vmeet}}
\newrobustcmd{\Pclosesummax}[1]{#1^{\mathsf{P}\vten\vmeet}}
\newrobustcmd{\Hclosemaxsum}[1]{#1^{\mathsf{H}\vmeet\vten}}
\newrobustcmd{\Pclosemaxsum}[1]{#1^{\mathsf{P}\vmeet\vten}}
\newrobustcmd{\Hclosemaxmax}[1]{#1^{\mathsf{H}\vmeet\vmeet}}
\newrobustcmd{\Pclosemaxmax}[1]{#1^{\mathsf{P}\vmeet\vmeet}}
\newrobustcmd{\anyclose}[1]{\overline{#1}}
\newcommand{\vle}{\sqsubseteq}
\newcommand{\vge}{\sqsupseteq}
\newcommand{\vjoin}{\sqcup}
\newcommand{\vJoin}{\bigsqcup}
\newcommand{\vten}{\oplus}
\newcommand{\vmeet}{\sqcap}
\newcommand{\C}{\mathcal{C}}
\newcommand{\D}{\mathcal{D}}
\newcommand{\Ob}{\mathop{Ob}}
\newcommand{\twoQ}{{2_\vmeet}}
\newcommand{\zeroinfQ}{{[0,\infty]_+}}
\newrobustcmd{\preordqmat}[1]{\mathsf{POHA}_{#1}}
\newrobustcmd{\metqmat}[1]{\mathsf{MHA}_{#1}}
\newrobustcmd{\RuleRefl}{\hyperlink{rulerefl}{\textnormal{\textsc{Refl}}}}
\newrobustcmd{\RuleTriang}{\hyperlink{ruletriang}{\textnormal{\textsc{Triang}}}}
\newrobustcmd{\RuleBot}{\hyperlink{rulebot}{\textnormal{\textsc{Bot}}}}
\newrobustcmd{\RuleMon}{\hyperlink{rulemon}{\textnormal{\textsc{Mon}}}}
\newrobustcmd{\RuleJoin}{\hyperlink{rulejoin}{\textnormal{\textsc{Join}}}}
\newrobustcmd{\RuleSeqsum}{\hyperlink{ruleseqsum}{$\textnormal{\textsc{Seq}}_{\vten}$}}
\newrobustcmd{\RuleTenssum}{\hyperlink{ruletenssum}{$\textnormal{\textsc{Par}}_{\vten}$}}
\newrobustcmd{\RuleSeqmeet}{\hyperlink{ruleseqmeet}{$\textnormal{\textsc{Seq}}_{\vmeet}$}}
\newrobustcmd{\RuleTensmeet}{\hyperlink{ruletensmeet}{$\textnormal{\textsc{Par}}_{\vmeet}$}}
\newrobustcmd{\RuleCont}{\hyperlink{rulecont}{$\textnormal{\textsc{Cont}}$}}
\newrobustcmd{\RuleSym}{\hyperlink{rulesym}{$\textnormal{\textsc{Symm}}$}}
\newrobustcmd{\letter}{f}
\newrobustcmd{\letterbox}[1]{\renewrobustcmd{\letter}{#1}\tikzfig{letterbox}}
\newrobustcmd{\Dset}{\mathscr{D}}
\newrobustcmd{\Dmet}{\overline{\mathscr{D}}}
\newrobustcmd{\dist}{\varphi}
\newrobustcmd{\distb}{\psi}
\newrobustcmd{\distc}{\tau}
\newrobustcmd{\Dist}{\Phi}
\newrobustcmd{\Distb}{\Psi}
\newrobustcmd{\dirac}[1]{\delta_{#1}}
\newrobustcmd{\supp}[1]{\mathrm{supp}(#1)}
\newrobustcmd{\tv}{\mathsf{tv}}
\newrobustcmd{\tvmax}{\mathsf{tv}_{\times}}
\newrobustcmd{\tvplus}{\mathsf{tv}_{\otimes}}
\newrobustcmd{\Kant}[1]{#1_{\mathrm{K}}}
\newrobustcmd{\Cpl}{\mathfrak{C}}
\newrobustcmd{\discrete}{d_{\top}}
\newrobustcmd{\fset}[1]{\underline{\mathbf{#1}}}
\newrobustcmd{\drel}[1]{\Delta_{#1}}
\newrobustcmd{\ndrel}[1]{\Delta^{\!\mathsf{c}}_{#1}}
\newrobustcmd{\qenrcats}[1]{{#1}\mathsf{HMet}}
\newrobustcmd{\ConvAlg}{\mathsf{Conv}}
\newrobustcmd{\HA}[1]{\mathsf{HA}_{#1}}
\newrobustcmd{\BarAlg}{\mathsf{Lib}}
\newcommand{\qU}{\mathcal{U}}
\newrobustcmd{\del}{\raisebox{0.3ex}{\scalebox{0.6}{\tikzfig{del}}}}
\newrobustcmd{\cc}{\raisebox{0.3ex}{\scalebox{0.6}{\tikzfig{cc}}}}
\newrobustcmd{\cop}{\raisebox{0.3ex}{\scalebox{0.6}{\tikzfig{cop}}}}
\newrobustcmd{\ccOneMinus}{\raisebox{0.3ex}{\scalebox{0.6}{\tikzfig{ccOneMinus}}}}
\newrobustcmd{\swap}{\mathsf{s}}
\newrobustcmd{\sigCA}{\Sigma_{\mathsf{CA}}}
\newrobustcmd{\img}[1]{\vcenter{\hbox{\includegraphics[scale=0.15]{#1}}}}
\newrobustcmd{\one}{I}
\newrobustcmd{\bigplus}{%
	\DOTSB\mathop{\mathpalette\mattos@bigplus\relax}\slimits@
}
\newcommand\mattos@bigplus[2]{%
	\vcenter{\hbox{%
			\sbox\z@{$#1\sum$}%
			\resizebox{!}{0.9\dimexpr\ht\z@+\dp\z@}{\raisebox{\depth}{$\m@th#1+$}}%
	}}%
	\vphantom{\sum}%
}
\newrobustcmd{\syncat}[2]{\mathcal{S}_{\scriptstyle #1,#2}}
\newrobustcmd{\qsyncat}[2]{\widehat{\mathcal{S}}_{\scriptstyle #1,#2}}
\newrobustcmd{\sync}[1]{\mathcal{S}_{\scriptstyle #1}}
\newrobustcmd{\LT}[1]{\mathcal{L}_{\scriptstyle #1}}
\newrobustcmd{\add}{q}
\newrobustcmd{\zer}{j}
\newrobustcmd{\scalar}[1]{#1}
\newcommand{\diagbox}[3]{
	\begin{tikzpicture}
		\begin{pgfonlayer}{nodelayer}
			\node [style=basic box] (0) at (0, 0) {$#1$};
			\node [style=none] (1) at (1.5, 0) {};
			\node [style=none] (2) at (-1.5, 0) {};
			\node [style=none] (3) at (1.5, 0.5) {\scriptsize $#3$};
			\node [style=none] (4) at (-1.5, 0.5) {\scriptsize $#2$};
		\end{pgfonlayer}
		\begin{pgfonlayer}{edgelayer}
			\draw (2.center) to (0);
			\draw (0) to (1.center);
		\end{pgfonlayer}
	\end{tikzpicture}
}
\title{Quantitative Monoidal Algebra: Axiomatising Distance with String Diagrams} 
\titlerunning{Quantitative Monoidal Algebra} 
\author{Gabriele Lobbia}{Department of Computer Science,  Universit\`a di Bologna, Bologna, Italy}{}{https://orcid.org/0000-0002-2732-0317}{}
\author{Wojciech R\'{o}\.{z}owski}{Department of Computer Science, University College London, London, UK}{}{https://orcid.org/0000-0002-8241-7277}{}
\author{Ralph Sarkis}{Department of Computer Science, University College London, London, UK}{}{https://orcid.org/0000-0002-9037-2435}{}
\author{Fabio Zanasi}{Department of Computer Science, University College London, London, UK}{}{https://orcid.org/0000-0001-6457-1345}{}
\authorrunning{G. Lobbia, W. R\'{o}\.{z}owski, R. Sarkis, and F. Zanasi} 
\keywords{string diagram, symmetric monoidal category, quantitative algebraic theory, quantale, metric} 
\begin{document}

\maketitle

\begin{abstract}
String diagrammatic calculi have become increasingly popular in fields such as quantum theory, circuit theory, probabilistic programming, and machine learning, where they enable resource-sensitive and compositional algebraic analysis. Traditionally, the equations of diagrammatic calculi only axiomatise exact semantic equality. However, reasoning in these domains often involves approximations rather than strict equivalences.

In this work, we develop a quantitative framework for diagrammatic calculi, where one may axiomatise notions of distance between string diagrams. Unlike similar approaches, such as the quantitative theories introduced by Mardare et al., this requires us to work in a monoidal rather than a cartesian setting. We define a suitable notion of monoidal theory, the syntactic category it freely generates, and its models, where the concept of distance is established via enrichment over a quantale. To illustrate the framework, we provide examples from probabilistic and linear systems analysis.
\end{abstract}
\section{Introduction}

Traditionally, formal semantics models equivalence between programs as equality of their interpretation in a certain mathematical domain. A fundamental question is the one of \emph{axiomatisation}: finding a set of equations between programs that hold precisely when they are semantically equivalent. Such an axiomatisation allows reasoning about semantics purely by syntactic manipulation of programs: this offers a structured, scalable approach to designing protocols (such as refinement and optimisation), automatisation, and formal verification. 

In the last few decades, increasingly prominent paradigms of computation such as quantum theory, probabilistic programming, and deep learning have challenged formal semantics, as they demand reasoning about systems that are partially defined, approximate, or sensitive to perturbations. This has led to a resource-sensitive enhancement of the above picture, along two axes: quantitative semantics and monoidal syntax (string diagrams). In a sense, our work is about reconciling both perspectives. 

\textbf{Quantitative Semantics.} The limitations of `exact' semantics are particularly evident when modelling probabilistic computation~\cite{Panangaden2009,vBW2001,DBLP:conf/concur/BreugelW01,Larsen2011}: rather than asking if probabilistic programs $P$ and $Q$ yield the same outputs with the same probabilities, it is more informative to ask \emph{how far} $P$s behaviour is from $Q$s behaviour, according to a certain metric. Similar considerations apply to other research areas, such as differential privacy~\cite{Dwork06}, and approximate computing~\cite{Mittal2016}. To address this form of analysis, program semantics has embraced quantitative reasoning, leading to advancements in areas like bisimulation metrics~\cite{Desharnais99,vBWorrell2005}, coeffectful computation~\cite{Orchard2019}, program distances~\cite{Crubille15,DalLago2022b}, and quantitative rewriting~\cite{Gavazzo2023}. The focus of our contribution is on \emph{axiomatising} quantitative semantics, for which \emph{quantitative algebraic theories} are particularly relevant. In this line of work, initiated by Mardare et al.~\cite{Mardare2016}, equations of the form $s = t$ are replaced with judgments of the form $s =_{\varepsilon} t$, which should be read as: ``$s$ is at distance at most $\varepsilon$ from $t$''. Among their examples, the authors show complete axiomatisations of the total variation and Kantorovich--Wasserstein distances~\cite{Villani2009} between probability distributions. Quantitative algebraic theories have been developed extensively, including a variety theorem~\cite{Mardare2017}, sum and tensor of theories~\cite{Bacci2021}, higher-order extensions~\cite{DalLago2022}, and the development of significant examples such as Markov processes~\cite{Bacci2018a}. They were also generalised to the setting of categorical algebra in various ways~\cite{MiliusU19,Rosicky2023,Rosicky2024,Jurka2024,Adamek2022}.

\textbf{Cartesian vs.\ Monoidal Syntax.} A fundamental feature of the aforementioned approaches is that the syntax of programs (or, more generally, computational processes) is represented by terms of a \emph{cartesian} algebraic theory. The terminology is due to the usual categorical perspective on abstract algebra, initiated by Lawvere~\cite{Lawvere1963}: the `syntactic' category freely generated by an algebraic theory $(\Sigma,E)$ is a \emph{cartesian} category, and models are functors preserving the cartesian structure. Whereas abstractly being cartesian just means to have finite products, via Fox's theorem~\cite{Fox1976} this is equivalent to each object $X$ of the category having a `copy' and a `discard' map. If we interpret these objects as variables of our programs, or resources of our systems, the assumption of cartesianity means that these entities may be duplicated or eliminated at will. In other words, the theory is \emph{insensitive} to such resources.

These assumptions are unsuitable in many contexts. A notable example is quantum theory, with its `no-cloning' and `no-deleting' theorems~\cite{Wootters1982Single}. Also probabilistic computation is inherently non-cartesian: duplicating the outcome of a die roll is not the same as rolling that die twice. There are many more instances in computer science where algebraic modelling needs to be attentive to resource consumption, e.g.~in concurrency theory~\cite{ABRAMSKY200637} and cryptography~\cite{BroadbentK23}.  
	
These examples motivated the development of \emph{monoidal} algebra. Processes are studied in (symmetric) monoidal categories, which allow for algebraic reasoning but do not assume a cartesian structure, meaning consumption of resources (variables) becomes explicit in the theory. Because the fundamental operations of a monoidal category are \emph{sequential} and \emph{parallel} composition, process syntax is depicted two-dimensionally, as \emph{string diagrams}~\cite{Selinger_2010,PiedeleuZanasi2025}. The pictorial representation is not just aesthetically pleasing, but allows for a clearer understanding of how information flows and is exchanged within the process components. For these reasons, string diagrams have been applied in quantum theory~\cite{Coecke2017}, concurrency~\cite{Bonchi2019b}, probabilistic programming~\cite{Piedeleu2025b}, machine learning~\cite{Cruttwell2022,Wilson2022}, cyber-physical systems~\cite{Bonchi2015,Capucci2022}, and even areas further removed from computer science such as linguistics~\cite{Coecke2010,lambeq}, epidemiology~\cite{Libkind2022,Baez2023}, and chemistry~\cite{Lobski2023,Lobski2023b}. When it comes to \emph{axiomatising} semantics of string diagrammatic calculi, tools analogous to those of (non-quantitative) cartesian algebra are available, such as a notion of freely generated `syntactic' category~\cite{HylandPowerSketches,Baez2018,Bonchi2018} and of model~\cite{Bonchi2018}. 
	
\textbf{Towards Quantitative Monoidal Algebra.} Similarly to cartesian algebra, it has become apparent that monoidal algebra urges for a quantitative extension. There is an increasing body of work developing the theory of probabilistic processes and Bayesian reasoning in symmetric monoidal categories called Markov categories, see e.g.~\cite{Fritz2020,JacobsKZ21,LorenzTull-causalmodels}. Similarly, several categorical models for machine learning algorithms are being proposed, in which string diagrams play a major role, see e.g. the surveys~\cite{shiebler2021categorytheorymachinelearning,crescenzi2024categoricalfoundationdeeplearning}. However, quantitative analysis has received very limited attention so far---one such example is~\cite{Perrone2024}, which studies notions of mutual information in Markov categories via relative entropy of string diagrams. In quantum theory, the works~\cite{kissinger2017pictureperfectquantumkeydistribution,Breiner2019} use distances between string diagrams to express noise tolerance in quantum protocols, and \cite{HLarsen2021} studies distances between (quantum) channels represented within a monoidal theory. What all these instances are missing is an \emph{axiomatic framework} to reason about distance of string diagrams, playing a role analogous to the one served by quantitative algebra for cartesian computation.
	
\textbf{Our Contribution.} In this work, we lay the mathematical foundations of quantitative monoidal algebra. To capture a wider range of models, we develop our framework not just for real-valued metric spaces, but for the more general notion of spaces with distances valued in a quantale $\qV$~\cite{Lawvere73}. Examples include preorders, pseudometric spaces, ultrametric spaces, etc. We introduce the notion of (symmetric) $\qV$-quantitative monoidal theory $\qU$ as a triple $(\Sigma,E,E_q)$, where $(\Sigma,E)$ is a monoidal theory and $E_q$ is a set of $\qV$-quantitative equations, for which we use the same notation $=_{\varepsilon}$ introduced in~\cite{Mardare2016}. We present the construction of the freely generated syntactic category over $\qU$, as an enriched monoidal category $\sync{\qU}$. Morphisms of $\sync{\qU}$ are depicted as string diagrams, composable sequentially and in parallel. Distances between string digrams, induced by $E_q$, are modelled in the enrichment of $\sync{\qU}$. Being a `variable-free' approach, the interaction between the enrichment and the generating rules of string diagrams in the syntactic category poses additional challenges compared to the cartesian setting. 
The final piece of our foundations is a suitable notion of model, which is defined à la Lawvere, as enriched functors from the syntactic category to `semantic' categories. We are then able to conclude with an analogue of the completeness theorem of equational logic, for the rules of quantitative diagrammatic reasoning. Our last contribution is a more in-depth comparison with related work~\cite{Mardare2016,Rosicky2024,Power2005,Sarkis2024,Bonchi2018}, which clarifies the relationship between cartesian approaches and our monoidal framework.
	
We provide two basic examples for our framework, related to linear and probabilistic computation respectively. The first is an axiomatisation for matrices over an ordered semiring with entrywise ordering (\cref{sec:preoder-matrix}). The theory of matrices appears ubiquitously in monoidal algebra (see e.g.~\cite{Baez2018,Bonchi2019b,Bonchi2015,Coecke2017,zanasi:tel-01218015}), and the order enrichment naturally appears in many such research threads. In particular, matrices on the Boolean semiring $\{0,1\}$ represent relations, and the ordering is set-theoretic inclusion. 		Our second example (\cref{sec:qaxtv}) is an axiomatisation of discrete probabilistic processes with the total variation metric, a distance fundamental in optimisation, learning theory, statistical inference, etc. Such an example may be thought as the `monoidal version' of an analogous result in the cartesian setting~\cite[Section~8]{Mardare2016}. 

	\textbf{Synopsis.} \cref{sec:prelim} provides background on quantales and enriched category theory. \cref{sec:quant-mon-alg} contains our main theoretical contributions. In \cref{sec:smt}, we recall monoidal theories and string diagrams. In \cref{sec:inferencerules} we define quantitative monoidal theories and the construction of the freely generated syntactic categories. We prove the latter are monoidal enriched in \cref{sec:syncatenriched}. In \cref{sec:models}, we define enriched models and give a sufficient condition for a (classical) model to be enriched. \cref{sec:mat,sec:axiomatisetotalvariation} are devoted to the examples outlined above. We compare formally to related work in \cref{sec:related}, and conclude in \cref{sec:conclusion} with future work. The appendix contains complete proofs for our results.

	\section{Preliminaries}
	\label{sec:prelim}

		\textbf{Quantale-Valued Generalised Metric Spaces.}		Following Lawvere~\cite{Lawvere73}, we allow distances to be valued not just in the positive reals, but in any quantale. We now recall quantales, as well as hemimetric and pseudometric spaces.
	
	\begin{definition}
														A \emph{quantale} is a tuple $(\qV,\vle,\vten,k)$, where $(\qV,\vle)$ is a partial order that has all joins and meets (supremums and infimums), i.e.~a complete lattice, $(\qV,\vten,k)$ is a monoid, and $\vten$ is \emph{join-continuous}, that is, $a\vten\vJoin S = \vJoin_{x \in S} a\vten x$ for any $a\in\qV$ and subset $S \subseteq \qV$.

		We write $\vJoin S$ or $\vJoin_{x \in S} x$ for the join of a subset $S \subseteq \qV$. In particular, $\qV$ has a \emph{bottom~($\bot$)} and a \emph{top ($\top$)} element that satisfy $\bot = \vJoin \emptyset \sqsubseteq x \sqsubseteq \vJoin \qV = \top$ for any $x \in \qV$. We call a quantale \emph{integral} if the monoidal unit is the top element of the underlying lattice (i.e.~$k=\top$).
											\end{definition}
	Throughout the paper, we will make the assumption that quantales are integral. This is a common requirement when studying quantales. In \cref{sec:mat}, we use the \emph{Boolean quantale} $\twoQ$ consisting of two elements $\bot\vle\top$ with underlying monoid $(\twoQ,\vmeet,\top)$. In \cref{sec:axiomatisetotalvariation}, we use the \emph{Lawvere quantale} $\zeroinfQ$ where the underlying lattice is the interval $[0,\infty)$ with the reversed order extended with $\infty$ as a bottom element, and the monoid operation is addition.													
	\begin{remark}\label{rem:meet_quantale}
		Let $(\qV, \sqsubseteq)$ be a complete lattice. If meets distribute over infinite joins, that is, for any $x \in \qV$ and family $\{x_i\}_{i \in I}$ in $\qV$,
$x \vmeet {\vJoin_{i \in I} x_i} = \vJoin_{i \in I} \left( x \vmeet x_i \right)$,
		then $\qV$ is called \emph{infinitely join distributive (IJD)} (see e.g.~\cite{Dilworth52,Hoffmann1981}). 
					\end{remark}
	
	\begin{definition}\label{defn:vmets}
		Let $\qV$ be a quantale. A \emph{$\qV$-hemimetric space} $(X, d)$ consists of a set $X$ and a function $d \colon X \times X \to \qV$ satisfying, for all $x,y,z\in X$, $k \vle d(x,x)$ \textbf{(reflexivity)}, and $d(x,y) \vten d(y,z) \vle d(x,z)$ \textbf{(triangle inequality)}.
		We call $(X,d)$ a \emph{$\qV$-pseudometric space} if it additionally satisfies, for all $x,y \in X$, $d(x,y)=d(y,x)$ \textbf{(symmetry)}.
				A function $f\colon X \to Y$ between $\qV$-hemimetric spaces $(X,d_X)$ and $(Y,d_Y)$ is called \emph{nonexpansive} if for all $x,x' \in X$,
		\(d_X(x,x')  \vle d_Y(f(x),f(x')).\)
	\end{definition}

	\begin{example}\label{ex:preorder-as-hemimet}
								In order to make better sense of \cref{defn:vmets}, note that when considering $\qV=\zeroinfQ$ we get back the standard definition of hemimetric and pseudometric spaces with possibly infinite distances.
		Over other quantales, we recover well-known structures. 						Setting $\qV=\twoQ$, $\twoQ$-hemimetrics are preorders, while $\twoQ$-pseudometrics are equivalence relations, and nonexpansive maps are order/relation-preserving functions.

									\end{example}
	
	\begin{definition}
				We denote the category of $\qV$-hemimetric spaces and nonexpansive functions with $\qVCat$, and its full subcategory of $\qV$-pseudometric spaces with $\qVCatsym$.	\end{definition}
	
	Our primary goal is to study categories where morphisms have a distance between them. We will model this extra structure on hom-sets with enriched categories, and this requires us to provide a (symmetric) monoidal product of $\qV$-hemi/pseudometric spaces. 			We consider two monoidal products inspired from well-known products of real-valued metric spaces.

									\begin{example}		\label{ex:quantalic_sum_metric}
		Let $(\qV, \vten, k)$ be a commutative quantale (i.e.~$\vten$ is commutative) and $(X,d_X)$, $(Y,d_Y)$ be two $\qV$-hemimetric spaces. We define the \emph{sum hemimetric} $d_X \boxtimes_{\vten} d_Y$ on the cartesian product $X\times Y$ by $(d_X \boxtimes_{\vten} d_Y)((x,y), (x',y'))\coloneqq d_X(x,x') \vten d_Y(y,y')$. This yields a monoidal product defined by $(X,d_X) \boxtimes_{\vten} (Y,d_Y) \coloneqq (X \times Y, d_X \boxtimes_{\vten} d_Y)$ whose monoidal unit is $1_{\boxtimes} \coloneqq (\{\bullet\}, \top)$, where $\top(\bullet, \bullet) = \top$. 		The symmetries $\sigma_{X,Y} \coloneqq (x,y) \mapsto (y,x)$ are nonexpansive maps $(X,d_X) \boxtimes_{\vten} (Y,d_Y) \rightarrow (Y,d_Y) \boxtimes_{\vten} (X,d_X)$, and they make $\qVCat$ into a symmetric monoidal category. Since $\boxtimes_{\vten}$ preserves symmetry, $\qVCatsym$ is a full symmetric monoidal subcategory of $\qVCat$.
			\end{example}
	\begin{example}		\label{ex:quantalic_max_metric}
		If $(\qV, \oplus, k)$ is a quantale and $(V,\vle)$ is IJD (see \Cref{rem:meet_quantale}), then we can define the \emph{max hemimetric}. Given $(X, d_X), (Y,d_Y) \in \qVCat$, we define $(X,d_X) \boxtimes_{\vmeet} (Y,d_Y) \coloneqq (X \times Y, d_X \boxtimes_{\vmeet} d_Y)$, where $(d_X \boxtimes_{\vmeet} d_Y)((x,y), (x',y'))\coloneqq d_X(x,x') \vmeet d_Y(y,y')$. The monoidal unit is given by $1_{\boxtimes} \coloneqq (\{\bullet\}, \top)$. Once again, the evident symmetries are nonexpansive, and we get another symmetric monoidal structure on $\qVCat$. This definition also restricts to a symmetric monoidal product on $\qVCatsym$.
	\end{example}

		\noindent\textbf{Categories Enriched over \texorpdfstring{$\qVCat$}{VHMet}.} In this paper, distances between morphisms of monoidal categories will be cast in terms of \emph{enriched categories}. 	We are only interested in categories enriched over hemi/pseudometric spaces, so we defer to \cite{KellyBook} for the general details.
	
	We will work with categories enriched in $(\qVCat, \boxtimes, 1_{\boxtimes})$ or $(\qVCatsym, \boxtimes, 1_{\boxtimes})$, where $\boxtimes$ is defined as $\boxtimes_{\vten}$ in \Cref{ex:quantalic_sum_metric} or $\boxtimes_{\vmeet}$ in \Cref{ex:quantalic_max_metric}. To be explicit but concise on which base of enrichment we are considering, we will use the notations $\qVCatsum$, $\qVCatmax$, $\qVCatsymsum$, and $\qVCatsymmax$. Enrichment over these categories boils down to equipping hom-sets with hemi/pseudometrics and requiring a nonexpansiveness property of composition. In other words, any $\qVCat$-enriched category is determined by an underlying category $\C$, where every hom-set $\C(a,b)$ has a $\qV$-hemimetric space structure $(\C(a,b), d_{a,b})$ such that for all $f,f' \in \C(a,b)$ and $g,g' \in \C(b,c)$, we have that $(d_{b,c} \boxtimes d_{a,b})((g,g'),(f,f')) \vle d_{a,c}(g \circ f, g' \circ f')$.

	Furthermore, any $\qVCat$-functor $F\colon \C \rightarrow \D$ is determined by a functor between the underlying categories, which is locally nonexpansive, in the sense that the assignment $f \mapsto Ff$ is a nonexpansive map $\C(a,b) \rightarrow \D(Fa,Fb)$ for all $a,b \in \Ob(\C)$. Similarly, an enriched isomorphism $F\colon \C \rightarrow \D$ is an isomorphism between the underlying categories, which in addition is locally an isometry, namely, the assignment $f \mapsto Ff$ is an isometry $\C(a,b) \rightarrow \D(Fa,Fb)$ for all $a,b \in \Ob(\C)$. We carefully develop these claims in \cref{app:proofs-prelim}, and we note that they can also be applied to enrichment over $\qVCatsym$. 	
	
	This concrete characterisation of enrichment allows us to give a convenient definition of enriched monoidal categories for our purposes. It instantiates the more general definition that appears in e.g. \cite[Definition~2.1]{Morrison2017}, \cite[Definition~4.1]{KongLiang2024}. Morally, we define enriched monoidal categories to be monoidal ($\Set$-)categories equipped with hemi/pseudometrics on their hom-sets such that both composition and monoidal product are nonexpansive.	\begin{definition}\label{def:enrichedmoncat}
		A \emph{$\qVCat$-enriched symmetric monoidal category} $\C$ is a category that is both symmetric monoidal and $\qVCat$-enriched, and such that the bifunctor $\otimes \colon \C \times \C \to \C$ is a $\qVCat$-functor. 		It is called \emph{strict} if the underlying monoidal category is strict. A \emph{$\qVCat$-enriched symmetric strict monoidal functor} $F\colon\C\to\D$ is a strict monoidal functor between the underlying monoidal categories which is also $\qVCat$-enriched (as a functor).
	\end{definition}
	Unrolling this definition according to our discussion above, a $\qVCat$-enriched symmetric strict monoidal category (SMC) is just a $\qVCat$-category whose underlying category $\C$ is equipped with a symmetric strict monoidal product $\otimes \colon \C \times \C \to \C$ that is nonexpansive with respect to $\boxtimes$, in the sense that for all $a,b,c,d \in \Ob(\C)$, $f,f' \in \C(a,b)$, $g,g' \in\C(c,d)$, we have $(d_{a,b} \boxtimes d_{c,d})((f,f'),(g,g'))\sqsubseteq d_{a \otimes c, b \otimes d}(f \otimes g, f'\otimes g')$. All the above also applies to $\qVCatsym$.

	\section{Quantitative Monoidal Algebra}
	\label{sec:quant-mon-alg}
	
	In this section we fix a commutative integral quantale $\qV$ and introduce the notion of $\qV$-quantitative symmetric monoidal theory.  We recall ordinary monoidal theories first, following \cite{Bonchi2018}. We omit the adjective `symmetric' for monoidal theories, as it will be assumed.

	\subsection{Background: Monoidal Theories}\label{sec:smt}

	\begin{definition}		A \emph{monoidal signature} $\Sigma$ is a set of \emph{generators}, each with an arity $n \in\bN$ and a coarity $m \in \bN$, which we often indicate simply with a type $n \to m$. In preparation to representing $\Sigma$-terms using string diagrams, we adopt a graphical representation for generators, as boxes with dangling wires on the left and the right to indicate arity and coarity. For instance, $\raisebox{0.3ex}{\scalebox{.45}{\tikzfig{ex-operation}}}$ has arity $2$ and coarity $3$.  We write $\raisebox{0.3ex}{\scalebox{.7}{\tikzfig{generator-nm}}}$ for a generic generator with arity $n$ and coarity $m$. We use $\raisebox{0.3ex}{\scalebox{.7}{\tikzfig{generator-notype}}}$ when the type is irrelevant or clear from context.
		
		The set of \emph{$\Sigma$-terms} (and their (co)arities) is defined inductively as follows:
		\begin{itemize}[nosep]
			\item all generators $\raisebox{0.3ex}{\scalebox{0.7}{\tikzfig{generator-nm}}} \in\Sigma$, $\raisebox{0.3ex}{\scalebox{0.7}{\tikzfig{id}}} \colon 1\to 1$, $\raisebox{0.3ex}{\scalebox{0.7}{\tikzfig{empty-diag}}} \colon 0 \rightarrow 0$, and $\raisebox{0.3ex}{\scalebox{.6}{\tikzfig{sym}}}\colon 2\to 2$ are $\Sigma$-terms;
			\item if $s\colon n\to m$ and $t\colon m\to \ell$ are $\Sigma$-terms, then $s ; t \colon n\to \ell$ is a $\Sigma$-term;
			\item if $t \colon n\to n'$ and $s\colon m\to m'$ are $\Sigma$-terms, then $t\tensor s\colon n+m\to n'+m'$ is a $\Sigma$-term. 
		\end{itemize}
		
		A \emph{monoidal theory} $(\Sigma,E)$ consists of a signature $\Sigma$ and a set $E$ of pairs $(s,t)$ of $\Sigma$-terms of the same type, which we call \emph{equations} and write $s=t$.
	\end{definition}
	We represent $\Sigma$-terms graphically using the same conventions introduced for the generators. Given $\Sigma$-terms $s \colon n \to m$ and $t \colon m \to \ell$, we write $s ; t$ as $\raisebox{0.3ex}{\scalebox{0.7}{\tikzfig{horizontal-comp}}}$. Similarly, given $t \colon n\to n'$ and $s\colon m\to m'$, we write $t \tensor s$ as $\raisebox{0ex}{\scalebox{0.7}{\tikzfig{vertical-comp}}}$. 
	Arbitrary identities $\scalebox{.7}{\tikzfig{idn}} \colon n \to n$ and symmetries $\raisebox{0ex}{\scalebox{.7}{\tikzfig{symmn}}} \colon m+n \to n+m$ may be defined as $\Sigma$-terms, by pasting together in the expected way copies of the `basic' identity $\scalebox{.6}{\tikzfig{id}}$ and symmetry $\scalebox{.6}{\tikzfig{sym}}$. 
	
	When organised into a category (\cref{def:freesmc} below), operations $;$ and $\tensor$ become associative and obey the so-called `exchange law', meaning we can paste together diagrams without worrying about priority of application. $\Sigma$-terms modulo the axioms of symmetric strict monoidal categories (drawn in \Cref{App:SMC}) are called \emph{string diagrams}, see e.g.~\cite{Selinger_2010,PiedeleuZanasi2025}.
		
	\begin{definition}\label{def:freesmc}
		The symmetric strict monoidal category (SMC) $\syncat{\Sigma}{E}$ freely generated by $(\Sigma,E)$, called the \emph{syntactic category}, is defined as follows. Its objects are natural numbers. A morphism $n \rightarrow m$ is a $\Sigma$-term of arity $n$ and coarity $m$ modulo the equations in $E$ and the axioms of SMCs. Formally, two $\Sigma$-terms $s$ and $t$ are equal in $\syncat{\Sigma}{E}$ if and only if they are in the same equivalence class of the smallest congruence (with respect to $;$ and $\tensor$) that contains the pairs in $E$ and the axioms of SMCs. Monoidal product on objects is given by addition. Regarding morphisms, composition, monoidal product, identities, and symmetries are defined by their counterparts on $\Sigma$-terms.
	\end{definition}

	\subsection{Quantitative Monoidal Theories}\label{sec:inferencerules}
	
	In the envisioned applications of our work (and in \cref{sec:mat,sec:axiomatisetotalvariation}), the string diagrams represent processes for which equality is too coarse a relation to be meaningful. 	To achieve a finer comparison, we reuse a central idea in \cite{Mardare2016}, that is to replace the `exact' equality relation with equality `up to' some quantity $\varepsilon$ in $\qV$. This new relation is denoted with $=_{\varepsilon}$, and  $s =_{\varepsilon} t$ means that the processes represented by $s$ and $t$ are at distance at most $\varepsilon$.
		A quantitative monoidal theory is a monoidal theory with additional axioms of this shape.	
	\begin{definition}		\label{def:quantitative-monoidal-theory}
	A \emph{$\qV$-quantitative (symmetric) monoidal theory} is a triple $(\Sigma,E,E_q)$, where $(\Sigma,E)$ is a monoidal theory, and $E_q$ is a set of triples $(s,t,\varepsilon)$ comprising two $\Sigma$-terms $s$ and $t$, and an element $\varepsilon \in V$, which we call \emph{quantitative equations} and denote with $s=_{\varepsilon} t$.
			\end{definition}
	We want to construct a syntactic category associated to a $\qV$-quantitative monoidal theory $(\Sigma,E,E_q)$. It must have extra structure describing distance between morphisms, thus it will be $\qVCat$-enriched or $\qVCatsym$-enriched. To this aim, we will start with $\syncat{\Sigma}{E}$, the SMC freely generated by the underlying monoidal theory, and define a $\qV$-hemimetric (or pseudometric) on all the hom-sets of $\syncat{\Sigma}{E}$ making sequential and parallel compositions nonexpansive. 	
		
	Just like equality between $\Sigma$-terms in $\syncat{\Sigma}{E}$ was inferred from the equations in $E$ and the axioms of SMCs, the distance between $\Sigma$-terms will be inferred from the quantitative equations in $E_q$ and the axioms of enriched SMCs. This process is more involved than building the smallest congruence, but similar in spirit. It is also inspired from quantitative equational logic in \cite{Mardare2016}. Note the infinitary \RuleJoin\ rule that mirrors the \textbf{Arch} rule of \cite{Mardare2016}.
				
	\begin{definition}
		\label{def:qnt-closure}
		Let $(\Sigma,E,E_q)$ be a $\qV$-quantitative monoidal theory and $\syncat{\Sigma}{E}$ be the SMC generated by $(\Sigma,E)$. 
		We define the $\qVCatsum$-closure of $E_q$, denoted $\Hclosesumsum{E_q}$, as the smallest set of quantitative equations containing $E_q$ and closed under the following inference rules.
		\def\defaultHypSeparation{\hskip0in}
		\def\labelSpacing{1pt}
		\begin{itemize}[nosep]
			\item For any $\Sigma$-terms $f,g,h\colon n \rightarrow m$, and $\varepsilon,\varepsilon' \in \qV$, we have the following rules. They ensure that the distances defined later in \cref{lem:loc-hem-space} are $\qV$-hemimetrics on the hom-sets of $\syncat{\Sigma}{E}$.
			{\small
			\begin{gather*}
				\begin{bprooftree}
					\AxiomC{\scalebox{0.85}{$\tikzfig{diagf} = \tikzfig{diagg}$} { \scriptsize is provable from $E$}}
					\RightLabel{\hypertarget{rulerefl}{$\textnormal{\textsc{Refl}}$}}
					\UnaryInfC{\scalebox{0.85}{$\tikzfig{diagf} =_{\top} \tikzfig{diagg}$}}
				\end{bprooftree} \begin{bprooftree}
					\AxiomC{\scalebox{0.85}{$\tikzfig{diagf} =_{\varepsilon} \tikzfig{diagg}$}}
					\AxiomC{\scalebox{0.85}{$\tikzfig{diagg} =_{\varepsilon'} \tikzfig{diagh}$}}
					\RightLabel{\hypertarget{ruletriang}{$\textnormal{\textsc{Triang}}$}}
					\BinaryInfC{\scalebox{0.85}{$\tikzfig{diagf} =_{\varepsilon \vten \varepsilon'}\tikzfig{diagh}$}}
				\end{bprooftree}
						\\ 
					\begin{bprooftree}
			\AxiomC{$\phantom{E}$}
			\RightLabel{\hypertarget{rulebot}{$\textnormal{\textsc{Bot}}$}}
			\UnaryInfC{\scalebox{0.85}{$\tikzfig{diagf} =_{\bot} \tikzfig{diagg}$}}
		\end{bprooftree}
				\begin{bprooftree}
					\AxiomC{\scalebox{0.85}{$\tikzfig{diagf} =_{\varepsilon} \tikzfig{diagg}\ \varepsilon' \vle \varepsilon$}}
										\RightLabel{\hypertarget{ruletmon}{$\textnormal{\textsc{Mon}}$}}
					\UnaryInfC{\scalebox{0.85}{$\tikzfig{diagf} =_{\varepsilon'}\tikzfig{diagg}$}}
				\end{bprooftree}\ 
				\begin{bprooftree}
					\AxiomC{\scalebox{0.85}{$\tikzfig{diagf} =_{\varepsilon_i} \tikzfig{diagg}\ \forall i \in I$}}
										\RightLabel{\hypertarget{rulejoin}{$\textnormal{\textsc{Join}}$}}
					\UnaryInfC{\scalebox{0.85}{$\tikzfig{diagf} =_{\vjoin_{i}\varepsilon_i}\tikzfig{diagg}$}}
				\end{bprooftree}
			\end{gather*}
			}
			\item For any two pairs of composable $\Sigma$-terms $(f_0,g_0)$ and $(f_1,g_1)$ and any $\varepsilon,\varepsilon' \in \qV$, the rule \RuleSeqsum\ ensures that the syntactic category is $\qVCatsum$-enriched.																					\item For any two pairs of $\Sigma$-terms $(f_0,f_1)$ and $(g_0,g_1)$ with matching arities and any $\varepsilon,\varepsilon' \in \qV$, the rule \RuleTenssum\ ensures that the syntactic category is monoidal $\qVCatsum$-enriched.			{\small	
			\[\begin{bprooftree}
				\AxiomC{\scalebox{0.85}{$\tikzfig{diagf0} =_{\varepsilon} \tikzfig{diagf1}$}}
				\AxiomC{\scalebox{0.85}{$\tikzfig{diagg0} =_{\varepsilon'} \tikzfig{diagg1}$}}
				\RightLabel{\hypertarget{ruleseqsum}{$\textnormal{\textsc{Seq}}_{\vten}$}}
				\BinaryInfC{\scalebox{0.85}{$\tikzfig{seqletters} =_{\varepsilon\vten \varepsilon'} \tikzfig{seqletterstwo}$}}
			\end{bprooftree} \quad \begin{bprooftree}
				\AxiomC{\scalebox{0.85}{$\tikzfig{diagf0} =_{\varepsilon} \tikzfig{diagf1}$}}
				\AxiomC{\scalebox{0.85}{$\tikzfig{diagg0} =_{\varepsilon'} \tikzfig{diagg1}$}}
				\RightLabel{\hypertarget{ruletenssum}{$\textnormal{\textsc{Par}}_{\vten}$}}
				\BinaryInfC{\scalebox{0.85}{$\tikzfig{parletters} =_{\varepsilon\vten\varepsilon'} \tikzfig{parletterstwo}$}}
			\end{bprooftree}\]
			}
		\end{itemize}

	\end{definition}
	\begin{remark}
		If $(\qV,\vle)$ is IJD (\cref{rem:meet_quantale}), then we can consider enriching $\syncat{\Sigma}{E}$ over $\qVCat$ with the monoidal product $\boxtimes_{\vmeet}$ from \cref{ex:quantalic_max_metric}. This requires defining a different closure of $E_q$, that we denote with $\Hclosemaxmax{E_q}$. It is the smallest set of quantitative equations containing $E_q$ and closed under the inference rules above, but \RuleSeqsum\ and \RuleTenssum\ are replaced by \RuleSeqmeet\ and \RuleTensmeet\ below. These ensure that the syntactic category is $\qVCatmax$-enriched monoidal.		\def\defaultHypSeparation{\hskip0in}
		\def\labelSpacing{1pt}
		{\small
		\begin{gather*}			
			\begin{bprooftree}
				\AxiomC{\scalebox{0.85}{$\tikzfig{diagf0} =_{\varepsilon} \tikzfig{diagf1}$}}
				\AxiomC{\scalebox{0.85}{$\tikzfig{diagg0} =_{\varepsilon'} \tikzfig{diagg1}$}}
				\RightLabel{\hypertarget{ruleseqmeet}{$\textnormal{\textsc{Seq}}_{\vmeet}$}}
				\BinaryInfC{\scalebox{0.85}{$\tikzfig{seqletters} =_{\varepsilon\vmeet \varepsilon'} \tikzfig{seqletterstwo}$}}
			\end{bprooftree}\ 
			\begin{bprooftree}
				\AxiomC{\scalebox{0.85}{$\tikzfig{diagf0} =_{\varepsilon} \tikzfig{diagf1}$}}
				\AxiomC{\scalebox{0.85}{$\tikzfig{diagg0} =_{\varepsilon'} \tikzfig{diagg1}$}}
				\RightLabel{\hypertarget{ruletensmeet}{$\textnormal{\textsc{Par}}_{\vmeet}$}}
				\BinaryInfC{\scalebox{0.85}{$\tikzfig{parletters} =_{\varepsilon\vmeet\varepsilon'} \tikzfig{parletterstwo}$}}
			\end{bprooftree}
		\end{gather*}
		}
		\def\defaultHypSeparation{\hskip0.2in}\def\labelSpacing{3pt}
	\end{remark}

		\begin{remark}\label{rem:mixed_rules}
		There are two additional possible closures of $E_q$ which we can consider, motivated by the example in \cref{sec:axiomatisetotalvariation}. Let $\Hclosemaxsum{E_q}$ be the closure of $E_q$ under the same inference rules as $\Hclosesumsum{E_q}$ except \RuleSeqsum\ is replaced by \RuleSeqmeet. Similarly, let $\Hclosesummax{E_q}$ be the closure of $E_q$ under the same inference rules as $\Hclosesumsum{E_q}$ except \RuleTenssum\ is replaced by \RuleTensmeet.

				Since $\qV$ is integral, $\forall a,b \in \qV, a \oplus b \sqsubseteq a \sqcap b$, so we can infer that the rules \RuleSeqmeet\ and \RuleTensmeet\ are tighter than \RuleSeqsum\ and \RuleTenssum\ (respectively). Namely, any quantitative equation in $\Hclosesumsum{E_q}$ belongs to $\Hclosemaxsum{E_q}$ and $\Hclosesummax{E_q}$, and any quantitative equation in $\Hclosemaxsum{E_q}$ or $\Hclosesummax{E_q}$ also belongs to $\Hclosemaxmax{E_q}$. Consequently, $\Hclosesummax{E_q}$ can be used to enrich $\syncat{\Sigma}{E}$ over $\qVCatsum$,  
														where the monoidal product satisfies an additional nonexpansiveness property. 		We make use of this in \cref{sec:totalvarenrichment}. In the case of $\Hclosemaxsum{E_q}$, $\syncat{\Sigma}{E}$ will be enriched over $\qVCatmax$, but monoidal enriched only over $\qVCatsum$.

	\end{remark}
	
	\begin{remark}
		We can also enforce the distance between morphisms to satisfy the symmetry property, so that $\syncat{\Sigma}{E}$ will be enriched over $\qVCatsym$. It suffices to add the following rule.
		{\small
		\[\begin{bprooftree}
			\AxiomC{\scalebox{0.85}{$\tikzfig{diagf} =_{\varepsilon} \tikzfig{diagg}$}}
			\RightLabel{\hypertarget{rulesym}{\textsc{Symm}}}
			\UnaryInfC{\scalebox{0.85}{$\tikzfig{diagg}=_{\varepsilon} \tikzfig{diagf}$}}
		\end{bprooftree}\]}We write $\Pclosesumsum{E_q}$, $\Pclosemaxsum{E_q}$, $\Pclosesummax{E_q}$ and $\Pclosemaxmax{E_q}$ the corresponding closures of $E_q$ after adding \RuleSym.		
					\end{remark}
		There is a total of eight possible closures depending on the choice of inference rules considered. They are summarised in \cref{tab:closures-quant-eq}. Our logic offers this flexibility to allow axiomatisation of various examples. We use the closures $\Hclosesumsum{E_q}$ and $\Pclosesummax{E_q}$ respectively in \cref{sec:preoder-matrix,sec:qaxtv}.

	\subsection{Enrichment of the Syntactic Category}\label{sec:syncatenriched}

	In this section, we will show how to use the inference rules in \Cref{def:qnt-closure} (or more precisely the different closures they induce) to define enrichments of the syntactic category $\syncat{\Sigma}{E}$. We work with a generic closure $\anyclose{E_q}$ that can be instantiated with any closure in \cref{tab:closures-quant-eq}. We explicitly mention what inference rules are needed to show each item. 	
	First, we equip each hom-set of $\syncat{\Sigma}{E}$ with a $\qV$-hemimetric, essentially mirroring the definition of $d_{\qU}$ in \cite[Section~5]{Mardare2016}. Note that \RuleRefl, \RuleTriang\ and \RuleSym\ correspond, respectively, to reflexivity, triangle inequality, and symmetry for $d^\qU_{n,m}$.
	
	\begin{lemma}
		\label{lem:loc-hem-space}
		Let $\qU=(\Sigma,E,E_q)$ be a $\qV$-quantitative monoidal theory. For any $n,m \in \bN$ and $\Sigma$-terms $f,g\colon n \rightarrow m$, let $ d^{\qU}_{n,m}(f,g) \coloneqq \vJoin \{ \varepsilon \mid f =_{\varepsilon} g \in\anyclose{E_q} \}$. 
										This defines a $\qV$-hemimetric on $\syncat{\Sigma}{E}(n,m)$, which is a $\qV$-pseudometric if \RuleSym\ was used in the closure $\anyclose{E_q}$.					\end{lemma}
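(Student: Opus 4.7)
The plan is to verify the three defining properties of a $\qV$-hemimetric (reflexivity and triangle inequality, plus symmetry in the pseudometric case) directly from the inference rules used to form $\anyclose{E_q}$, after first checking that $d^{\qU}_{n,m}$ is well-defined on hom-sets rather than merely on $\Sigma$-terms.

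\textbf{Well-definedness.} Since morphisms of $\syncat{\Sigma}{E}(n,m)$ are equivalence classes of $\Sigma$-terms modulo the axioms of SMCs and $E$, I first show that if $f = f'$ and $g = g'$ are provable from the axioms of SMCs and $E$, then $\{\varepsilon \mid f =_{\varepsilon} g \in \anyclose{E_q}\} = \{\varepsilon \mid f' =_{\varepsilon} g' \in \anyclose{E_q}\}$. The key is to apply \RuleRefl\ to obtain $f =_{\top} f'$ and $g =_{\top} g'$ in $\anyclose{E_q}$, and then use \RuleTriang\ twice together with the integrality assumption $k = \top$ (so $\top \vten \varepsilon \vten \top = \varepsilon$) to transport any witness $\varepsilon$ from one pair to the other.

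\textbf{Reflexivity and symmetry.} Reflexivity is immediate: since $f = f$ is trivially provable from $E$, \RuleRefl\ gives $f =_{\top} f \in \anyclose{E_q}$, hence $\top \vle d^{\qU}_{n,m}(f,f)$. Integrality ($k = \top$) then yields $k \vle d^{\qU}_{n,m}(f,f)$. For symmetry, if \RuleSym\ is among the rules defining $\anyclose{E_q}$, then $f =_{\varepsilon} g \in \anyclose{E_q} \iff g =_{\varepsilon} f \in \anyclose{E_q}$, so the two sets over which the join is taken coincide.

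\textbf{Triangle inequality.} This is the main technical step. I expand the left-hand side using join-continuity of $\vten$, which holds in both arguments because $\qV$ is commutative:
\[
d^{\qU}_{n,m}(f,g) \vten d^{\qU}_{n,m}(g,h)
= \Bigl(\vJoin_{\varepsilon : f =_{\varepsilon} g} \varepsilon\Bigr) \vten \Bigl(\vJoin_{\varepsilon' : g =_{\varepsilon'} h} \varepsilon'\Bigr)
= \vJoin_{\varepsilon,\varepsilon'} (\varepsilon \vten \varepsilon').
\]
For each $(\varepsilon, \varepsilon')$ in the index set, \RuleTriang\ yields $f =_{\varepsilon \vten \varepsilon'} h \in \anyclose{E_q}$, hence $\varepsilon \vten \varepsilon' \vle d^{\qU}_{n,m}(f,h)$. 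Taking the join over all such pairs gives the desired inequality. The main obstacle is ensuring the double application of join-continuity is valid; this reduces to checking join-continuity on both sides of $\vten$, which follows from the quantale axiom together with the standing assumption that $\qV$ is commutative.

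\textbf{Remark on the closure variants.} The argument above uses only \RuleRefl, \RuleTriang, and \RuleSym. Since every closure in \cref{tab:closures-quant-eq} contains these rules, the statement applies uniformly regardless of whether $\vten$ or $\vmeet$ is used in the sequential and parallel composition rules, and \RuleBot, \RuleMon, \RuleJoin\ are not needed for this lemma (though they guarantee that the supremum is itself attained, i.e. $f =_{d^{\qU}_{n,m}(f,g)} g \in \anyclose{E_q}$, a fact that will be useful downstream).
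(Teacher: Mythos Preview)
Your proposal is correct and follows essentially the same approach as the paper: both arguments use \RuleRefl\ for reflexivity, join-continuity of $\vten$ together with \RuleTriang\ for the triangle inequality, and \RuleSym\ for symmetry. The only cosmetic difference is ordering: the paper first proves reflexivity and the triangle inequality on $\Sigma$-terms and then deduces well-definedness from the triangle inequality (showing $d^{\qU}(f',g') \vle d^{\qU}(f,g)$ and conversely), whereas you establish well-definedness first by showing the witness sets themselves coincide via \RuleRefl\ and \RuleTriang\ at the element level; both routes rely on the same rules and on integrality $k=\top$.
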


		Now that $\syncat{\Sigma}{E}$ is equipped with $\qV$-hemimetrics (resp.~pseudometrics) on its hom-sets, we show it is monoidal enriched over $\qVCat$ (resp.~$\qVCatsym$). This relies on two lemmas showing that sequential and parallel composition are nonexpansive.				
				\begin{lemma}
			\label{lem:subst-non-exp}
			Let $f_0, f_1\colon n \rightarrow m$ and $g_0,g_1\colon m \rightarrow \ell$ be $\Sigma$-terms. When $\ast$ is $\vten$ and $\anyclose{E_q}$ is closed under \RuleSeqsum, or when $\ast$ is $\vmeet$ and $\anyclose{E_q}$ is closed under \RuleSeqmeet, the following holds: 			$d^\qU_{n,m}(f_0,f_1)\ast d^\qU_{m,\ell}(g_0,g_1) \vle d^\qU_{n,\ell}( f_0;g_0\,,\,f_1;g_1).$
		\end{lemma}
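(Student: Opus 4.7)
The strategy is to reduce the inequality to a single application of the relevant sequential-composition rule (\RuleSeqsum\ or \RuleSeqmeet), by first observing that the suprema defining the hemimetric $d^\qU$ are always attained. More precisely, by \RuleJoin, the set $S_{f,g} \coloneqq \{ \varepsilon \mid f =_{\varepsilon} g \in \anyclose{E_q} \}$ is closed under arbitrary joins, so $\vJoin S_{f,g} \in S_{f,g}$. Hence for any $\Sigma$-terms $f,g \colon n \to m$ we may write
\[
f =_{d^\qU_{n,m}(f,g)} g \ \in \ \anyclose{E_q}.
\]

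Applying this observation to the two hypotheses of the lemma yields $f_0 =_{\alpha} f_1 \in \anyclose{E_q}$ and $g_0 =_{\beta} g_1 \in \anyclose{E_q}$, where $\alpha \coloneqq d^\qU_{n,m}(f_0,f_1)$ and $\beta \coloneqq d^\qU_{m,\ell}(g_0,g_1)$. Then a single application of \RuleSeqsum\ (when $\ast = \vten$) or \RuleSeqmeet\ (when $\ast = \vmeet$) produces
\[
f_0 ; g_0 \ =_{\alpha \ast \beta} \ f_1 ; g_1 \ \in \ \anyclose{E_q}.
\]
In either case the required rule is available by hypothesis on $\anyclose{E_q}$. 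By the very definition of $d^\qU_{n,\ell}$ as the join of all $\varepsilon''$ with $f_0;g_0 =_{\varepsilon''} f_1;g_1 \in \anyclose{E_q}$, we conclude
\[
d^\qU_{n,m}(f_0,f_1) \ast d^\qU_{m,\ell}(g_0,g_1) \ =\ \alpha \ast \beta \ \vle \ d^\qU_{n,\ell}(f_0;g_0, f_1;g_1),
\]
as required.

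\textbf{Where the subtlety lies.}
There is no real obstacle once one notices that the join in the definition of $d^\qU$ is attained: this is exactly what \RuleJoin\ is designed to enforce, and it turns a potentially infinitary argument (comparing two joins term-by-term, which would require distributivity of $\ast$ over joins) into a finitary one (a single application of the sequential-composition rule). It is worth noting that this sidesteps the need to appeal to join-continuity of $\vten$ in the $\vten$-case, and to the IJD assumption for $\vmeet$ in the $\vmeet$-case; those properties would be needed only if one insisted on proving the inequality by first distributing $\ast$ over the joins defining $\alpha$ and $\beta$. The same pattern of proof will recur for parallel composition in the companion lemma, using \RuleTenssum\ or \RuleTensmeet\ in place of the sequential rules.
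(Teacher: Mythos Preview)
Your proof is correct but proceeds differently from the paper. The paper first distributes $\ast$ over the two joins defining $d^\qU_{n,m}(f_0,f_1)$ and $d^\qU_{m,\ell}(g_0,g_1)$, obtaining
\[
d^\qU_{n,m}(f_0,f_1)\ast d^\qU_{m,\ell}(g_0,g_1)=\vJoin\{\varepsilon\ast\varepsilon'\mid f_0=_{\varepsilon'}f_1,\ g_0=_{\varepsilon}g_1\in\anyclose{E_q}\},
\]
and then bounds each $\varepsilon\ast\varepsilon'$ by $d^\qU_{n,\ell}(f_0;g_0,f_1;g_1)$ via a single instance of \RuleSeqsum\ or \RuleSeqmeet. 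That first step relies on join-continuity of $\vten$ (in the $\vten$-case) or on the IJD hypothesis (in the $\vmeet$-case). Your argument instead uses \RuleJoin\ to observe that the defining supremum of $d^\qU$ is attained, so that $f_0=_{\alpha}f_1$ and $g_0=_{\beta}g_1$ already lie in $\anyclose{E_q}$ with $\alpha=d^\qU_{n,m}(f_0,f_1)$ and $\beta=d^\qU_{m,\ell}(g_0,g_1)$; one application of the sequential rule then finishes. This is more economical: it treats the $\vten$ and $\vmeet$ cases uniformly and, as you note, does not invoke distributivity of $\ast$ over joins. The paper's route is perhaps more in the spirit of the analogous argument for the triangle inequality (where join-continuity is already used), but yours is the shorter proof.
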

						
		\begin{lemma}\label{lem:tens-non-exp}
			Let $f_0,g_0\colon n \rightarrow n'$ and $f_1,g_1\colon m \rightarrow m'$ be $\Sigma$-terms. When $\ast$ is $\vten$ and $\anyclose{E_q}$ is closed under \RuleTenssum, or when $\ast$ is $\vmeet$ and $\anyclose{E_q}$ is closed under \RuleTensmeet, the following holds: 
			$d^\qU_{n,n'}(f_0,f_1)\ast d^\qU_{m,m'}(g_0,g_1)\vle d^\qU_{n+m,n'+m'}(f_0\tensor g_0,f_1\tensor g_1).$
		\end{lemma}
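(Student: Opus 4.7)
The plan is to follow the same template as Lemma \ref{lem:subst-non-exp}, replacing the use of \RuleSeqsum/\RuleSeqmeet with \RuleTenssum/\RuleTensmeet and working with parallel composition instead of sequential composition. The essential content of the statement is that the set of pairs $(\varepsilon,\varepsilon')$ witnessing distances for the two pairs of terms can be combined, via the parallel rules, into witnesses for the distance between their monoidal products; then join-continuity of $\ast$ in $\qV$ does the work of pulling out the supremum.

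Concretely, first I would unfold the definition in \cref{lem:loc-hem-space}, writing
\[
A \coloneqq \{\varepsilon \in \qV \mid f_0 =_{\varepsilon} f_1 \in \anyclose{E_q}\}, \qquad B \coloneqq \{\varepsilon' \in \qV \mid g_0 =_{\varepsilon'} g_1 \in \anyclose{E_q}\},
\]
so that $d^\qU_{n,n'}(f_0,f_1) = \vJoin A$ and $d^\qU_{m,m'}(g_0,g_1) = \vJoin B$. Then, for every pair $(\varepsilon,\varepsilon') \in A \times B$, I would apply \RuleTenssum\ (when $\ast = \vten$) or \RuleTensmeet\ (when $\ast = \vmeet$) to the premises $f_0 =_\varepsilon f_1$ and $g_0 =_{\varepsilon'} g_1$ to conclude that $f_0 \tensor g_0 =_{\varepsilon \ast \varepsilon'} f_1 \tensor g_1$ belongs to $\anyclose{E_q}$. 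By definition of $d^\qU_{n+m,n'+m'}$, this gives $\varepsilon \ast \varepsilon' \vle d^\qU_{n+m,n'+m'}(f_0 \tensor g_0, f_1 \tensor g_1)$ for every such pair.

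Taking the join over $A \times B$ and using commutativity of $\ast$ together with join-continuity, I would then compute
\[
d^\qU_{n,n'}(f_0,f_1) \ast d^\qU_{m,m'}(g_0,g_1) \;=\; (\vJoin A) \ast (\vJoin B) \;=\; \vJoin_{\varepsilon \in A,\, \varepsilon' \in B} (\varepsilon \ast \varepsilon') \;\vle\; d^\qU_{n+m,n'+m'}(f_0 \tensor g_0, f_1 \tensor g_1),
\]
concluding the argument.

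The main subtlety, and the only place the two cases diverge, is justifying the middle equality. For $\ast = \vten$, this is immediate from the quantale axiom that $\vten$ is join-continuous on each side (using commutativity of $\qV$ for the second argument), applied twice. For $\ast = \vmeet$, this is where the IJD hypothesis of \cref{rem:meet_quantale} is essential: without infinite distributivity of meets over joins, one cannot in general swap $\vmeet$ with the two joins, and the inequality could fail. I expect this step to be the most delicate point in the write-up, but it is a direct invocation of the standing assumptions on $\qV$ and poses no genuine obstacle beyond bookkeeping.
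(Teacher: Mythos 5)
Your proposal is correct and follows essentially the same route as the paper: the paper proves the sequential analogue (\cref{lem:subst-non-exp}) by exactly this computation — rewrite the left-hand side as a join of $\varepsilon \ast \varepsilon'$ over pairs of witnesses using join-continuity of $\vten$ (resp.\ the IJD property for $\vmeet$), apply the corresponding \textsc{Par} rule to each pair, and bound by the join defining the distance of the composites — and leaves the tensor case implicit as analogous, which is precisely what you spell out. Your explicit flagging of IJD as the point where the $\vmeet$ case needs more than the quantale axioms matches the paper's standing assumption for the closures involving \RuleTensmeet\ (cf.\ \cref{tab:closures-quant-eq}).
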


														By the characterisation of $\qVCat$ enrichment in \cref{sec:prelim}, we conclude that the category of string diagrams equipped with the distances $d^{\qU}_{n,m}$ is an enriched SMC.
								
		\begin{proposition}\label{def:synt-enr-cat}
			Let $\qU = (\Sigma,E,E_q)$ be a $\qV$-quantitative monoidal theory, and $\anyclose{E_q}$ be one of the closures defined above. Then, $\syncat{\Sigma}{E}$ equipped with the $\qV$-hemimetrics defined in \cref{lem:loc-hem-space} is a $\VV$-enriched SMC, where $\VV$ is the base of enrichment corresponding to $\anyclose{E_q}$ in \cref{tab:closures-quant-eq}. We denote this enriched category with $\sync{\qU}$ to distinguish it from its underlying category $\syncat{\Sigma}{E}$.
		\end{proposition}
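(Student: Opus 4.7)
The proposition collects the work done in the three preceding lemmas and packages it against \cref{def:enrichedmoncat}. The plan is to check in turn (i) that the proposed distances descend to morphisms of $\syncat{\Sigma}{E}$ (not just $\Sigma$-terms), (ii) that they give a $\qV$-hemimetric, upgraded to a pseudometric when \RuleSym\ is present, (iii) that sequential composition is nonexpansive with respect to the product on the hom-spaces dictated by the closure, and (iv) that the monoidal product $\otimes$ is likewise nonexpansive, so that the underlying SMC structure lifts to an enriched SMC.

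First, I would address well-definedness. Recall that morphisms of $\syncat{\Sigma}{E}$ are $\Sigma$-terms quotiented by $E$ and the SMC axioms. If $f$ and $f'$ are $\Sigma$-terms equal in $\syncat{\Sigma}{E}$, then by \RuleRefl\ we have $f =_{\top} f' \in \anyclose{E_q}$, and since $\qV$ is integral ($k = \top$), \RuleTriang\ yields $f =_{\varepsilon} g \in \anyclose{E_q} \iff f' =_{\varepsilon} g \in \anyclose{E_q}$ for every $g$ and $\varepsilon$. Hence the set over which we take the join in \cref{lem:loc-hem-space} depends only on the equivalence classes, so $d^{\qU}_{n,m}$ is well-defined on $\syncat{\Sigma}{E}(n,m) \times \syncat{\Sigma}{E}(n,m)$. \cref{lem:loc-hem-space} then furnishes the $\qV$-hemimetric (and pseudometric, whenever $\anyclose{E_q}$ includes \RuleSym), giving each hom-object of $\syncat{\Sigma}{E}$ the structure of an object of $\qVCat$ (resp.~$\qVCatsym$).

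Next I would verify the two nonexpansiveness conditions required by the concrete characterisation of (monoidal) $\qVCat$-enrichment recalled after \cref{def:enrichedmoncat}. Fix a choice of closure $\anyclose{E_q}$ from \cref{tab:closures-quant-eq} and let $\ast_1 \in \{\vten,\vmeet\}$ be the operation used in \RuleSeqsum/\RuleSeqmeet\ and $\ast_2 \in \{\vten,\vmeet\}$ the operation used in \RuleTenssum/\RuleTensmeet. \cref{lem:subst-non-exp} gives, for composable pairs of $\Sigma$-terms, the inequality $d^{\qU}_{n,m}(f_0,f_1) \ast_1 d^{\qU}_{m,\ell}(g_0,g_1) \vle d^{\qU}_{n,\ell}(f_0;g_0,f_1;g_1)$, which is exactly the statement that sequential composition is a nonexpansive map $\syncat{\Sigma}{E}(m,\ell) \boxtimes_{\ast_1} \syncat{\Sigma}{E}(n,m) \to \syncat{\Sigma}{E}(n,\ell)$. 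This provides the $\qVCat$-enrichment with base of enrichment determined by $\ast_1$. Similarly, \cref{lem:tens-non-exp} gives nonexpansiveness of $\otimes$ with respect to $\boxtimes_{\ast_2}$, i.e.~that the bifunctor $\otimes$ is a $\qVCat$-functor once one checks that its codomain enrichment $\boxtimes_{\ast_2}$ is compatible with the chosen base $\boxtimes_{\ast_1}$; the four allowed pairings $(\ast_1,\ast_2)$ are exactly those appearing in \cref{tab:closures-quant-eq}, which reads off the appropriate base $\VV$ in each case. The symmetric cases follow identically, replacing $\qVCat$ by $\qVCatsym$.

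I do not expect significant obstacles: all the real work has been deposited in \cref{lem:loc-hem-space,lem:subst-non-exp,lem:tens-non-exp}, and the remaining argument is essentially combinatorial bookkeeping. The only subtle step is the well-definedness check above, where integrality of $\qV$ is essential: without $k = \top$, the equation $f =_{\top} f'$ obtained from \RuleRefl\ would not force $d^{\qU}_{n,m}(f,g) = d^{\qU}_{n,m}(f',g)$ via \RuleTriang. Once this is in place, matching each of the eight closures to its base of enrichment using \cref{tab:closures-quant-eq} is routine.
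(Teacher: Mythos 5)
Your overall strategy coincides with the paper's: well-definedness and the hemimetric structure come from \cref{lem:loc-hem-space} (in fact the quotient-invariance argument you give is already part of that lemma's proof, so repeating it here is harmless but redundant), and the enrichment is assembled from \cref{lem:subst-non-exp,lem:tens-non-exp} via the concrete characterisations \cref{lem:nexpcomptoenriched,lem:nexpmontoenrichedmon}.

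However, the one step that is \emph{not} routine bookkeeping is exactly the one you wave through: the ``compatibility'' of $\boxtimes_{\ast_2}$ with the chosen base $\boxtimes_{\ast_1}$ in the mixed closures. For $\Hclosesummax{E_q}$ or $\Pclosesummax{E_q}$ (rules \RuleSeqsum\ and \RuleTensmeet), \cref{lem:tens-non-exp} only gives nonexpansiveness of $\otimes$ with respect to $\boxtimes_{\vmeet}$, whereas \cref{def:enrichedmoncat} requires $\otimes$ to be a $\VV$-functor for the \emph{same} base $\qVCatsum$ used for composition. The paper closes this by invoking \cref{rem:tensor_below_meet}: since the quantale is integral, $a \vten b \vle a \vmeet b$, so $\boxtimes_{\vmeet}$-nonexpansiveness implies $\boxtimes_{\vten}$-nonexpansiveness and $\otimes$ is enriched over $\qVCatsum$. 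Your proposal never performs this conversion (you use integrality only for well-definedness), and appealing to \cref{tab:closures-quant-eq} to ``read off'' the base is circular, since the table records precisely what this proposition is supposed to justify. Note also that the implication only goes one way: $\vten$-nonexpansiveness does \emph{not} give $\vmeet$-nonexpansiveness, which is why your heuristic ``the base is determined by $\ast_1$'' breaks for $\Hclosemaxsum{E_q}$ (\RuleSeqmeet\ with \RuleTenssum): there the monoidal enrichment can only be taken over $\qVCatsum$, with composition's $\vmeet$-bound weakened via the same inequality. Adding this explicit downgrading argument, case by case over the four pairings, is what turns your sketch into the paper's proof.
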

				
																	The inference rules are central to the definition of the syntactic category, and it will be convenient for us to reify them in other categories through the notion of validity.
		\begin{definition}\label{defn:soundness}
			Let $\C$ be an SMC equipped with $\qV$-hemimetrics $(\C(a,b), d^{\C}_{a,b})$ on each of its hom-sets. The rules \RuleRefl, \RuleBot, \RuleTriang, \RuleMon, and \RuleJoin\ are \emph{valid} in $\C$. \RuleSym\ is \emph{valid} if all $d^{\C}_{a,b}$ are $\qV$-pseudometrics. \RuleSeqsum\ is \emph{valid} if $\scalebox{1}{${;}\colon\C(a,b) \boxtimes_{\vten} \C(b,c) \rightarrow \C(a,c)$}$ (sequential composition) is nonexpansive. \RuleTenssum\ is \emph{valid} if ${\tensor}\colon\C(a,a') \boxtimes_{\vten} \C(b,b') \rightarrow \C(a\otimes a',b\otimes b')$ (parallel composition) is nonexpansive. \RuleSeqmeet\ is \emph{valid} if $\scalebox{1}{${;}\colon\C(a,b) \boxtimes_{\vmeet} \C(b,c) \rightarrow \C(a,c)$}$ is nonexpansive. \RuleTensmeet\ is \emph{valid} if $\scalebox{1}{${\tensor}\colon\C(a,a') \boxtimes_{\vmeet} \C(b,b') \rightarrow \C(a\otimes a',b\otimes b')$}$ is nonexpansive.
																													\end{definition}

		\subsection{Models}\label{sec:models}
		Defining syntactic categories allows us to study models as functors à la Lawvere~\cite{Lawvere1963}. Recall that a model of a monoidal theory $(\Sigma,E)$ is a symmetric strict monoidal functor from $\syncat{\Sigma}{E}$ to another SMC $\C$ (\emph{cf.}~\cite{Bonchi2018}). Central to this approach is the fact that because models must preserve the structure used to generate $\syncat{\Sigma}{E}$, they are entirely determined by their action on the generators. Also, one may check that an assignment of the generators from $\Sigma$ into $\C$ extends to a model simply by verifying that the equations in $E$ are satisfied. 		
		Our goal in this section is to define models of quantitative monoidal theories as functors from the syntactic categories, prove that they are determined by their action on generators, and finally give sufficient conditions for when a model of a monoidal theory can be enriched.

					Because there are multiple syntactic categories that can be constructed from a quantitative monoidal theory (depending on the inference rules that are invoked), there are different notions of models. We can unify their definition using the notion of validity (\cref{defn:soundness}).
		
						\begin{definition}
			\label{def:models-quant-theory}
			Let $\qU = (\Sigma,E,E_q)$ be a $\qV$-quantitative monoidal theory, and $\anyclose{E_q}$ be one of the closures listed in \cref{tab:closures-quant-eq}. An $\anyclose{E_q}$-model of $\qU$ is a $\qVCatsum$-enriched SMC $\C$ wherein all the inference rules used to generate $\anyclose{E_q}$ are valid, along with a strict monoidal $\qVCatsum$-functor $M\colon \sync{\qU} \rightarrow \C$, where $\sync{\qU}$ is constructed according to \cref{def:synt-enr-cat}.
																																															\end{definition}

								Independently of the choice of inference rules, the underlying category of $\sync{\qU}$ is always $\syncat{\Sigma}{E}$, where $(\Sigma,E)$ is the underlying monoidal theory of $\qU$. Therefore, any model of $\qU$ is always built on top of a model of $(\Sigma,E)$. The enrichment is merely a property on a strict monoidal functor $M\colon \syncat{\Sigma}{E} \rightarrow \C$. In analogy to how assignments on the generators of $\Sigma$ can be extended to models of $(\Sigma,E)$ when they satisfy $E$, we can give a sufficient condition, in terms of the quantitative equations in $E_q$, for $M$ to be an enriched model.

		\begin{definition}
			\label{def:sat-quant-eq}
			Let $\qU = (\Sigma,E,E_q)$ be a $\qV$-quantitative monoidal theory, $\sync{\qU}$ be constructed according to \cref{def:synt-enr-cat} with a closure $\anyclose{E_q}$, and $\C$ be a $\qVCatsum$-enriched SMC wherein all the inference rules used to generate $\anyclose{E_q}$ are valid.
															A quantitative equation $f =_{\varepsilon} g$ is \emph{true} in a model of $(\Sigma,E)$, $M\colon \syncat{\Sigma}{E} \rightarrow \C$, if $\varepsilon \vle d^{\C}(Mf,Mg)$,
			where $d^{\C}$ is the $\qV$-hemimetric (or pseudometric) on the hom-sets of $\C$.
		\end{definition}
		
		This definition allows us to define models of $(\Sigma,E,E_q)$ from certain models of $(\Sigma,E)$.
								
		\begin{theorem}
			\label{thm:model-to-enriched-model}
									Let $M$ be a model of a monoidal theory $(\Sigma,E)$. If all the quantitative equations in $E_q$ are true in $M$, then it is an $\anyclose{E_q}$-model of $(\Sigma,E,E_q)$. In particular, $M$ is an enriched functor.
		\end{theorem}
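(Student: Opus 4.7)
The plan is to show that the strict monoidal functor $M\colon \syncat{\Sigma}{E}\to\C$ is locally nonexpansive once both sides are equipped with the distances coming from $\anyclose{E_q}$ and $d^{\C}$ respectively. Since $M$ already preserves the symmetric strict monoidal structure, local nonexpansiveness is exactly what is missing to promote $M$ into a strict monoidal $\VV$-functor $\sync{\qU}\to\C$, where $\VV$ is the base of enrichment associated with $\anyclose{E_q}$ in the closure table. Concretely, I need to establish, for every $n,m\in\bN$ and $f,g\colon n\to m$ in $\syncat{\Sigma}{E}$, the inequality $d^{\qU}_{n,m}(f,g)\vle d^{\C}(Mf,Mg)$.

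By the definition $d^{\qU}_{n,m}(f,g)=\vJoin\{\varepsilon\mid f=_\varepsilon g\in\anyclose{E_q}\}$ from \cref{lem:loc-hem-space}, it suffices by the universal property of joins to prove the following claim: for every quantitative equation $f=_\varepsilon g$ belonging to $\anyclose{E_q}$, one has $\varepsilon\vle d^{\C}(Mf,Mg)$. I will prove this by induction on the derivation of $f=_\varepsilon g$ in $\anyclose{E_q}$. The base case is when $f=_\varepsilon g$ is already in $E_q$, and is covered exactly by the hypothesis that every equation in $E_q$ is true in $M$ (\cref{def:sat-quant-eq}). For the inductive step, the key observation is that each inference rule that can be invoked in generating $\anyclose{E_q}$ is assumed to be valid in $\C$ (since $\C$ is an $\anyclose{E_q}$-compatible enriched SMC), and validity is literally the statement that the rule, read as an inequality between distances in $\C$, is sound.

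Concretely I would walk through the rules in \cref{def:qnt-closure}. \RuleBot, \RuleMon, and \RuleJoin\ propagate trivially since $d^{\C}$ takes values in $\qV$. \RuleTriang\ and (when present) \RuleSym\ follow from the triangle inequality, respectively symmetry, of $d^{\C}$, both of which hold by validity in $\C$. \RuleRefl\ requires the extra observation that $M$ is a model of $(\Sigma,E)$, so a proof of $f=g$ from $E$ entails $Mf=Mg$ in $\C$, whence $\top\vle d^{\C}(Mf,Mf)=d^{\C}(Mf,Mg)$ by reflexivity. For \RuleSeqsum\ (or \RuleSeqmeet), the inductive hypotheses give $\varepsilon\vle d^{\C}(Mf_0,Mf_1)$ and $\varepsilon'\vle d^{\C}(Mg_0,Mg_1)$, and since $M$ preserves sequential composition, nonexpansiveness of $;$ in $\C$ with respect to $\boxtimes_{\vten}$ (resp.\ $\boxtimes_{\vmeet}$) yields $\varepsilon\vten\varepsilon'\vle d^{\C}(M(f_0;g_0),M(f_1;g_1))$ (resp.\ with $\vmeet$). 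The argument for \RuleTenssum\ and \RuleTensmeet\ is identical, using that $M$ preserves $\otimes$ and that parallel composition is nonexpansive in $\C$.

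The only subtle point — and the one place where the proof requires attention rather than pure bookkeeping — is ensuring that the hypothesis on $\C$ covers exactly the rules that can appear in the chosen derivation. This is precisely what \cref{def:models-quant-theory} encodes, by requiring validity in $\C$ for the rules used to generate $\anyclose{E_q}$. Since there are eight possible closures in \cref{tab:closures-quant-eq}, one must thread the argument uniformly, but the same inductive skeleton works for all of them: every rule whose validity is assumed in $\C$ contributes one case, and no other rule needs to be addressed. Once local nonexpansiveness is established, $M$ is a strict monoidal $\VV$-functor, hence an $\anyclose{E_q}$-model of $(\Sigma,E,E_q)$, as claimed.
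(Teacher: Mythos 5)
Your proposal is correct and follows essentially the same route as the paper's proof: reduce enrichment to local nonexpansiveness, unfold the join defining $d^{\qU}$, and conclude that all of $\anyclose{E_q}$ is true in $M$ because $E_q$ is true in $M$ and the generating rules are valid in $\C$. The paper states this last step tersely (as closure of the set of true equations under the valid rules) where you spell it out as an induction on derivations, but the argument is the same.
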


	The theorem below is an analogue of the completeness theorem of equational logic adapted to the case of quantitative monoidal reasoning. It intuitively means that the rules of quantitative monoidal reasoning suffice for proving all the quantitative equations that hold generally in the semantics. The proof relies on the canonical model $\id\colon \sync{\qU} \rightarrow \sync{\qU}$.	
	\begin{theorem}\label{thm:completeness}
		Let $\qU=(\Sigma,E,E_q)$ be a $\qV$-quantitative monoidal theory, $\anyclose{E_q}$ be a closure from \cref{tab:closures-quant-eq}, and $f,g \colon n \to m \in \sync{\qU}$. If $f =_{\varepsilon} g$ is true in all $\anyclose{E_q}$-models $M \colon \sync{\qU} \to \C$ of the theory $\qU$, meaning that $\varepsilon \vle d^{\C}(M(f), M(g))$, then $f =_{\varepsilon} g$ is in the closure $\anyclose{E_q}$. 
	\end{theorem}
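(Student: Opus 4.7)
The plan is to carry out the standard completeness argument by exhibiting a canonical syntactic model in which the distance between any two $\Sigma$-terms $f$ and $g$ is exactly the supremum of the $\varepsilon$ for which $f =_{\varepsilon} g$ is provable. Concretely, I will take the identity functor $\id \colon \sync{\qU} \to \sync{\qU}$ and show that it qualifies as an $\anyclose{E_q}$-model, then instantiate the hypothesis of the theorem at this model.

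The first step is to verify that $\id \colon \sync{\qU} \to \sync{\qU}$ is indeed an $\anyclose{E_q}$-model in the sense of \cref{def:models-quant-theory}. This breaks into two sub-tasks. On the one hand, $\sync{\qU}$ must be a $\VV$-enriched SMC (for the base $\VV$ corresponding to $\anyclose{E_q}$ in \cref{tab:closures-quant-eq}), and all inference rules used to generate $\anyclose{E_q}$ must be valid in it; both of these are exactly the content of \cref{lem:loc-hem-space}, \cref{lem:subst-non-exp}, \cref{lem:tens-non-exp}, and \cref{def:synt-enr-cat}, applied to the specific closure at hand. On the other hand, the identity is trivially a strict monoidal functor, and it is enriched because the metric in the codomain coincides with the one in the domain, so nonexpansiveness holds with equality. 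This gives the canonical model with essentially no additional work beyond what has already been established.

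Once $\id$ is a bona fide $\anyclose{E_q}$-model, the argument proceeds by unfolding. By hypothesis, $f =_{\varepsilon} g$ is true in $\id$, which by \cref{def:sat-quant-eq} amounts to $\varepsilon \vle d^{\sync{\qU}}_{n,m}(f,g)$. By the defining formula in \cref{lem:loc-hem-space}, this distance is precisely $\vJoin \{\delta \in \qV \mid f =_{\delta} g \in \anyclose{E_q}\}$. Applying the \RuleJoin\ rule (which is valid in every closure in \cref{tab:closures-quant-eq}) to the family $\{f =_{\delta} g\}_{\delta}$ yields $f =_{d^{\sync{\qU}}_{n,m}(f,g)} g \in \anyclose{E_q}$. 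Then, since $\varepsilon \vle d^{\sync{\qU}}_{n,m}(f,g)$, a single application of \RuleMon\ gives $f =_{\varepsilon} g \in \anyclose{E_q}$, as required.

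The only delicate point is step two: making sure that, for each of the eight possible closures in \cref{tab:closures-quant-eq}, the inference rules actually used to generate that closure are valid in the corresponding enrichment of $\sync{\qU}$. This is where the careful bookkeeping of \cref{sec:syncatenriched} pays off, because the pairing between closures and bases of enrichment was precisely engineered so that the required instances of \RuleSeqsum/\RuleSeqmeet\ and \RuleTenssum/\RuleTensmeet\ match the nonexpansiveness of $;$ and $\otimes$ granted by \cref{lem:subst-non-exp,lem:tens-non-exp}; and similarly \RuleSym\ is available exactly when the hom-objects are $\qV$-pseudometrics. Once one checks the table row-by-row, the rest of the argument is the short three-line derivation above, uniform in the choice of closure.
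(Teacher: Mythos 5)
Your proposal is correct and follows essentially the same route as the paper: instantiate the hypothesis at the canonical identity model $\id \colon \sync{\qU} \to \sync{\qU}$ (whose status as an $\anyclose{E_q}$-model is exactly what \cref{lem:loc-hem-space}, \cref{lem:subst-non-exp}, \cref{lem:tens-non-exp} and \cref{def:synt-enr-cat} provide), obtain $\varepsilon \vle d^{\qU}_{n,m}(f,g)$, and conclude via \RuleJoin\ followed by \RuleMon. The paper's proof is just a terser version of the same three-step argument, so no further changes are needed.
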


	\section{Case Study I: Order on Matrices}
	\label{sec:mat}

		Throughout the section we fix a semiring $R$ (a ring without additive inverses) and write $1_R$ and $0_R$ respectively for its multiplicative and additive identities. Also, we write $\fset{n}$ for $\{0,\dots,n-1\}$. We will show that, when $R$ is ordered,  the entrywise ordering of $R$-matrices can be axiomatised using the framework of \cref{sec:quant-mon-alg}. A notable example is the Boolean semiring $R=\{0,1\}$, in which case $R$-matrices represent relations, and the order we axiomatise is set-theoretic inclusion.

	\subsection{Background: (non-Quantitative) Axiomatisation of \texorpdfstring{$\Mat_R$}{Mat\_R}}
	\label{sec:mon-th-matrices}
	
	Before considering the quantitative theory, we recall the category of $R$-matrices and the monoidal theory axiomatising it. The axiomatisation result seems to be folklore, see e.g.~\cite{LafontCombinators,HylandPowerSketches} for the Boolean case. We follow the presentation of~\cite[Section~3.2]{zanasi:tel-01218015}, which is for a generic ring, but is applicable for semirings as well.
	
	\begin{definition}		The SMC $\Mat_R$ has objects $\bN$, and morphisms $n \rightarrow m$ the $m\times n$ matrices with entries in $R$. When $n$ or $m$ is $0$, there is a unique empty $m\times n$ matrix $[]$. Composition is by matrix multiplication and the monoidal product is by direct sum: $A \oplus A' = \left[\begin{smallmatrix}
			A & \mathbf{0} \\ \mathbf{0} & A'
		\end{smallmatrix}\right]$.   																												
		
			The monoidal theory $\HA{R}$ (\emph{H}opf \emph{A}lgebras) has generators
					$\raisebox{0.2ex}{\scalebox{0.5}{\tikzfig{delete}}} \colon 1 \rightarrow 0$,			$\raisebox{0.2ex}{\scalebox{0.5}{\tikzfig{copy}  }} \colon 1 \rightarrow 2$,			$\raisebox{0.2ex}{\scalebox{0.5}{\tikzfig{add}   }} \colon 2 \rightarrow 1$,			$\raisebox{0.2ex}{\scalebox{0.5}{\tikzfig{zer}   }} \colon 0 \rightarrow 1$, and 			$\raisebox{0.2ex}{\scalebox{0.5}{\tikzfig{scalar}}} \colon 1 \rightarrow 1$,
				for each scalar $k \in R$, and equations 
			\begin{gather*}
				\scalebox{0.45}{\tikzfig{a1_addassoc}} \quad \scalebox{0.5}{\tikzfig{a2_addcomm}} \quad \scalebox{0.45}{\tikzfig{a4_copassoc}} \quad \scalebox{0.55}{\tikzfig{a5_copcomm}}\\
				\scalebox{0.5}{\tikzfig{bw_unit}} \qquad \scalebox{0.55}{\tikzfig{a7_deladd}} \qquad \scalebox{0.55}{\tikzfig{a8_copadd}}\\
				\scalebox{0.5}{\tikzfig{a9_zercop}} \quad \scalebox{0.5}{\tikzfig{a10_delzer}} \quad \scalebox{0.55}{\tikzfig{a11_scalid}} \quad \scalebox{0.55}{\tikzfig{a12_scalscal}} \quad \scalebox{0.55}{\tikzfig{a13_addscal}}\\
				\scalebox{0.55}{\tikzfig{a14_zerscal}} \quad \scalebox{0.55}{\tikzfig{a15_scalcop}} \quad \scalebox{0.55}{\tikzfig{a16_scaldel}} \quad \scalebox{0.5}{\tikzfig{a17_zero}} \quad \scalebox{0.45}{\tikzfig{a18_addingscalars}}
			\end{gather*}
																																																																																																																																																																																																																																																																																																																																													We write $\sync{\HA{R}}$ for the SMC freely generated by $\HA{R}$, defined according to \cref{def:freesmc}.
	\end{definition}

						\begin{proposition}\label{prop:mataxiom}\!\!\!\cite[Proposition~3.9]{zanasi:tel-01218015}
		The following assignment of a matrix to each generator of $\HA{R}$, $F_R(\raisebox{0.3ex}{\scalebox{0.4}{\tikzfig{delete}}}) = []$, $F_R(\raisebox{0.3ex}{\scalebox{0.4}{\tikzfig{copy}  }}) = \left[\begin{smallmatrix}
			1_R\\1_R
		\end{smallmatrix}\right]$, $F_R(\raisebox{0.3ex}{\scalebox{0.4}{\tikzfig{add}   }}) = [\begin{smallmatrix}
			1_R&1_R
		\end{smallmatrix}]$, $F_R(\raisebox{0.3ex}{\scalebox{0.4}{\tikzfig{zer}   }}) = []$, $F_R(\raisebox{0.3ex}{\scalebox{0.5}{\tikzfig{scalar}}}) = [k]$, 
																														yields an identity-on-objects freely generated symmetric monoidal functor $F_R\colon \sync{\HA{R}} \rightarrow \Mat_R$, which is furthermore an isomorphism of SMCs.
	\end{proposition}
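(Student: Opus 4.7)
The proof splits into three steps: show $F_R$ is well-defined as a strict monoidal functor, show it is surjective on morphisms (with a canonical way to realise each matrix by a string diagram), and show it is injective on morphisms (via a normal form argument). Being identity-on-objects is automatic since both categories have $\bN$ as objects with monoidal product given by addition, and the assignment extends by setting $F_R(n) = n$.

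\textbf{Step 1: Well-definedness.} By the universal property of the syntactic category $\sync{\HA{R}}$, any assignment of generators to morphisms in an SMC $\C$ extends uniquely to a strict monoidal functor, provided the images satisfy the equations in $E$. Thus I would verify, one equation at a time, that the listed axioms (associativity/commutativity of addition and copy, unit laws, bialgebra laws, scalar interaction axioms, etc.) are satisfied in $\Mat_R$ under the assignment given. Each verification is a small matrix computation. For example, the bialgebra axiom $\raisebox{0.2ex}{\scalebox{0.4}{\tikzfig{copy}}}\,;\,\raisebox{0.2ex}{\scalebox{0.4}{\tikzfig{add}}}$ becomes $[1_R\ 1_R] \cdot \left[\begin{smallmatrix}1_R\\1_R\end{smallmatrix}\right] = [1_R + 1_R]$ on one side, which must match the other side's matrix; axioms about scalars such as $\raisebox{0.2ex}{\scalebox{0.4}{\tikzfig{scalar}}}$ composed with itself reduce to $[k][l] = [kl]$; and so on.

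\textbf{Step 2: Surjectivity.} Given a matrix $A \in \Mat_R(n,m)$ with entries $a_{ij}$, I would construct a canonical diagram $\mathsf{nf}(A) \colon n \to m$ built by: first, copying each of the $n$ input wires into $m$ copies (using a tree of $\raisebox{0.2ex}{\scalebox{0.4}{\tikzfig{copy}}}$ with appropriate symmetries); second, applying scalar boxes $\raisebox{0.2ex}{\scalebox{0.4}{\tikzfig{scalar}}}$ labelled with $a_{ij}$ to each of the $nm$ intermediate wires; third, summing together, for each output coordinate $j$, the $n$ wires labelled with $a_{1j},\dots,a_{nj}$ using a tree of $\raisebox{0.2ex}{\scalebox{0.4}{\tikzfig{add}}}$. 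Here the generators $\raisebox{0.2ex}{\scalebox{0.4}{\tikzfig{delete}}}$ and $\raisebox{0.2ex}{\scalebox{0.4}{\tikzfig{zer}}}$ handle the degenerate cases $m=0$ or $n=0$. A direct computation using the definition of $F_R$ shows $F_R(\mathsf{nf}(A)) = A$.

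\textbf{Step 3: Injectivity (the hard part).} The main obstacle is showing that every diagram in $\sync{\HA{R}}$ is equal (modulo the axioms in $E$) to the canonical form $\mathsf{nf}(F_R(d))$ described above. This is a normalisation argument: I would show that the Hopf/bialgebra axioms, together with the scalar interaction rules, allow any diagram to be rewritten by pushing all copies to the left, all scalars to the middle, and all additions to the right. The critical lemmas are (a) the bialgebra equations let one commute addition past copying, (b) the scalar axioms let one push scalars past copy and add (multiplying or summing coefficients appropriately), and (c) the unit and counit equations absorb extra $\raisebox{0.2ex}{\scalebox{0.4}{\tikzfig{zer}}}$ and $\raisebox{0.2ex}{\scalebox{0.4}{\tikzfig{delete}}}$ generators. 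Once every morphism is shown equivalent to its normal form, injectivity follows: if $F_R(d_1) = F_R(d_2) = A$, then $d_1 = \mathsf{nf}(A) = d_2$ in $\sync{\HA{R}}$. Combined with Step 2, this yields bijectivity on hom-sets, hence $F_R$ is an isomorphism of SMCs.
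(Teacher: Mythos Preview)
Your proposal is correct and follows the standard approach; the paper does not give its own proof of this proposition but cites it from \cite[Proposition~3.9]{zanasi:tel-01218015}, noting only that faithfulness relies on the decomposition result \cite[Lemma~3.10]{zanasi:tel-01218015}, which is exactly your normal-form argument in Step~3. The paper's Lemma~\ref{lem:matrix-representation} restates this decomposition in the form $f = b^n_m ; (\bigotimes_{i,j} f_{ij}) ; w^n_m$, matching your copy--scalar--add canonical form $\mathsf{nf}(A)$.
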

	Showing that $F_R$ is faithful relies on a decomposition result for morphisms of $\sync{\HA{R}}$ \cite[Lemma~3.10]{zanasi:tel-01218015}. We prove a variant of this result that will be convenient to use later.
	
	\begin{lemma}\label{lem:matrix-representation}
		Given $n,m \in \bN$, there are two morphisms $b^n_m\colon n \rightarrow nm$ and $w^n_m\colon nm \rightarrow m$ in $\sync{\HA{R}}$ such that for any morphism $f\colon n \rightarrow m$ in $\sync{\HA{R}}$, there are scalars $\{f_{ij} \in R\}_{i \in \fset{m}, j \in \fset{n}}$ such that  $f =  b^n_m ; ( \bigotimes_{i \in \fset{m},j \in \fset{n}} f_{ij} ) ; w^n_m $ and the $(i,j)$-entry of the matrix $F(f)$ is $f_{ij}$.
	\end{lemma}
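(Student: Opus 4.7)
The plan is to construct $b^n_m$ and $w^n_m$ explicitly from the basic generators of $\HA{R}$, and then exploit the isomorphism $F_R\colon \sync{\HA{R}} \to \Mat_R$ of \cref{prop:mataxiom} to verify the decomposition.

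First, I would introduce the iterated copy $\Delta_m\colon 1 \to m$ and iterated addition $\Sigma_n\colon n \to 1$, defined inductively from the copy, add, zero, and delete generators in the usual way. Then I would set $w^n_m \coloneqq \Sigma_n^{\tensor m}\colon nm \to m$, which sums consecutive blocks of $n$ wires into $m$ outputs, and $b^n_m \coloneqq \Delta_m^{\tensor n} ; \sigma\colon n \to nm$, where $\sigma$ is the permutation (built from the symmetry generator) that reorders the $nm$ wires so that the copy carrying the $j$-th input arrives at position $in + j$ for each $i \in \fset{m}$ and $j \in \fset{n}$. Intuitively, $b^n_m$ fans each of the $n$ inputs out to $m$ copies and arranges them in lexicographic order by $(i,j)$, while $w^n_m$ groups the wires into $m$ blocks of size $n$ and adds each block.

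To prove the factorisation, given $f\colon n \to m$ I would define $f_{ij}$ to be the $(i,j)$-entry of the matrix $F_R(f) \in \Mat_R$. Since $F_R$ is faithful (indeed an isomorphism) by \cref{prop:mataxiom}, it suffices to verify that $F_R\bigl(b^n_m ; (\bigotimes_{i,j} f_{ij}) ; w^n_m\bigr) = F_R(f)$. Computing in $\Mat_R$: $F_R(b^n_m)$ is the $nm \times n$ matrix whose $((i,j), j)$-entry is $1_R$ and all other entries are $0_R$; $F_R(w^n_m)$ is the $m \times nm$ matrix whose $(i, (i,j))$-entry is $1_R$ and all other entries are $0_R$; and $F_R\bigl(\bigotimes_{i,j} f_{ij}\bigr)$ is the diagonal $nm \times nm$ matrix with $((i,j),(i,j))$-entry $f_{ij}$. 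Multiplying these three matrices, the $(i',j')$-entry of the product equals $\sum_{i,j} \delta_{i,i'} f_{ij} \delta_{j,j'} = f_{i'j'}$, which is exactly the $(i',j')$-entry of $F_R(f)$.

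The only real obstacle is the bookkeeping: fixing a consistent linear ordering on the $nm$ intermediate wires via $\sigma$ so that the scalar indexed by $(i,j)$ truly sits between the $j$-th input's copy and the $i$-th output's summand. Once that convention is pinned down, the matrix computation is mechanical and faithfulness of $F_R$ closes the argument.
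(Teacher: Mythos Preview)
Your proposal is correct and follows essentially the same route as the paper: define $b^n_m$ and $w^n_m$, compute the three matrices in $\Mat_R$, verify their product has $(i,j)$-entry $f_{ij}$, and invoke faithfulness of $F_R$. The only difference is cosmetic---the paper sidesteps your ``bookkeeping obstacle'' by defining $b^n_m$ and $w^n_m$ directly as the $F_R$-preimages of the explicit matrices $\bigoplus_{j\in\fset{n}}[1_R\ \cdots\ 1_R]^T$ and $[\mathbf{I}_m\ \cdots\ \mathbf{I}_m]$ rather than building them from generators, which shifts the permutation to the $w$ side and makes its existence automatic from $F_R$ being an isomorphism.
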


This decomposition is related to the `matrix canonical form' of \cite[Lemma~3.10]{zanasi:tel-01218015} because it can be shown that $b^n_m$ is represented by a string diagram containing only the generators $\raisebox{0.3ex}{\scalebox{0.35}{\tikzfig{delete}}}$ and $\raisebox{0.3ex}{\scalebox{0.35}{\tikzfig{copy}}}$, while $w^n_m$ only contains $\raisebox{0.3ex}{\scalebox{0.35}{\tikzfig{zer}}}$, $\raisebox{0.3ex}{\scalebox{0.35}{\tikzfig{add}}}$, and $\raisebox{0.3ex}{\scalebox{0.35}{\tikzfig{sym}}}$. For example, the morphism $f$ satisfying $F_R(f) = \left[\begin{smallmatrix}
	a & b\\c & d
\end{smallmatrix}\right]$ decomposes as $\scalebox{0.6}{\tikzfig{exmpmatdecompsmallbis}}$.

\subsection{Axiomatising the Preorder Relation for Matrices}
\label{sec:preoder-matrix}

When $R$ is an ordered semiring, there is a simple preorder on matrices of the same size defined by entrywise comparisons: $A\leq B$ if and only if, $A_{ij}\leq B_{ij}$ for all $i$, $j$.

In this section we consider the task of axiomatising this preorder. There are two main steps. First, formulate the preorder as an enrichment on $\Mat_R$. Second, identify a quantitative extension of the theory $\HA{R}$ and show it axiomatises the enriched version of $\Mat_R$. A key property for the enrichment is not just the existence of an order on the semiring elements, but also compatibility of this order with matrix multiplication.
\begin{assumption}\label{ass:orderring}
	Throughout this subsection we assume $R$ to be an ordered semiring such that, for each $a,a',b,b' \in R$,  $a \leq a'$ and $b \leq b'$ implies $a+b \leq a'+ b'$ and $ab \leq a'b'$.
\end{assumption}

For example, the Boolean semiring $\{0m1\}$ and the semiring $[0,\infty)$ of nonnegative reals satisfies this assumption, whereas $\bR$ does not.
As seen in \Cref{ex:preorder-as-hemimet}, $\twoQ$-hemimetrics are preorders and their nonexpansive maps are order-preserving functions. Thus, we seek to enrich $\Mat_{R}$ in $\qenrcats{\twoQ}$.
Since $\vten = \vmeet$ in $\twoQ$, the two monoidal products provided in \cref{ex:quantalic_sum_metric,ex:quantalic_max_metric} coincide, and are defined as: $(x,y) \leq (x',y')$ if and only if $x \leq x'$ and $y \leq y'$. Thanks to \cref{ass:orderring}, we can show that matrix multiplication and direct sum preserve this order, and we obtain the following.
\begin{theorem}\label{thm:enrichmentMatR}
	The category $\Mat_{R}$ equipped with the $\twoQ$-hemimetrics corresponding to entrywise comparison on its hom-sets, denoted with $\EnrMat_{R}$, is a $\qenrcats{\twoQ}$-enriched SMC. \end{theorem}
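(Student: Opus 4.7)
The plan is to verify in turn the three ingredients required by \cref{def:enrichedmoncat}: (i) that entrywise comparison yields a $\twoQ$-hemimetric on each hom-set, (ii) that matrix multiplication is nonexpansive, and (iii) that direct sum is nonexpansive. All three follow from elementary matrix arithmetic, using \cref{ass:orderring} in a crucial way only for step (ii). Throughout, I use the concrete form of $\twoQ$: the $\twoQ$-hemimetric $d$ on $\Mat_R(n,m)$ sends $(A,B)$ to $\top$ if $A_{ij} \leq B_{ij}$ for all $i,j$, and to $\bot$ otherwise, so that $d(A,B) = \top$ is the entrywise preorder relation $A \leq B$.

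First, for (i), reflexivity of $d$ is immediate since $A_{ij} \leq A_{ij}$, giving $d(A,A) = \top = k$. For the triangle inequality, recall that in $\twoQ$ we have $\vten = \vmeet$, so $d(A,B) \vten d(B,C) = \top$ iff both $A \leq B$ and $B \leq C$ entrywise, which by transitivity of $\leq$ on $R$ gives $A \leq C$, i.e., $d(A,C) = \top$. Hence $d(A,B) \vten d(B,C) \vle d(A,C)$.

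For (ii), I need to show that composition $;\colon \Mat_R(n,m) \boxtimes \Mat_R(m,\ell) \to \Mat_R(n,\ell)$ is nonexpansive, where $\boxtimes$ here coincides with both $\boxtimes_{\vten}$ and $\boxtimes_{\vmeet}$ (since $\vten = \vmeet$ in $\twoQ$). Unfolding this, given $A,A' \in \Mat_R(n,m)$ and $B,B' \in \Mat_R(m,\ell)$ with $A \leq A'$ and $B \leq B'$, I must verify $BA \leq B'A'$ entrywise. The $(i,j)$-entry of $BA$ is $\sum_{h} B_{ih} A_{hj}$; by the monotonicity of both addition and multiplication asserted in \cref{ass:orderring}, $B_{ih} A_{hj} \leq B'_{ih} A'_{hj}$ for each $h$, and summing preserves the order, giving $(BA)_{ij} \leq (B'A')_{ij}$.

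For (iii), nonexpansiveness of direct sum $\oplus$ is easier and does not require \cref{ass:orderring}: given $A \leq A'$ in $\Mat_R(n,m)$ and $B \leq B'$ in $\Mat_R(n',m')$, the entries of $A \oplus B$ are either entries of $A$, entries of $B$, or $0_R$, and correspondingly agree with entries of $A' \oplus B'$ or are $0_R$ in positions where the other block is zero; so the comparison $A \oplus B \leq A' \oplus B'$ reduces to the hypotheses (plus $0_R \leq 0_R$). No step here is a serious obstacle; the only subtlety to be careful about is that \cref{ass:orderring} gives monotonicity of addition and multiplication only for two arguments at a time, so in step (ii) the sum over $h$ should be handled by iterated application, which is routine.
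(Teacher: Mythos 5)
Your proof is correct and follows essentially the same route as the paper: checking entrywise that matrix multiplication preserves the order via \cref{ass:orderring} and that direct sum does so trivially, then invoking the characterisation of enrichment by nonexpansive composition and monoidal product. The only addition is your explicit verification of the $\twoQ$-hemimetric axioms, which the paper absorbs into the observation that entrywise comparison is a preorder (\cref{ex:preorder-as-hemimet}).
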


Next, we introduce a $\twoQ$-quantitative monoidal theory and show it axiomatises $\EnrMat_R$. 

\begin{definition}
	The $\twoQ$-quantitative monoidal theory $\preordqmat{R}$ is defined as the monoidal theory $\HA{R}$ extended with the following family of quantitative equations: $\forall k_1 \leq k_2\in R$,\begin{equation}\label{eqn:mat:orderscalars}
		\scalebox{0.85}{\tikzfig{orderscalars}}.
	\end{equation}
\end{definition}

As a reason for~\cref{eqn:mat:orderscalars}, recall that, in a $\twoQ$-hemimetric space, two objects $x,y$ having distance $\top$ corresponds to $x\leq y$ when seeing the space as a preorder.We may now form the syntactic category $\sync{\preordqmat{R}}$ on $\preordqmat{R}$ (\cref{def:synt-enr-cat}) using the closure $\Hclosesumsum{E_q}$, where $E_q$ contains the quantitative equations in \cref{eqn:mat:orderscalars}.  We may equivalently use any closure in \cref{tab:closures-quant-eq} since $\vten = \vmeet$ in $\twoQ$. Both $\sync{\preordqmat{R}}$ and $\EnrMat_{R}$ are $\qenrcats{\twoQ}$-enriched monoidal, and we now prove they are isomorphic as enriched SMCs. 

\begin{theorem}
	\label{thm:preord-mat-quant-mon-theory}
	The isomorphism $F_{R}\colon \sync{\HA{R}} \rightarrow \Mat_{R}$ from \cref{prop:mataxiom} induces an isomorphism of $\qenrcats{\twoQ}$-enriched SMCs $F_{R}\colon \sync{\preordqmat{R}}\to\EnrMat_{R}$.
\end{theorem}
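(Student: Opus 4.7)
The plan is to leverage the existing (non-quantitative) isomorphism $F_R\colon \sync{\HA{R}} \to \Mat_R$ of \cref{prop:mataxiom} and upgrade it to an isomorphism of $\qenrcats{\twoQ}$-enriched SMCs. Since $F_R$ is already a strict monoidal bijection on every hom-set, the only remaining thing to check is that it is locally an isometry, i.e.\ for all $f,g\colon n\to m$,
\[ d^{\sync{\preordqmat{R}}}_{n,m}(f,g) \;=\; d^{\EnrMat_R}_{n,m}(F_R f, F_R g). \]
Because both $\twoQ$-hemimetrics only take values in $\{\bot,\top\}$, this reduces to the biconditional $f =_{\top} g \in \Hclosesumsum{E_q}$ iff $F_R(f) \leq F_R(g)$ entrywise.

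For the forward direction (soundness), I would appeal to \cref{thm:model-to-enriched-model}: it suffices to verify that every quantitative axiom of $\preordqmat{R}$ is true in the candidate model $F_R$, viewed now as a strict monoidal functor $\sync{\HA{R}}\to \EnrMat_R$. The axioms in \cref{eqn:mat:orderscalars} say that whenever $k_1\leq k_2$ in $R$, the scalar boxes satisfy $[k_1] =_{\top} [k_2]$. Under $F_R$, these map to the $1\times 1$ matrices $[k_1]$ and $[k_2]$, and by the definition of the entrywise preorder, $d^{\EnrMat_R}([k_1],[k_2]) = \top$. Hence every generating quantitative equation is true in $F_R$, so $F_R$ is automatically a $\qenrcats{\twoQ}$-enriched functor, giving one inequality $d^{\sync{\preordqmat{R}}}(f,g) \vle d^{\EnrMat_R}(F_R f, F_R g)$.

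For the reverse direction (completeness, giving the matching inequality), I would use the decomposition from \cref{lem:matrix-representation}: any $f\colon n\to m$ in $\sync{\HA{R}}$ can be written as $f = b^n_m ; (\bigotimes_{i,j} f_{ij}) ; w^n_m$ where the scalars $f_{ij}$ are precisely the entries of $F_R(f)$. Given $g\colon n\to m$ with the analogous decomposition and assuming $F_R(f) \leq F_R(g)$ entrywise, we have $f_{ij}\leq g_{ij}$ in $R$ for every $i,j$. By the axiom \cref{eqn:mat:orderscalars}, each scalar box satisfies $f_{ij} =_{\top} g_{ij}$; iterating \RuleTenssum\ (using $\top\vten\top=\top$, since $\twoQ$ is integral and $\vten=\vmeet$) yields $\bigotimes_{i,j} f_{ij} =_{\top} \bigotimes_{i,j} g_{ij}$. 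Finally, combining this with $b^n_m =_{\top} b^n_m$ and $w^n_m =_{\top} w^n_m$ (from \RuleRefl) via two applications of \RuleSeqsum\ delivers $f =_{\top} g$, as required.

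The hardest conceptual step is the completeness direction, and more specifically the use of \cref{lem:matrix-representation} to reduce an arbitrary comparison between diagrams to a collection of comparisons between scalars. Everything else reduces to straightforward applications of the inference rules from \cref{def:qnt-closure} and the concrete semantics of the Boolean quantale $\twoQ$; in particular, since $\vten = \vmeet$ in $\twoQ$, no subtlety arises from the choice of closure in \cref{tab:closures-quant-eq}. Once these two inequalities are combined, $F_R$ is locally an isometry, and being an iso of underlying SMCs it is then, by the characterisation of enriched isomorphisms in \cref{sec:prelim}, an isomorphism of $\qenrcats{\twoQ}$-enriched SMCs.
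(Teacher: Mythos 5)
Your proposal is correct and follows essentially the same route as the paper: reduce to showing $F_R$ is locally an isometry, get the forward inequality from \cref{thm:model-to-enriched-model} by checking the scalar axioms \cref{eqn:mat:orderscalars} are true in $F_R$, and get the converse from the decomposition of \cref{lem:matrix-representation} plus the composition rules (your use of \RuleSeqsum/\RuleTenssum\ instead of \RuleSeqmeet/\RuleTensmeet\ is immaterial since $\vten=\vmeet$ in $\twoQ$, as the paper itself notes).
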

\begin{proof}	It suffices to prove $F_{R}$ is locally an isometry, which for $\twoQ$-hemimetrics means that for any morphisms $f$ and $g$ in $\sync{\HA{R}}$, $f\leq g$ if and only if $F_{R}(f)\leq F_{R}(g)$. 
				
	The forward implication says that $F_R$ is enriched, and by \cref{thm:model-to-enriched-model}, we can prove this by checking the quantitative equations of $\preordqmat{R}$ are true in $F_R$. The latter are of the form \cref{eqn:mat:orderscalars} when $k_1 \leq k_2$, and those are clearly true in $F_R$ because $F_{R}(\raisebox{0.3ex}{\scalebox{0.5}{\tikzfig{scalar}}}) = [k]$. 

						It remains to prove the converse implication. 	By \Cref{lem:matrix-representation}, we can decompose $f$ and $g$ as $b^n_m ; ( \bigotimes_{i \in \fset{m},j \in \fset{n}} f_{ij} ) ; w^n_m$ and $b^n_m ; ( \bigotimes_{i \in \fset{m},j \in \fset{n}} g_{ij} ) ; w^n_m$.
				Now, $F_{R} (f)\leq F_{R} (g)$ means that each entry of $F_{R} (f)$ is less or equal than each corresponding entry of $F_{R} (g)$, so for any $i$ and $j$, $F_R(f_{ij})\leq F_R(g_{ij})$, hence $\raisebox{0.3ex}{\scalebox{0.6}{\tikzfig{scalarfij}}} \leq \raisebox{0.3ex}{\scalebox{0.6}{\tikzfig{scalargij}}}$ by \cref{eqn:mat:orderscalars}. Having established this relation between scalars appearing in $f$ and $g$, and exploiting the decompositions, 	we can conclude that $f \leq g$ holds in $\sync{\preordqmat{R}}$ by repeated application of the inference rules {\RuleTensmeet} and {\RuleSeqmeet}. \qedhere
															\end{proof}

\section{Case Study II: Total Variation Distance}\label{sec:axiomatisetotalvariation}

The total variation distance is one of the most widely studied metrics on probability distributions. It appears ubiquitously in various fields of applied mathematics, a prominent example being optimal transport theory~\cite{Villani2009}. In \cite{Mardare2016}, the authors axiomatise the total variation distance on probability distributions as a quantitative (cartesian) algebraic theory.

In this section, we achieve a similar characterisation result, but in the language of quantitative monoidal theories. Rather than just discrete probability distributions, we focus more generally on stochastic matrices. These form a category $\FStoch$, in which distributions are the $1 \to n$ morphisms.

Our first step is to recall $\FStoch$ and the (non-quantitative) monoidal theory axiomatising it (\cref{sec:fritz}). Second, we introduce the total variation distance and show that $\FStoch$ is enriched over metric spaces, so that we can study total variation between its morphisms meaningfully (\cref{sec:totalvarenrichment}). Thirdly, we expand the monoidal theory of \cref{sec:fritz} to a quantitative monoidal theory, and show that it axiomatises $\FStoch$ as an enriched SMC. Effectively, this means that two matrices in $\FStoch$ are at total variation distance $\varepsilon$ if and only if the corresponding string diagrams can be proven to be at distance $\varepsilon$ in the theory.

This can be understood as an axiomatisation of the `metric theory' in \cite[Example~3.2.7]{HLarsen2021}.

\subsection{Background: (non-Quantitative) Axiomatisation of \texorpdfstring{$\FStoch$}{FStoch}}\label{sec:fritz}

Here, we recall the axiomatisation result of $\FStoch$, only focusing on exact equality.

\begin{definition}	\label{def:finstoch}
	The SMC $\FStoch$ is the subcategory of $\Mat_{[0,1]}$ whose morphisms $n \to m$ are the \emph{stochastic matrices}, i.e. $m \times n$ matrices with entries in the interval $[0,1]$, such that the sum of the entries in a column always equals $1$. 										\end{definition}
In the sequel, we will often (implicitly) see columns of a stochastic matrix as probability distributions and vice-versa.
Recall that, given a set $X$, a \emph{(probability) distribution} on $X$ is a function $\dist\colon X \rightarrow [0,1]$ satisfying $\sum_{x \in X} \dist(x) = 1$. There is a monad mapping $X$ to the set $\Dset(X)$ of finitely supported probability distributions on $X$, and one may regard $\FStoch$ as a full subcategory of the Kleisli category of such monad. We refer to \cref{app:distrmonad} for details.
The following axioms were originally studied in~\cite{Stone1949PostulatesFT}, but we follow the more recent \cite{Fritz09}, casting it in the setting of monoidal categories.

\begin{definition}	\label{def:convtheory}
	The monoidal theory $\ConvAlg$ (standing for \emph{convex algebras}) has generators $\raisebox{0.3ex}{\scalebox{0.5}{\tikzfig{del}}} \colon 0 \to 1$,  $\raisebox{0.3ex}{\scalebox{0.5}{\tikzfig{cop}}} \colon 2 \to 1$, and $\raisebox{0.2ex}{\scalebox{0.6}{\tikzfig{cc}}} \colon 1 \to 2$	for each $\lambda \in [0,1]$, and equations as in \cref{fig:fritzaxiom} below. We write $\sync{\ConvAlg}$ for the SMC freely generated by $\ConvAlg$ (recall \cref{def:freesmc}).\end{definition}

\begin{figure}[!htb]
{	
	\small
	\begin{align*}
		\scalebox{0.45}{\tikzfig{assoc}} \qquad \scalebox{0.45}{\tikzfig{comm}} \qquad \scalebox{0.5}{\tikzfig{unit}}\\
		\scalebox{0.55}{\tikzfig{idemp}} \qquad \scalebox{0.45}{\tikzfig{convassoc}} \qquad \scalebox{0.5}{\tikzfig{convcomm}} \\
		\scalebox{0.55}{\tikzfig{natdel}} \qquad \scalebox{0.55}{\tikzfig{zprob}} \qquad \scalebox{0.55}{\tikzfig{cccop}}
	\end{align*}
																																																													}
												\caption{Axiomatisation of $\FStoch$. 		We write $\tilde{\lambda}$ for $\lambda\mu$ and $\tilde{\mu}$ for $\frac{\lambda -\lambda\mu}{1-\lambda\mu}$ (with $\frac{0}{0} = 1$). 	}
	\label{fig:fritzaxiom}
\end{figure}

\begin{proposition}\label{thm:fritzmain}\!\!\!\cite[Theorem~3.14]{Fritz09}
	The following assignment of a stochastic matrix to each generator of $\ConvAlg$, $F(\raisebox{0.3ex}{\scalebox{0.4}{\tikzfig{del}}}) = []$, $F(\raisebox{0.3ex}{\scalebox{0.4}{\tikzfig{cop}}}) = \left[\begin{smallmatrix}
		\scriptstyle 1 & \scriptstyle 1
	\end{smallmatrix}\right]$, $F(\raisebox{0.3ex}{\scalebox{0.4}{\tikzfig{cc}}}) = \left[ \begin{smallmatrix}\scriptstyle \lambda\\ \scriptstyle 1-\lambda\end{smallmatrix} \right]$, 
																	yields an identitity-on-objects freely generated symmetric monoidal functor $F\colon \sync{\ConvAlg} \rightarrow \FStoch$, which is furthermore an isomorphism of SMCs.
\end{proposition}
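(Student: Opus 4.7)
The plan is to follow the structure of Fritz's original proof~\cite{Fritz09}, splitting the work into three checks: well-definedness of $F$, fullness, and faithfulness. Because $\sync{\ConvAlg}$ is freely generated (\cref{def:freesmc}), any assignment of the generators to morphisms of $\FStoch$ of the correct arity/coarity automatically extends to a symmetric strict monoidal functor on the free SMC on $\ConvAlg$'s signature; this extension descends to $\sync{\ConvAlg}$ exactly when the equations of \cref{fig:fritzaxiom} all hold in $\FStoch$ under the assignment. Identity-on-objects is built in since both categories have $\bN$ as objects and strict monoidal product given by addition.

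For well-definedness I would verify each of the nine equations by direct matrix calculation. The commutativity, associativity, and unit laws for copy reduce to trivial identities among the matrices $[1\ 1]$, $[\,]$, and $[1\ 0\ 1\ 0;0\ 1\ 0\ 1]$-type tensors. The idempotence equation $\raisebox{0.3ex}{\scalebox{0.4}{\tikzfig{cc}}};\raisebox{0.3ex}{\scalebox{0.4}{\tikzfig{cop}}} = \mathrm{id}$ collapses to $\lambda + (1-\lambda) = 1$. The convex-associativity and convex-commutativity equations become the algebraic identities $\lambda\mu = \tilde{\lambda}$ and $(1-\lambda)+\lambda(1-\mu) = 1 - \tilde{\lambda}$ (respectively the symmetric variant), which are precisely how $\tilde{\lambda}$ and $\tilde{\mu}$ are defined. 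The remaining naturality-of-discard, zero-probability, and copy-of-cc equations are routine.

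For fullness, any stochastic matrix $A\colon n \to m$ is determined by its $n$ columns, each a distribution on $m$ points. It therefore suffices to (i) realise every finitely supported distribution $1 \to m$ as a morphism in $\sync{\ConvAlg}$ and (ii) combine $n$ such column diagrams into an $n \to m$ diagram using copy and discard to route inputs. For (i), I would proceed by induction on $m$: a distribution on $2$ points with weight $\lambda$ on the first is the generator $\raisebox{0.3ex}{\scalebox{0.4}{\tikzfig{cc}}}$; a distribution $(p_0,\dots,p_{m-1})$ on $m$ points factors as $\raisebox{0.3ex}{\scalebox{0.4}{\tikzfig{cc}}}$ with parameter $p_0$ followed, on the right wire, by the inductively constructed $1 \to m-1$ diagram encoding the renormalised distribution $(p_1,\dots,p_{m-1})/(1-p_0)$, with the degenerate cases handled by the zero-probability equation. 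For (ii), $n$ independent distribution diagrams are assembled in parallel and preceded by copies; the $0$-ary cases use $\raisebox{0.3ex}{\scalebox{0.4}{\tikzfig{del}}}$.

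Faithfulness is the step I expect to be the main obstacle. The strategy is to introduce a normal form for morphisms in $\sync{\ConvAlg}$ --- a decomposition $b_m^n ; D ; w_m^n$ analogous to \cref{lem:matrix-representation}, where $b$ and $w$ handle fan-in/fan-out using only copy and discard, and $D$ is a parallel composite of scalar-like convex combinations encoding the matrix entries --- and then show that (a) every morphism rewrites to such a form using only the equations of $\ConvAlg$, and (b) the stochastic matrix read off from this normal form coincides with $F$ applied to it, so that $F(s) = F(t)$ forces equal normal forms and hence $s = t$. Step (a) is the combinatorially delicate part: copy, discard, and the $\lambda$-parametric family of convex combinations must be pushed through each other in a confluent way, using the associativity/commutativity axioms and the $\raisebox{0.3ex}{\scalebox{0.4}{\tikzfig{cc}}};\raisebox{0.3ex}{\scalebox{0.4}{\tikzfig{cop}}}$ equation to convert sequential convex combinations into a single one with parameter $\tilde{\lambda}$. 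Rather than replaying Stone's argument in string-diagrammatic form, I would cite~\cite[Theorem~3.14]{Fritz09} directly, since this equivalence between $\sync{\ConvAlg}$ and $\FStoch$ is precisely the main result proved there; our contribution only needs the statement.
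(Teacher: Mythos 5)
Your proposal is correct and ends up where the paper does: this proposition is a recalled background result, and the paper gives no proof of its own, simply citing \cite[Theorem~3.14]{Fritz09} (with \cref{lem:fritzdecompose} separately importing the decomposition from Fritz's Propositions~3.12 and~3.13 for later use). Your sketch of well-definedness, fullness, and faithfulness is a faithful outline of Fritz's argument, and your decision to cite the result directly rather than replay it is exactly the paper's approach.
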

\begin{remark} A few caveats when comparing our presentation with the one of \cite{Fritz09}: the category $\FStoch$ is called $\mathtt{FinStoMap}$; the author uses different graphical conventions for reading sequential and parallel composition of diagrams (top-to-bottom instead of left-to-right); the author has the symmetric structure as explicit part of the presentation, rather than as something generated freely by the syntactic category of string diagrams. Furthermore, note that there are other ways to present $\FStoch$ axiomatically, e.g.~\cite[Example 6.2(c)]{Bonchi2018}. 
\end{remark}
Showing that $F$ is faithful relies on a decomposition for morphisms of $\sync{\ConvAlg}$ recalled below.

\begin{lemma}\!\!\!\cite[Propositions~3.12 and 3.13]{Fritz09}\label{lem:fritzdecompose}
	Given $n,m \in \bN$, there is a morphism $p_m^n\colon nm \rightarrow m$ such that $\forall f\colon n \rightarrow m \in \sync{\ConvAlg}$, there are morphisms $\{f_i \colon 1 \rightarrow m\}_{i\in \fset{n}}$ such that 	$f = (f_1 \otimes \cdots \otimes f_n);p_m^n$ and
$F(f_i)$ is the $i$th column of $F(f)$.
\end{lemma}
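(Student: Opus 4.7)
My approach is to exhibit $p_m^n$ explicitly from the generators of $\ConvAlg$ and then establish the decomposition by structural induction on the $\Sigma$-term $f\colon n \to m$, treating the set of terms of the prescribed shape as a normal form. I would define $p_m^n \colon nm \to m$ iteratively: set $p_m^0 \coloneqq \del^{\otimes m}$ and $p_m^1 \coloneqq \id_m$, and obtain $p_m^{n+1}$ by applying $p_m^n$ to the last $nm$ input wires, interleaving the resulting $m$ outputs with the first $m$ inputs via symmetries, and merging each of the $m$ resulting pairs with $\cop$. A direct matrix computation then gives $F(p_m^n) = [I_m \mid \cdots \mid I_m]$ ($n$ horizontal copies of $I_m$), which will be used at the end to settle the column condition.

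Base cases and parallel composition are routine. For the generators, $\del$, $\id$, and the empty diagram are handled by the definition of $p_m^n$; $\cop = (\id \otimes \id); p_1^2$ (taking $p_1^2 = \cop$) and $\cc_\lambda = \cc_\lambda; p_2^1$ are immediate; the symmetry decomposes as $(\cc_1 \otimes \cc_0); p_2^2$, provable from the bimonoid laws on $(\cop, \del)$ together with the axioms governing $\cc_0$ and $\cc_1$. For a parallel composition $g \otimes h$, I would pad each column morphism of the two factors to the new codomain width using $\del^{\otimes(-)}$ and rearrange $p_{m_g}^{n_g} \otimes p_{m_h}^{n_h}$ into the single $p_{m_g + m_h}^{n_g + n_h}$ via symmetries together with associativity, commutativity, and the unit law for $\cop$.

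The main technical obstacle is the sequential composition case. For $f = g;h$ with inductive decompositions $g = (g_1 \otimes \cdots \otimes g_n); p_k^n$ and $h = (h_1 \otimes \cdots \otimes h_k); p_m^k$, the natural candidate is $f_i \coloneqq g_i; h$, and establishing $f = (f_1 \otimes \cdots \otimes f_n); p_m^n$ reduces to the naturality identity
\[
  p_k^n;\, h \;=\; h^{\otimes n};\, p_m^n, \qquad \text{for all } h \colon k \to m \in \sync{\ConvAlg}.
\]
Proving this identity from the axioms of $\ConvAlg$ is where the bulk of the work lies. My strategy is a nested structural induction on $h$: when $h$ is a generator, I would use the bimonoid laws for $(\cop, \del)$ together with the axioms governing the interaction of $\cc_\lambda$ and $\del$ with $\cop$ (in particular $\scalebox{0.55}{\tikzfig{natdel}}$ and $\scalebox{0.55}{\tikzfig{cccop}}$ in \cref{fig:fritzaxiom}), which express that $\del$ and $\cc_\lambda$ are homomorphisms for the merge structure; the general case then lifts through sequential and parallel composition of $h$. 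Intuitively, this captures the distinctive fact that in $(\FStoch, \oplus)$ every morphism is a homomorphism of the natural commutative monoid $(\cop, \del)$, a feature specific to the direct-sum monoidal product. Once the decomposition is established, applying $F$ and using $F(p_m^n) = [I_m \mid \cdots \mid I_m]$ immediately yields that $F(f_i)$ equals the $i$-th column of $F(f)$.
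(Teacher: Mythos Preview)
The paper does not give its own proof of this lemma; it is imported wholesale from \cite[Propositions~3.12 and 3.13]{Fritz09} and used as a black box (the paper only remarks that the decomposition underlies the faithfulness half of \cref{thm:fritzmain}). There is therefore no in-paper argument to compare your proposal against.

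That said, your outline is a sound normal-form argument and is in the same spirit as Fritz's original proof. One small caution: in $\ConvAlg$ the generator $\del$ has type $0\to 1$ (it is the \emph{unit} of the commutative monoid $(\cop,\del)$, not a discard map), so the identity $p_k^n;h = h^{\otimes n};p_m^n$ you isolate is precisely the statement that every morphism is a \emph{monoid} homomorphism for that structure. Your inner induction on $h$ is the right way to establish it; the base case for $\cc_\lambda$ is exactly the axiom \texttt{cccop} in \cref{fig:fritzaxiom}, and the parallel-composition step needs the ``thick'' monoid laws, which follow from associativity and commutativity of $\cop$ plus the SMC axioms. Note also that you cannot shortcut the argument by invoking the isomorphism $F$ from \cref{thm:fritzmain}, since the paper explicitly states that faithfulness of $F$ relies on this decomposition---so your syntactic route is the correct one.
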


\subsection{Enrichment of \texorpdfstring{$\FStoch$}{FStoch} with the Total Variation Distance}\label{sec:totalvarenrichment}

In this section we define an enrichment on $\FStoch$ based on the \emph{total variation} distance $\tv\colon \Dset X \times \Dset X \rightarrow [0,1]$, which is defined by $
		\tv(\dist,\distb) \coloneqq \max_{S \subseteq X} |\scalebox{0.8}{$\sum_{x \in S}$} \dist(x) - \scalebox{0.8}{$\sum_{x \in S}$} \distb(x)|$.

In the context of this paper, $\tv$ is a metric that can be defined on the set of morphisms $\FStoch(1,m)$ for any positive $m \in \bN$. Now, in order to define an enrichment of $\FStoch$ over $\enCatsym{\zeroinfQ}$, we still need pseudometrics on the other hom-sets. The following definition is somewhat natural (see \cref{rem:relativeKleisli}): for every $n,m \in \bN$, the metric $\tvmax$ on $\FStoch(n,m)$ is 
\begin{equation}\label{eqn:tvmax}
	\tvmax(A,B) = \max_{i \in \fset{n}}\tv(A_i,B_i),
\end{equation}
where $A_i$ is the $i$th column of $A$, and $\tv(A_i,B_i)$ is the total variation distance between the corresponding distributions.

We write $\FS$ for $\FStoch$ equipped with the metric $\tvmax$ on its hom-sets. After showing that sequential and parallel composition are nonexpansive relative to the monoidal product~$\boxtimes_{\vten}$ (from \cref{ex:quantalic_sum_metric}), we conclude the following.

\begin{theorem}\label{lem:fsenriched}
$\FS$ is a $\PMet_\otimes$-enriched SMC.
\end{theorem}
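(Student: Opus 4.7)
The claim is that $\FS$, i.e.\ $\FStoch$ equipped with the family of metrics $\tvmax$ defined in \cref{eqn:tvmax}, forms a $\PMet_{\otimes}$-enriched SMC. By the concrete characterisation of enriched monoidal categories recalled at the end of \cref{sec:prelim}, I need to verify three things: (i) each $(\FStoch(n,m), \tvmax)$ is a pseudometric space (valued in the Lawvere quantale $\zeroinfQ$); (ii) sequential composition is nonexpansive with respect to $\boxtimes_{\otimes}$, which since $\vten = +$ in $\zeroinfQ$ means
\[
  \tvmax(B; A, B'; A') \;\le\; \tvmax(A,A') + \tvmax(B,B');
\]
(iii) the direct sum $\oplus$ is nonexpansive in the same sense. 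Throughout, I use the conventional writing: for $A \colon n \to m$, the $i$th column $A_i$ is identified with a distribution in $\Dset\fset{m}$.

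\textbf{Step (i): pseudometric structure.} All three pseudometric axioms lift columnwise. The classical total variation $\tv$ is a pseudometric on $\Dset\fset{m}$ (in fact a genuine metric), so reflexivity $\tvmax(A,A)=0$, symmetry $\tvmax(A,B) = \tvmax(B,A)$, and the triangle inequality $\tvmax(A,C) \le \tvmax(A,B)+\tvmax(B,C)$ all follow by taking the maximum over columns of the corresponding properties of $\tv$.

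\textbf{Step (ii): nonexpansiveness of sequential composition.} Fix $A, A' \colon n \to m$ and $B, B' \colon m \to k$. For every column index $i \in \fset{n}$, the $i$th column of $B; A$ is the pushforward $B \cdot A_i = \sum_{j} (A_i)_j \, B_j$, a convex combination of the columns of $B$ weighted by $A_i$. By the triangle inequality of $\tv$,
\[
  \tv\bigl((B;A)_i, (B';A')_i\bigr) \;\le\; \tv\!\left(\sum_j (A_i)_j B_j,\ \sum_j (A'_i)_j B_j\right) + \tv\!\left(\sum_j (A'_i)_j B_j,\ \sum_j (A'_i)_j B'_j\right).
\]
The first summand is the data-processing inequality applied to the channel $B$: pushing $A_i$ and $A'_i$ through the same stochastic map can only decrease (or preserve) total variation, so this is bounded by $\tv(A_i, A'_i) \le \tvmax(A, A')$. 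The second summand is bounded using convexity of $\tv$: $\tv\bigl(\sum_j p_j \mu_j, \sum_j p_j \nu_j\bigr) \le \sum_j p_j \tv(\mu_j, \nu_j) \le \max_j \tv(\mu_j, \nu_j) = \tvmax(B,B')$. Taking the max over $i$ gives the required inequality. This is where all the analytical content lives, and I expect it to be the main obstacle: the two classical facts about $\tv$ (data processing and convex mixing) need to be invoked in the right order on the two halves of the bound.

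\textbf{Step (iii): nonexpansiveness of direct sum.} Given $A_0, A_1 \colon n \to m$ and $B_0, B_1 \colon n' \to m'$, each column of $A_s \oplus B_s$ is either a column of $A_s$ padded with zeros below or a column of $B_s$ padded with zeros above, and the added zeros do not affect $\tv$. Hence
\[
  \tvmax(A_0 \oplus B_0,\ A_1 \oplus B_1) = \max\bigl(\tvmax(A_0,A_1),\ \tvmax(B_0,B_1)\bigr) \;\le\; \tvmax(A_0,A_1) + \tvmax(B_0,B_1),
\]
using only nonnegativity for the final inequality. In fact, this already shows that $\oplus$ is nonexpansive with respect to the tighter product $\boxtimes_{\vmeet}$ as well, foreshadowing the mixed closure $\Hclosesummax{E_q}$ mentioned in \cref{rem:mixed_rules} and used in the next subsection. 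Combining (i)--(iii) and invoking the characterisation of $\PMet_{\otimes}$-enriched SMCs yields the theorem. \qed
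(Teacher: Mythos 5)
Your proof is correct, and on the key analytic step it takes a genuinely different route from the paper. The overall scaffolding is the same as the paper's (columnwise reduction of $\tvmax$ to $\tv$, the zero-padding argument showing $\tvmax(A_0\oplus B_0, A_1\oplus B_1)=\max\{\tvmax(A_0,A_1),\tvmax(B_0,B_1)\}$, and the appeal to the concrete characterisation of $\PMet_\otimes$-enriched SMCs via nonexpansiveness of $;$ and $\oplus$). The difference is in the proof that sequential composition is nonexpansive: you split $\tv\bigl((A;B)_i,(A';B')_i\bigr)$ by the triangle inequality into a ``same channel, different inputs'' term, handled by the data-processing inequality $\tv(B\mu,B\nu)\le\tv(\mu,\nu)$, and a ``same input weights, different channels'' term, handled by convexity of $\tv$. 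The paper instead works with the coupling characterisation $\tv(\dist,\distb)=\inf_{\omega\in\Cpl(\dist,\distb)}\omega(\{x\neq x'\})$: it rewrites both convex combinations over a coupling $\omega_i$ of the $i$th columns, applies convexity, and splits the resulting sum into its off-diagonal part (bounded by $\omega_i$ of the off-diagonal, using $\tv\le 1$) and diagonal part (bounded by $\tvmax$ of the second factors), finally optimising over couplings. Your version is more elementary in that it avoids couplings altogether (and the attendant appeal to attainment of the coupling infimum), at the cost of invoking the data-processing inequality as a known fact --- which is legitimate, since it has a one-line proof from the $\tfrac12\ell_1$ formula ($\tv(B\mu,B\nu)=\tfrac12\|B(\mu-\nu)\|_1\le\tfrac12\|\mu-\nu\|_1$ for column-stochastic $B$), and is of the same ``classical fact'' status as the convexity of $\tv$ that both you and the paper cite. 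Two cosmetic remarks: your ``$B;A$'' is written in applicative rather than the paper's diagrammatic order (the content is unaffected), and it would be worth stating explicitly, as the paper implicitly does, that the data-processing and convexity facts are being imported rather than proved.
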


\begin{remark}\label{rem:relativeKleisli}
	The definition of $\FS$ via $\tvmax$ \cref{eqn:tvmax} and the fact that it is enriched can be obtained more abstractly. First, we can show $\Dset$ lifts to an enriched relative monad on $\PMet_\otimes$ with the total variation metric by adapting \cite[Example~5.10]{Arkor2025} for the theory of LIB algebras in \cite[Section~8]{Mardare2016}. Then, we unroll the construction of the enriched relative Kleisli category in \cite[Proposition~8.21]{Arkor2024} to get the category (opposite to) $\FS$.
\end{remark}
\begin{remark}\label{rem:FSnotmaxenriched}
		The category $\FS$ is not enriched over $\PMet$ with the monoidal product $\boxtimes_{\vmeet}$ from \cref{ex:quantalic_max_metric}, because sequential composition is not nonexpansive for this monoidal product: 	with the matrices $A  = { \left[ \begin{smallmatrix}
		1&  0.5\\ 0& 0.5
   \end{smallmatrix} \right]}$, $B = A$ , $C = \left[ \begin{smallmatrix}
	   1\\0
   \end{smallmatrix} \right]$, and $C' = \left[ \begin{smallmatrix}
	   0.5\\0.5
   \end{smallmatrix} \right]$, we have $AC=\left[ \begin{smallmatrix}
	   1\\0
   \end{smallmatrix} \right]$ and $BC'=\left[ \begin{smallmatrix}
	   0.75\\0.25
   \end{smallmatrix} \right]$, thus $\tvmax(C;A, C';B) = \tfrac{3}{4} > \tfrac{1}{2} = \max\left\{ \tvmax(C,C'), \tvmax(A,B) \right\}$.																\end{remark}

\subsection{Quantitative Axiomatisation of \texorpdfstring{$\FS$}{FStoch\_tv}}\label{sec:qaxtv}

We introduce the quantitative monoidal theory that axiomatises $\FS$. 
\begin{definition}	\label{def:barycentric-theory}
	The $[0,\infty]$-quantitative monoidal theory $\BarAlg$ is defined as the tuple $(\Sigma, E, E_q)$ where $(\Sigma, E) = \ConvAlg$ (\cref{def:convtheory}) and $E_q$ contains, for each $\lambda \in [0,1]$,
	\begin{equation}\label{eqn:ca:tv}
						\scalebox{0.6}{\tikzfig{tvleft}}\ =_{\lambda} \scalebox{0.6}{\tikzfig{tvright}}\tag{TV}
	\end{equation}
	We write $\sync{\BarAlg}$ for the $\PMet_\otimes$-enriched SMC generated by $\BarAlg$ using the inference rules \RuleSeqsum, \RuleTensmeet, and \RuleSym, constructed according to \cref{def:synt-enr-cat}.
\end{definition}
Note that \cref{eqn:ca:tv} is adapted from the quantitative equations \textbf{LI} used in \cite[Definition~8.1]{Mardare2016}. Both sides of \cref{eqn:ca:tv} will give weight $1-\lambda$ to the second output, thus the parts where they differ will have weight at most $\lambda$. Hence, the distance between the results is at most $\lambda$. 
\begin{remark}
	The choice of inference rules used to generate $\sync{\BarAlg}$ is motivated by our goal to construct an enriched isomorphism $\sync{\BarAlg} \rightarrow \FS$. Indeed, by \cref{rem:FSnotmaxenriched} the rule \RuleSeqmeet\ would not be valid in $\FS$, and if we took the less strict  \RuleTenssum\ over \RuleTensmeet, nothing would guarantee validity of the latter in $\sync{\qU}$. But \RuleTensmeet\ is valid in $\FS$.\end{remark}

Our axiomatisation result now amounts to showing that the functor $F$ introduced in \cref{thm:fritzmain} is an isomorphism of enriched categories between $\sync{\BarAlg}$ and $\FS$. The following lemma, along with \cref{thm:model-to-enriched-model}, implies $F$ is enriched.

\begin{lemma}\label{lemma:tvF}
	The quantitative equations \cref{eqn:ca:tv} are true in~$F$.
\end{lemma}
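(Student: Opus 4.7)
The plan is to unfold the definition of truth at the model $F$. By \cref{def:sat-quant-eq}, the claim reduces to verifying $\lambda \vle \tvmax(F(L), F(R))$ for each $\lambda \in [0,1]$, where $L$ and $R$ denote the left- and right-hand sides of \eqref{eqn:ca:tv}. Since $\vle$ in the Lawvere quantale $\zeroinfQ$ is the reverse of the usual order on $[0,\infty]$, this is equivalent to the ordinary inequality $\tvmax(F(L), F(R)) \leq \lambda$.

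First I would compute $F(L)$ and $F(R)$ explicitly as stochastic matrices, using strict monoidality of $F$ (\cref{thm:fritzmain}) together with the assignments $F(\cc) = \left[\begin{smallmatrix}\lambda\\ 1-\lambda\end{smallmatrix}\right]$, $F(\cop) = [1\ 1]$, and $F(\del) = []$. Then, as the informal remark following \cref{def:barycentric-theory} indicates, each column $j$ of these matrices decomposes as
\[
F(L)_j \;=\; (1-\lambda)\,\mu_j + \lambda\,\nu_j^L, \qquad F(R)_j \;=\; (1-\lambda)\,\mu_j + \lambda\,\nu_j^R,
\]
for a common probability distribution $\mu_j$ (the shared ``second output'' contribution) and some distributions $\nu_j^L, \nu_j^R$. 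Verifying this decomposition is a direct diagram chase through \eqref{eqn:ca:tv}: the $\cc$ generator on each side splits the column into a $\lambda$-weighted branch, which is the only place $L$ and $R$ disagree, and a $(1-\lambda)$-weighted branch, which is shared between the two sides.

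Second I would invoke the elementary bound: for any distributions $\mu, \nu_1, \nu_2 \in \Dset(\fset{m})$ and $\lambda \in [0,1]$,
\[
\tv\!\bigl((1-\lambda)\mu + \lambda\nu_1,\ (1-\lambda)\mu + \lambda\nu_2\bigr) \;\leq\; \lambda.
\]
This is immediate from the definition of $\tv$: for any $S \subseteq \fset{m}$, the $(1-\lambda)\mu$ contribution cancels and the absolute difference of masses equals $\lambda \bigl|\sum_{x \in S}\nu_1(x) - \sum_{x \in S}\nu_2(x)\bigr| \leq \lambda$, since both sums lie in $[0,1]$.

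Finally, maximising over columns gives $\tvmax(F(L), F(R)) = \max_j \tv(F(L)_j, F(R)_j) \leq \lambda$, as required. The main obstacle is the first step, i.e.\ pinning down the shared $(1-\lambda)$-mass decomposition from the actual string-diagrammatic shape of \eqref{eqn:ca:tv}; once that is in place, the metric estimate is routine, and taking the supremum across columns only costs a max that does not inflate the bound.
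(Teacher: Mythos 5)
Your proposal is correct and follows essentially the same route as the paper: apply $F$ to both sides of \cref{eqn:ca:tv} using strict monoidality, obtaining the distributions $(\lambda,1-\lambda,0)$ and $(0,1-\lambda,\lambda)$ on $\fset{3}$, and check that their total variation distance does not exceed $\lambda$. The only cosmetic difference is that the paper computes this distance to be exactly $\lambda$ via the half-sum formula, whereas you bound it by $\lambda$ using the shared $(1-\lambda)$-mass on the middle output---which suffices, since truth in $F$ only requires $\lambda \vle \tvmax(F(L),F(R))$, i.e.\ $\tvmax(F(L),F(R)) \leq \lambda$ in the usual order.
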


To conclude that $F$ is actually an enriched isomorphism, it is enough to show it is locally an isometry. We first focus on the case of morphisms $1 \to m$, that is, probability distributions, and we recall a lemma used in the axiomatisation of Mardare et al.\begin{lemma}	\label{lem:splitting}\!\!\!\cite[Lemma~10.12]{BacciinBook}
	For any two distributions $\dist, \distb \in \Dset X$ with $\lambda = \tv(\dist,\distb)$, there exist three distributions $\dist',\distb', \distc \in \Dset X$ such that $\dist = \dist' +_{\lambda} \distc$ and $\distb = \distb' +_{\lambda} \distc.$
\end{lemma}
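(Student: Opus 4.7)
The plan is to construct $\dist', \distb', \distc$ explicitly from the pointwise minimum $m(x) \coloneqq \min(\dist(x), \distb(x))$, mirroring the classical coupling characterisation of total variation: the ``common part'' $m$ accounts for mass $1-\lambda$, while the two ``residual'' parts $(\dist - \distb)_+$ and $(\distb - \dist)_+$ each account for mass $\lambda$.

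First I would rewrite total variation in the more convenient form
\[
\lambda = \sum_{x \in X}(\dist(x) - \distb(x))_+ = \sum_{x \in X}(\distb(x) - \dist(x))_+,
\]
by observing that the maximum in the paper's definition of $\tv$ is attained on $S^\ast = \{x : \dist(x) \geq \distb(x)\}$, together with $\sum_x \dist(x) = \sum_x \distb(x) = 1$ which gives the second equality. The decomposition $\dist(x) = m(x) + (\dist(x) - \distb(x))_+$ then yields $\sum_x m(x) = 1 - \lambda$.

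Next, assuming $\lambda \in (0,1)$, I would define
\[
\distc \coloneqq \tfrac{1}{1-\lambda}\, m, \qquad \dist' \coloneqq \tfrac{1}{\lambda}(\dist - m), \qquad \distb' \coloneqq \tfrac{1}{\lambda}(\distb - m).
\]
By the mass identity above these are genuine probability distributions, and the required equations $\dist = \dist' +_\lambda \distc$ and $\distb = \distb' +_\lambda \distc$ follow pointwise from $\dist = (\dist - m) + m$ and $\distb = (\distb - m) + m$, after unfolding $+_\lambda$ as the convex combination $\alpha +_\lambda \beta = \lambda \alpha + (1-\lambda)\beta$.

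The main (mild) obstacle is the treatment of the boundary cases. When $\lambda = 0$ the hypothesis forces $\dist = \distb$, so taking $\distc \coloneqq \dist$ with any $\dist' = \distb'$ works, since $+_0$ places full weight on its second argument. When $\lambda = 1$ the common mass $m$ vanishes, so $\distc$ may be chosen arbitrarily (it is erased by the coefficient $1-\lambda = 0$), with $\dist' = \dist$ and $\distb' = \distb$. Each boundary case needs only a brief verification but must be handled separately to avoid the division by zero in the normalisations above.
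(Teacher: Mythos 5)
Your proof is correct. The paper does not prove this lemma itself --- it is imported from \cite[Lemma~10.12]{BacciinBook} --- and your argument is the standard one behind that reference: split each distribution into the common part $m(x)=\min(\dist(x),\distb(x))$, of total mass $1-\lambda$, and the residual parts $(\dist-\distb)_+$ and $(\distb-\dist)_+$, each of mass $\lambda$, then normalise; the identity $\lambda=\sum_x(\dist(x)-\distb(x))_+$ (maximum attained at $S^*=\{x:\dist(x)\geq\distb(x)\}$), the verification that the normalised pieces are finitely supported probability distributions, and the convention $(\alpha+_\lambda\beta)=\lambda\alpha+(1-\lambda)\beta$ all match the paper's setup, and your separate treatment of the degenerate cases $\lambda=0$ and $\lambda=1$ correctly avoids the division by zero.
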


We also introduce thick wires that represent the tensor of multiple wires. For example, for any morphism $f\colon 1 \rightarrow m$ in $\sync{\BarAlg}$, we can choose a representative diagram that we draw as $\scalebox{0.7}{\tikzfig{distnode}}$, 
where $\dist$ is the distribution corresponding to $F(f)$ (we often omit the number on top of the thick wire). Moreover, there are also thick versions of $\raisebox{0.3ex}{\scalebox{0.55}{\tikzfig{del}}}$ and $\raisebox{0.3ex}{\scalebox{0.55}{\tikzfig{cop}}}$ drawn as $\raisebox{0.2ex}{\scalebox{0.55}{\tikzfig{thickdel}}}$ and $\raisebox{0.2ex}{\scalebox{0.55}{\tikzfig{thickcop}}}$ respectively, which obey the same equations as their thin counterparts. In particular, we can show the following equation is in $\sync{\BarAlg}$ by induction.
\begin{equation}\label{eqn:deldistrib}
	\scalebox{0.7}{\tikzfig{deldist}}
\end{equation}
Now, a convex combination $\dist +_{\lambda} \distb$ can be represented diagrammatically with $\scalebox{0.5}{\tikzfig{convexcomb}}$, which facilitates a diagrammatic proof of the following result.

\begin{lemma}\label{lem:isoondist}
		The function $f \mapsto F(f)$ is a bijective isometry $\sync{\BarAlg}(1,m) \rightarrow \FS(1,m)$. 
\end{lemma}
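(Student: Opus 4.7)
The plan is to split the lemma into bijectivity and isometry. Bijectivity is immediate: $F\colon \sync{\ConvAlg}\to\FStoch$ is an isomorphism of SMCs by \cref{thm:fritzmain}, so its restriction to the hom-sets $(1,m)$ is a bijection; since $\sync{\BarAlg}$ and $\sync{\ConvAlg}$ share the same underlying category (and similarly $\FS$ and $\FStoch$), the same map is a bijection $\sync{\BarAlg}(1,m)\to\FS(1,m)$.

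For the isometry, I would prove $d^{\sync{\BarAlg}}(f,g) = \tv(F(f),F(g))$ by two inequalities. The easy direction, $\tv(F(f),F(g)) \vle d^{\sync{\BarAlg}}(f,g)$ in the Lawvere quantale $\zeroinfQ$, is a direct consequence of \cref{lemma:tvF} combined with \cref{thm:model-to-enriched-model}: the former shows the quantitative axioms of $\BarAlg$ are true in $F$, the latter then promotes $F$ to an enriched functor $\sync{\BarAlg}\to\FS$, and enriched functors are locally nonexpansive.

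The harder direction is the converse. Writing $\varphi = F(f)$, $\psi = F(g)$, and $\lambda = \tv(\varphi,\psi)$, the goal reduces to proving that $f =_\lambda g$ lies in the closure $\Pclosesummax{E_q}$. First, I would invoke the Splitting Lemma (\cref{lem:splitting}) to obtain distributions $\varphi', \psi', \tau$ with $\varphi = \varphi' +_\lambda \tau$ and $\psi = \psi' +_\lambda \tau$, and use bijectivity to pick syntactic preimages $f', g', t \colon 1 \to m$ in $\sync{\BarAlg}$. Next, I would express the convex combination $+_\lambda$ using the $\lambda$-weighted generator $\raisebox{0.2ex}{\scalebox{0.55}{\tikzfig{cc}}}$ composed with $\raisebox{0.3ex}{\scalebox{0.5}{\tikzfig{cop}}}$ to rewrite $f$ as the $\lambda$-convex combination of $f'$ and $t$, and $g$ as the $\lambda$-convex combination of $g'$ and $t$; these equalities hold in $\sync{\ConvAlg}$ (not merely at the semantic level) because $F$ is faithful and both sides of each identity have the same image under $F$. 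Finally, an instance of the quantitative axiom \cref{eqn:ca:tv}---which states precisely that two $\lambda$-convex combinations sharing the $(1-\lambda)$-component are at distance $\lambda$---combined with \RuleSeqsum\ and \RuleTensmeet\ to substitute $f',g'$ into the first slot and $t$ into the fixed second slot, yields $f =_\lambda g$ in $\Pclosesummax{E_q}$.

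The principal obstacle is this last direction: the Splitting Lemma is analytic, and its conclusion must be transported into a purely syntactic convex decomposition in $\sync{\ConvAlg}$ by exploiting faithfulness of $F$. One must then instantiate \cref{eqn:ca:tv} at the correct parameter $\lambda$ and substitute only via the inference rules available in $\Pclosesummax{E_q}$---notably using \RuleTensmeet\ rather than \RuleTenssum, as only the former is valid in $\FS$ by \cref{rem:FSnotmaxenriched} and is therefore the rule actually used in the closure enriching $\sync{\BarAlg}$.
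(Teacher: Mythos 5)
Your bijectivity argument and the easy inequality (via \cref{lemma:tvF} and \cref{thm:model-to-enriched-model}) are exactly the paper's, and the skeleton of the hard direction---\cref{lem:splitting}, transport of the splitting into $\sync{\ConvAlg}$ by faithfulness of $F$, then an instance of \cref{eqn:ca:tv}---is also the paper's. The gap is in your final step. The axiom \cref{eqn:ca:tv} is \emph{not} the schema ``two $\lambda$-convex combinations sharing their $(1-\lambda)$-component are at distance $\lambda$'': it is a single closed quantitative equation between two specific diagrams of type $1 \to 3$, whose images are the point distributions $(\lambda,1-\lambda,0)$ and $(0,1-\lambda,\lambda)$. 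In the variable-free monoidal calculus there is no substitution rule, so ``substituting $f'$ into the first slot on the left and $g'$ into the first slot on the right'' is not licensed by \RuleSeqsum\ or \RuleTensmeet: these are congruence rules, and they only allow composing the two sides of a quantitative equation with one and the same morphism (or with morphisms whose distance is itself controlled). Starting from your two-slot decompositions of $f$ and $g$ as $\lambda$-convex combinations of $f',t$ and of $g',t$, the congruence rules would require a bound on $d^{\BarAlg}(f',g')$, which is precisely what you lack---\cref{lem:splitting} typically produces $F(f')$ and $F(g')$ with disjoint supports---so this route cannot yield the bound $\lambda$. (A small side remark: nonexpansiveness of $F$ in the quantale order reads $d^{\BarAlg}(f,g) \vle \tvmax(F(f),F(g))$, i.e.\ $\tvmax(F(f),F(g)) \le d^{\BarAlg}(f,g)$ as reals; your $\vle$ is stated the other way around, though the intended content is clear.)

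What is needed, and what the paper does, is a \emph{three-slot} factorisation: using \cref{eqn:deldistrib} together with the thick unitality/associativity equations (or faithfulness of $F$ once more), one proves in $\sync{\ConvAlg}$ that $f$ equals the left-hand side of \cref{eqn:ca:tv} postcomposed with a single common diagram containing all three boxes $f'$, $t$, $g'$ followed by the thick merge (the $g'$-branch receiving weight $0$ via the third output), and that $g$ equals the right-hand side of \cref{eqn:ca:tv} postcomposed with the \emph{same} common diagram (now the $f'$-branch receives weight $0$). Then \RuleRefl\ on the common suffix, \RuleSeqsum\ with the instance of \cref{eqn:ca:tv}, and \RuleTriang\ twice give $f =_{\lambda} g$ in $\Pclosesummax{E_q}$; \RuleTensmeet\ is in fact not needed for this lemma (it only enters when extending to arbitrary hom-sets in \cref{thm:baralg-enriched-iso}). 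The three output wires of \cref{eqn:ca:tv} play the role of the distinct variables $x,y,z$ in Mardare et al.'s LIB axiom; your proposal implicitly imports the cartesian substitution rule, which is exactly the device that is unavailable in this setting and that the three-output routing is designed to replace.
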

\begin{proof}[Proof sketch]
													Given two morphisms $f,g\colon 1 \rightarrow m$, let their corresponding distributions be $\dist,\distb \in \Dset \fset{m}$, and let $\dist'$, $\distb'$, and $\tau$ be given by \cref{lem:splitting}. We provide a derivation in $\BarAlg$ of $f =_{\lambda} g$, with $\lambda = \tv(\dist,\distb) = \tvmax(F(f),F(g))$.
	\begin{align*}
		\raisebox{0.2ex}{\scalebox{0.65}{\tikzfig{dist}}} &=_0 \scalebox{0.65}{\tikzfig{distsplitthreesmall}}
		\stackrel{\text{by \cref{eqn:ca:tv}}}{=_{\lambda}} \scalebox{0.65}{\tikzfig{distbsplitthreesmall}}
		=_0 \raisebox{0.2ex}{\scalebox{0.65}{\tikzfig{distb}}}
	\end{align*}
	Both $=_0$ steps follow from hypotheses $\dist = \dist' +_{\lambda} \distc$, $\distb = \distc +_{1-\lambda} \distb'$, and \cref{eqn:deldistrib}. This shows $d^{\BarAlg}(f,g) \leq \tv(\dist,\distb)$. The converse inequality holds because $F$ is an enriched functor (\cref{lemma:tvF}), hence $F$ is an isometry on the hom-sets with domain $1$. It is bijective by \cref{thm:fritzmain}.	\end{proof}

\cref{lem:fritzdecompose} allows us to extend our argument to arbitrary morphisms of $\FS$.
\begin{theorem}
	\label{thm:baralg-enriched-iso}
	The functor $F\colon \sync{\BarAlg} \rightarrow \FS$ is an isomorphism of enriched categories.
\end{theorem}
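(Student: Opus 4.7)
The plan is to bootstrap from what we already have. By \cref{thm:fritzmain}, the underlying functor of $F$ is already an isomorphism of SMCs, so $F$ is bijective on each hom-set. It therefore suffices to prove that $F$ is locally an isometry, i.e.\ that $d^{\BarAlg}(f,g) = \tvmax(F(f),F(g))$ for every pair of parallel morphisms $f,g\colon n \to m$ in $\sync{\BarAlg}$.

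One inequality is essentially for free: by \cref{lemma:tvF} the quantitative equations \cref{eqn:ca:tv} are true in $F$, so \cref{thm:model-to-enriched-model} makes $F$ into a $\PMet_\otimes$-enriched functor, i.e.\ locally nonexpansive. Unfolding what nonexpansive means for the Lawvere quantale (whose order is reversed), this gives $\tvmax(F(f),F(g)) \leq d^{\BarAlg}(f,g)$.

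For the converse, I would lift \cref{lem:isoondist} from distributions to arbitrary stochastic matrices using the decomposition of \cref{lem:fritzdecompose}. Writing $f = (f_1 \otimes \cdots \otimes f_n);p_m^n$ and $g = (g_1 \otimes \cdots \otimes g_n);p_m^n$ with $F(f_i)$ (resp.\ $F(g_i)$) the $i$th column of $F(f)$ (resp.\ $F(g)$), \cref{lem:isoondist} gives $d^{\BarAlg}(f_i,g_i) = \tv(F(f_i),F(g_i))$ for every $i$. Since $\sync{\BarAlg}$ is closed under \RuleTensmeet and $\varepsilon \vmeet \varepsilon'$ is $\max$ in the Lawvere quantale, iterated applications of this rule yield
\[
 d^{\BarAlg}\bigl(f_1 \otimes \cdots \otimes f_n,\ g_1 \otimes \cdots \otimes g_n\bigr) \;\leq\; \max_i d^{\BarAlg}(f_i,g_i) \;=\; \tvmax(F(f),F(g)).
\]
Post-composing on both sides by $p_m^n$ and applying \RuleSeqsum (with $d^{\BarAlg}(p_m^n,p_m^n)=0$ by \RuleRefl) then preserves the bound, delivering $d^{\BarAlg}(f,g) \leq \tvmax(F(f),F(g))$.

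Combining the two inequalities gives the local isometry property, hence $F$ is an isomorphism of $\PMet_\otimes$-enriched SMCs. The main subtlety to be careful with is keeping the quantale order conventions straight: for the Lawvere quantale $\vten$ behaves as $+$ and $\vmeet$ as $\max$, which is exactly why the sum/max combination of closure rules \RuleSeqsum, \RuleTensmeet selected in \cref{def:barycentric-theory} matches the sequential/parallel nonexpansiveness of $\tvmax$ on $\FS$ (recall \cref{lem:fsenriched} and the failure of the seq-max rule noted in \cref{rem:FSnotmaxenriched}). Beyond that, the argument is a routine assembly: the genuinely nontrivial work was already done in \cref{lem:isoondist} via the splitting \cref{lem:splitting}.
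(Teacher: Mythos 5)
Your proposal is correct and follows essentially the same route as the paper's own proof: the enriched-functor direction via \cref{lemma:tvF} and \cref{thm:model-to-enriched-model}, and the converse via the decomposition of \cref{lem:fritzdecompose}, the validity of \RuleTensmeet\ and \RuleSeqsum\ in $\sync{\BarAlg}$, and \cref{lem:isoondist} on the column morphisms, concluding local isometry (hence enriched isomorphism by \cref{cor:characenrichediso}). Your remarks on the reversed order of the Lawvere quantale are exactly the right bookkeeping and match the paper's computation.
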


As we mentioned, our axiomatisation of total variation distance between stochastic matrices was inspired by Mardare et al.'s for distributions. We discuss the link between our work and quantitative algebraic theories in the following section.

\section{Comparison with Related Work: Cartesian vs Monoidal}\label{sec:related}

The work \cite{Bonchi2018} relates monoidal theories and (cartesian) algebraic theories, showing that terms of an algebraic theory $\qU$ correspond to string diagrams in a monoidal theory $\qU'$, where $\qU'$ only adds a natural commutative monoid structure to $\qU$. This follows by an isomorphism between the Lawvere category generated by $\qU$ and the SMC freely generated by $\qU'$ \cite[Theorem~6.1]{Bonchi2018}. In this section, we establish an analogous link between the unconditional quantitative algebraic theories of \cite{Mardare2016} and our quantitative monoidal theories (\cref{def:quantitative-monoidal-theory}), via the discrete enriched Lawvere theories of \cite{Power2005}. 

Recasting \cite{Mardare2016} (and \cite{Sarkis2024}, which generalises \cite{Mardare2016} to quantales), an \emph{unconditional $\qV$-quantitative algebraic theory} $\qU$ is a triple $(\Sigma,E,E_q)$, where $\Sigma$ is a signature of operations with coarity $1$, $E$ is a set of equations between cartesian terms (the standard terms from universal algebra), and $E_q$ is a set of $\qV$-quantitative equations between cartesian terms. Elements of $E_q$ correspond to quantitative equations of \cite{Mardare2016} with no premises ($\emptyset \vdash s =_{\varepsilon} t$), and they are called unconditional in \textit{loc.~cit.} Any such theory generates a discrete enriched Lawvere theory \cite[Definition~4]{Power2005} as follows.

\begin{definition}	\label{def:generated-dlawvth}
	The discrete $\qVCatmax$-Lawvere theory generated by $\qU$ is the $\qVCat$-category $\LT{\qU}$, where objects are natural numbers,  and morphisms $n \rightarrow m$ are $n$-tuples of cartesian terms with at most $m$ variables, e.g. $\langle f(x_1,x_3), x_3\rangle\colon 2 \rightarrow 3$, considered modulo the equations between terms derived in quantitative equational logic from the axioms in $E$ and $E_q$ (see \cref{fig:qeqlog} in the appendix inspired by \cite{Mardare2016,Sarkis2024}). Composition of morphisms is by substitution. 	The distance between morphisms is computed between terms as the join of derivable distances, and between tuples as the coordinatewise meet. Namely, if $\qU \vdash f =_{\varepsilon} g$ denotes that $f =_{\varepsilon} g$ is derivable from the axioms in $E$ and $E_q$, then $d_{\LT{\qU}}(\langle f_i \rangle, \langle g_i \rangle) = \vmeet_{i} \vJoin \left\{ \varepsilon \mid \qU \vdash f_i =_{\varepsilon} g_i \right\}$.
						\end{definition}
\begin{remark}\label{rem:enrichedLTiskleisli}
	Following \cite{Rosicky2024}, an equivalent description of $\LT{\qU}$ is as the restriction of the enriched Kleisli category for the free $\qU$-algebra monad to the discrete spaces on finite sets.\end{remark}
We will show $\LT{\qU}$ can be freely generated from the theory that combines $\qU$ with a natural cocommutative comonoid structure.
\begin{definition}	\label{def:associated-cart-quant-theory}
	The $\qV$-quantitative monoidal theory $\qU' = (\Sigma',E',E'_q)$ is defined by $\Sigma' \coloneqq \Sigma\sqcup\lbrace\raisebox{0.3ex}{\scalebox{0.4}{\tikzfig{delete}}},\,\raisebox{0.3ex}{\scalebox{0.4}{\tikzfig{copy}}}\rbrace$, $E'\coloneqq E\cup E^c\cup E^d$, and $E'_q \coloneqq E_q$, 
		where $E^c$ contains the equations making $\lbrace\raisebox{0.3ex}{\scalebox{0.4}{\tikzfig{delete}}},\,\raisebox{0.3ex}{\scalebox{0.4}{\tikzfig{copy}}}\rbrace$ into a cocommutative comonoid and $E^d$ the naturality equations $\scalebox{0.5}{\tikzfig{copynat}}$ and $\scalebox{0.5}{\tikzfig{deletenat}}$ for each in $f \in \Sigma$.				
				
		\end{definition}
Our constructions clearly treat the quantitative axioms and the distances separate from the rest. In other words, the underlying categories of $\LT{\qU}$ and $\sync{\qU'}$ are, respectively, the Lawvere category generated by $(\Sigma,E)$ and the prop generated by $(\Sigma',E')$. Hence, it readily follows from \cite[Theorem 6.1]{Bonchi2018} that the underlying categories are isomorphic. It remains to show this isomorphism is an isometry. At this point, it is important to note the choice of inference rules used to generate $\sync{\qU'}$: we take \RuleSeqsum\ and \RuleTensmeet. We also need to assume that $\qV$ is IJD at a technical point in the proof.

\begin{theorem}\label{thm:from-quant.mon.th-to-quant.mon.th}
	Let $\qU$ be an unconditional $\qV$-quantitative algebraic theory and $\qU'$ a $\qV$-quantitative monoidal theory constructed as in \cref{def:associated-cart-quant-theory}. There is an isomorphism of $\qVCat$-enriched categories between $\LT{\qU}$ and $\sync{\qU'}$.  
\end{theorem}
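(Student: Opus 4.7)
The plan is to leverage the classical isomorphism $\Phi\colon \sync{\qU'} \to \LT{\qU}$ of underlying categories from \cite[Theorem~6.1]{Bonchi2018}, which is identity on objects and sends a string diagram to the tuple of cartesian terms obtained by threading variables through the copy and delete generators added in $\qU'$, and to upgrade it to a $\qVCat$-enriched isomorphism by showing it is locally an isometry. Since both $d_{\sync{\qU'}}$ and $d_{\LT{\qU}}$ are defined as joins of derivable witnesses, the task reduces to reconciling cartesian derivations in the quantitative equational logic of \cref{fig:qeqlog} with monoidal derivations in the closure $\Hclosesummax{E_q}$ generating $\sync{\qU'}$.

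For the nonexpansive direction $d_{\LT{\qU}}(\Phi f, \Phi g) \sqsupseteq d_{\sync{\qU'}}(f, g)$, I would invoke \cref{thm:model-to-enriched-model}. It suffices to check that (i)~$\LT{\qU}$ is a $\qVCatsum$-enriched SMC in which \RuleSeqsum\ and \RuleTensmeet\ are valid, and (ii)~the quantitative equations in $E'_q = E_q$ are true in $\Phi$. Point~(ii) is immediate because these axioms are part of the derivation system generating $\LT{\qU}$, so they hold with the same $\varepsilon$'s. For~(i), nonexpansiveness of tensor with respect to $\vmeet$ is immediate because tensor of tuples is concatenation and the tuple distance is coordinatewise meet by definition of $\LT{\qU}$; nonexpansiveness of composition with respect to $\vten$ follows from the substitution and congruence rules of cartesian quantitative equational logic, combined with the triangle inequality.

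For the converse $d_{\sync{\qU'}}(f, g) \sqsupseteq d_{\LT{\qU}}(\Phi f, \Phi g)$, I would argue by induction on derivations of cartesian judgements $\qU \vdash s_j =_{\varepsilon_j} t_j$ for each coordinate $j$ of $\Phi f$ and $\Phi g$, showing that the corresponding diagrammatic judgement lies in $\Hclosesummax{E_q}$. The rules \RuleRefl, \RuleTriang, \RuleBot, \RuleMon, and \RuleJoin\ transfer verbatim; the axioms in $E_q$ (resp.\ in $E$) correspond to themselves (resp.\ to instances of \RuleRefl). The cartesian congruence rule $f(\vec t) =_{\vmeet_i \varepsilon_i} f(\vec s)$ is simulated by applying \RuleTensmeet\ to combine the translations of the $t_i$'s and $s_i$'s with meet distance $\vmeet_i \varepsilon_i$, then \RuleSeqsum\ to post-compose with $\Phi^{-1}(f)$; since the post-composition step carries distance $\top$, integrality of $\qV$ yields $\top \vten \vmeet_i \varepsilon_i = \vmeet_i \varepsilon_i$, matching the cartesian bound. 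The substitution rule is handled dually by pre-composing with the tensor of the substituent diagrams. Per-coordinate bounds assemble via \RuleTensmeet\ into a tuple bound that agrees with the coordinatewise meet definition of $d_{\LT{\qU}}$; here the IJD hypothesis is invoked to swap joins (arising from \RuleJoin\ applied pointwise) with meets (arising from tupling), so that $\vmeet_j \vJoin \{\varepsilon_j \mid \qU \vdash s_j =_{\varepsilon_j} t_j\}$ on the Lawvere side equals the join over diagrammatic meet-bounds on the monoidal side.

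The main obstacle is the congruence/substitution case, where one must reconcile the implicit variable sharing of cartesian terms with the explicit copy structure of $\sync{\qU'}$. A term such as $f(x,x)$ translates to a diagram routing through a copy node, and the burden is to verify that every monoidal derivation passing through such shared copies corresponds to a single-premise application of the cartesian congruence rule yielding the same bound. Integrality of $\qV$ absorbs the exact-equality $=_\top$ contributions from the naturality and comonoid equations in $E'$, and IJD licences the exchange of joins and meets in the hom-distance formula; together, these assumptions are what make the two derivational systems compute the same $\varepsilon$-bounds on every pair of morphisms, completing the enriched isomorphism.
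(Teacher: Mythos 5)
Your proposal follows essentially the same route as the paper: take the isomorphism $\Phi$ of underlying categories from \cite[Theorem~6.1]{Bonchi2018}, obtain local nonexpansiveness of $\Phi$ from \cref{thm:model-to-enriched-model} after checking that \RuleSeqsum\ and \RuleTensmeet\ are valid in $\LT{\qU}$ and that $E'_q$ is true in $\Phi$, and prove the converse inequality per coordinate by induction on cartesian derivations, simulating \textsc{SubQ} and \textsc{NExp} with \RuleSeqsum\ and \RuleTensmeet, the $=_\top$ contributions of the copy/substituent parts being absorbed by integrality ($\top\vten\varepsilon=\varepsilon$). This is the paper's proof in outline, including the treatment of variable sharing via the copy structure.

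The one point to correct is where you invoke IJD. In the converse direction it is not needed: once each coordinate satisfies $\vJoin\{\varepsilon \mid \qU \vdash s_j =_{\varepsilon} t_j\} \vle d_{\sync{\qU'}}(f_j,g_j)$, monotonicity of $\vmeet$ together with \RuleTenssum-free reasoning (just \RuleTensmeet\ giving $\vmeet_j d_{\sync{\qU'}}(f_j,g_j)\vle d_{\sync{\qU'}}(f,g)$) yields the tuple bound; no exchange of joins and meets occurs there. Where IJD is genuinely required is in the step you pass over in one clause, namely the validity of \RuleSeqsum\ in $\LT{\qU}$: after splitting $d_{\LT{\qU}}(s;s',t;t')$ via the triangle inequality, one must derive $t_i[s'_j/x_j] =_{d_{\LT{\qU}}(s',t')} t_i[t'_j/x_j]$, and since $d_{\LT{\qU}}(s',t')=\vmeet_j\vJoin\{\varepsilon \mid \qU\vdash s'_j=_{\varepsilon}t'_j\}$ is a meet of joins, one needs IJD to rewrite it as $\vJoin\{\vmeet_j\varepsilon_j \mid \forall j,\ \qU\vdash s'_j=_{\varepsilon_j}t'_j\}$ so that, for each choice of witnesses, an induction on the structure of $t_i$ using \textsc{NExp} gives the bound $\vmeet_j\varepsilon_j$ and \textsc{Cont'} then closes under the join. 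With IJD relocated to that lemma, your argument coincides with the paper's.
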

\begin{proof}[Proof sketch]
	We see the isomorphism between the underlying categories as a model of $(\Sigma',E')$ valued in $\LT{\qU}$. We apply \cref{thm:model-to-enriched-model} to show it is an enriched model. It remains to prove that the distance between cartesian terms in $\LT{\qU}$ is smaller than the distance between the corresponding diagrams in $\sync{\qU'}$. We do this by simulating all the rules in quantitative equational logic with the rules used to build the closure $\Hclosesummax{{E'_q}}$.
\end{proof}

In words, \cref{thm:from-quant.mon.th-to-quant.mon.th} shows that we can always extract the linear part of an unconditional quantitative algebraic theory just as we can extract the linear part of an algebraic theory.\begin{example}
	The theory of LIB algebras in \cite[Definition 8.1]{Mardare2016} is an unconditional quantitative algebraic theory, call it $\qU$. Unrolling \cref{rem:relativeKleisli,rem:enrichedLTiskleisli}, we find that the enriched Lawvere theory generated by $\qU$ is the opposite of $\FS$. Thus \cref{thm:from-quant.mon.th-to-quant.mon.th} provides a quantitative monoidal theory $\qU'$ and an enriched isomorphism $\sync{\qU'} \cong \FS^{\mathrm{op}}$.

	At first sight, this seems like another axiomatisation of the total variation distance complementary to \cref{thm:baralg-enriched-iso}. Further investigation shows that $\qU'$ and $\BarAlg$ are morally the same. This situation exactly mirrors the differences between the axiomatisations of $\FStoch$ in \cite[Example 6.2(c)]{Bonchi2018} and \cite{Fritz09}. Namely, the latter avoids redundant equations.
\end{example}

\section{Conclusions}\label{sec:conclusion}
Our work provides mathematical foundations to enhance monoidal algebra with quantitative equations. We are motivated by the increasing relevance of string diagrammatic calculi in areas such as quantum theory, machine learning, probabilistic programming, and circuit theory, in which quantitative reasoning plays a fundamental role. Our basic examples in \cref{sec:mat,sec:axiomatisetotalvariation} are intended merely as a proof-of-concept for our framework. More sophisticated examples, building on diagrammatic calculi for quantum \cite{Coecke2008,Coecke2017,lambeq}, probability theory \cite{Fritz2020,JacobsKZ21,Perrone2024,Piedeleu2025b}, and machine learning \cite{Cruttwell2022,Wilson2022}, deserve a separate development, which we will explore in future work.

A notable aspect of this work is the flexibility we provide to generate a syntactic category from a monoidal theory, where the different inference rules depend on which quantale operations we pick. This is due to how monoidal terms are formed differently from cartesian terms, and is motivated by the examples we developed. For instance, in $\BarAlg$, sequential and parallel composition are nonexpansive with respect to the sum and max metric respectively, so the rules \RuleSeqsum\ and \RuleTensmeet\ are used to generate $\sync{\BarAlg}$. To encompass more examples, one could devise other rules corresponding to enrichment over other monoidal products. One may also consider a logic whose judgments are inference rules (or implications) rather than quantitative equations, so that \RuleSeqsum\ and \RuleTensmeet\ become part of the theory $\BarAlg$.

Other questions concern the relation between cartesian and monoidal theories. Can \cref{thm:from-quant.mon.th-to-quant.mon.th} be obtained more abstractly via distributive laws like the non-quantitative result in \cite{Bonchi2018}? Also, in \cite{Mardare2016} and subsequent works,  distances between complex terms depend on the distances between variables used in those terms. For example, the construction of $\FS$ in \cref{rem:relativeKleisli} relies on the Kantorovich lifting of the distribution monad relative to the inclusion $\mathbf{FinSet} \hookrightarrow \mathbf{Met}$. If we use $\mathbf{FinMet} \hookrightarrow \mathbf{Met}$ instead, the distance between distributions depends on the finite metric space considered. Axiomatising this category would require the diagrammatic syntax to incorporate some quantitative information on the inputs and outputs: it is an open question how to represent it in monoidal algebra.
\bibliography{refs}
\appendix
\section*{Appendix}\label{appendix}
\renewcommand{\thesubsection}{\Alph{subsection}}
\counterwithin{theorem}{subsection}
\counterwithin{definition}{subsection}
\counterwithin{lemma}{subsection}
\counterwithin{remark}{subsection}
\counterwithin{proposition}{subsection}
\counterwithin{corollary}{subsection}
\counterwithin{table}{subsection}
\etocsettocstyle{}{}
\etocsettocdepth{2}
\localtableofcontents

\subsection{Axioms of Symmetric Strict Monoidal Categories (SMCs)}
\label[app]{App:SMC}

		\begin{equation}\label{eqn:smcax}
			\begin{gathered}
				\!\!\!\!\!\scalebox{.6}{\tikzfig{smc/sequential-associativity} \!=\! \tikzfig{smc/sequential-associativity-1}} \quad \scalebox{.6}{\tikzfig{smc/parallel-associativity} \!\!\!=\!\!\! \tikzfig{smc/parallel-associativity-1}} \\ \scalebox{.6}{\tikzfig{smc/interchange-law} \!\!=\!\!\tikzfig{smc/interchange-law-1} }
				\quad
				\scalebox{.6}{ \tikzfig{smc/parallel-unit-above} = \diagbox{c}{}{} =  \tikzfig{smc/parallel-unit-below}}
				\\
				\scalebox{.6}{\tikzfig{smc/unit-right} = \diagbox{c}{}{} = \tikzfig{smc/unit-left}}
				\\
				\scalebox{.6}{\tikzfig{smc/sym-natural}= \tikzfig{smc/sym-natural-1}}
				\qquad
				\scalebox{.6}{\tikzfig{smc/sym-iso} = \tikzfig{id2}}
			\end{gathered}
		\end{equation}

\subsection{Additional Background for \Cref{sec:prelim}}
\label[app]{app:proofs-prelim}
	
\begin{proposition}\label{rem:tensor_below_meet}
		In an integral quantale $\qV$, for all $a,b \in \qV$, $a \oplus b \sqsubseteq a \sqcap b$.
\end{proposition}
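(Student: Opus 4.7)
The plan is a direct three-line argument using integrality and monotonicity of the tensor. First, since $\qV$ is integral, the unit of the monoid coincides with the top element, so $k = \top$. In particular, for every $b \in \qV$ we have $b \sqsubseteq \top = k$, and symmetrically $a \sqsubseteq k$.

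Next, I would invoke the monotonicity of $\oplus$ in each argument, which is a standard consequence of join-continuity: if $x \sqsubseteq y$, then $y = x \sqcup y$, so $a \oplus y = a \oplus (x \sqcup y) = (a \oplus x) \sqcup (a \oplus y)$, whence $a \oplus x \sqsubseteq a \oplus y$ (and symmetrically on the other side, assuming join-continuity in both arguments as is standard for quantales). Applying this with the inequalities from the previous step gives $a \oplus b \sqsubseteq a \oplus k = a$ and $a \oplus b \sqsubseteq k \oplus b = b$.

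Finally, since $a \sqcap b$ is by definition the greatest lower bound of $\{a,b\}$, the two inequalities $a \oplus b \sqsubseteq a$ and $a \oplus b \sqsubseteq b$ together yield $a \oplus b \sqsubseteq a \sqcap b$, as required.

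There is no real obstacle here; the only subtlety is that the argument quietly uses join-continuity (and hence monotonicity) of $\oplus$ on both sides. Since the paper's definition states join-continuity only in one argument explicitly, I would either rely on the standard convention that quantales are join-continuous in each argument separately, or note that commutativity (which the paper assumes globally when needed for the sum hemimetric and for Section~\ref{sec:quant-mon-alg}) reduces the two-sided statement to the one given.
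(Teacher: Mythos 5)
Your proof is correct and follows essentially the same route as the paper: monotonicity of $\oplus$ derived from join-distributivity, then integrality to get $a \oplus b \sqsubseteq a$ and $a \oplus b \sqsubseteq b$, and the meet as greatest lower bound. The subtlety you flag about one-sided join-continuity is present (and handled the same way, implicitly) in the paper's own argument, which also needs monotonicity in both arguments or commutativity for the ``similarly'' step.
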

\begin{proof}
			We first prove that in any quantale $\vten$ is monotone in both arguments:
	\begin{align}
		x \vle y &\iff x \vjoin y = y \nonumber\\
		&\implies a \vten (x \vjoin y) = a \vten y \nonumber\\
		&\iff (a \vten x) \vjoin (a \vten y) = (a \vten y) \tag{Join distributivity}\\
		&\iff (a \vten x) \vle (a \vten y) \label{rem:monotone}
	\end{align}
	Second, we prove that, in any integral quantale, $a \vten b \vle a$:
			\begin{align*}
		b \vle \top &\implies a \vten b \vle a \vten \top \tag{by \cref{rem:monotone}}\\
		&\iff a \vten b \vle a \vten k \tag{$k = \top$}\\
		&\iff a \vten b \vle a
	\end{align*}
	Similarly, we have $a \vten b \vle b$, and thus $a \vten b \vle a \vmeet b$
\end{proof}

\subparagraph*{Enriched Categories}
	We only gloss over the technical details of enriched categories in the body of the paper, so we take some time here to formalise our discussion. Let us first recall the main definitions concerning enriched categories. The standard reference is \cite{KellyBook}. We fix an arbitrary monoidal category $\VV$. We denote its monoidal product $\boxtimes \colon \VV \times \VV \to \VV$, its monoidal unit $I$, its associators $\alpha_{a,b,c} \colon (a \boxtimes b) \boxtimes c \to a \boxtimes (b \boxtimes c)$, its left unitors $l_a \colon I  \boxtimes a \to a$, and its right unitors $r_a \colon a \boxtimes I  \to a$.
	
	\begin{definition}\label{def:v_enriched_cat}
		A small \emph{$\VV$-enriched category} $\C$ (or simply \emph{$\VV$-category}) consists of: 		\begin{itemize}
			\item A set $\Ob(\C)$ called the set of objects;
			\item For each pair $(a,b)$ of objects in $\C$, an object $\C(a,b) \in \Ob(\VV)$ called the hom-object;
			\item For each triple $(a,b,c)$ of objects in $\C$, a morphism $M_{a,b,c} \colon \C(b,c) \boxtimes \C(a,b) \to \C(a,c)$ in $\VV$ called composition;
			\item For each object $a$ in $\C$, a morphism $j_a \colon I  \to \C(a,a)$ in $\VV$ called the identity; 
		\end{itemize}
		They satisfy the following coherence conditions for all $a,b,c,d \in \Ob(\C)$:
		\begin{align}
			M_{a,c,d} \circ \left(\id_{\C(c,d)}  \boxtimes M_{a,b,c}\right) \circ \alpha _{\C(c,d),\C(b,c),\C(a,b)} &= M_{a,b,d} \circ \left(M_{b,c,d}  \boxtimes \id_{\C(a,b)}\right) \label{vcat:assoc} \\
			M_{a,b,b} \circ \left(j_b \boxtimes\id_{\C(a,b)}\right) &= l_{\C(a,b)}\label{vcat:lunitor}\\
			M_{a,a,b} \circ \left(\id_{\C(a,b)} \boxtimes j_a \right) &= r_{\C(a,b)}.\label{vcat:runitor}
		\end{align}
				
			\end{definition} 	
	\begin{definition}\label{def:vfunc}
		Let $\C$, $\D$ be small $\VV$-enriched categories. A \emph{$\VV$-enriched functor} (also called \emph{$\VV$-functor}) $F \colon \C \to \D$ consists of a function $F_0\colon \Ob(C) \to \Ob(D)$ and an $\left(\Ob(\C) \times \Ob(\C)\right)$-indexed collection of morphisms of $\VV$, written $F_{a,b} \colon \C(a,b) \to \D(F_0(a), F_0(b))$, such that for all $a,b,c \in \Ob(\C)$:
		\begin{align}
			F_{a,c} \circ M_{a,b,c} &= M_{F_0(a),  F_0(b), F_0(c)} \circ \left(F_{b,c} \boxtimes F_{a,c}\right)\label{vfunc:comp}\\
			j_{F_0(a)} &= F_{a,a} \circ j_a.\label{vfunc:id}
		\end{align}
	\end{definition}
	\begin{example}
		For $\VV = (\Set, \times, 1)$, the definitions above describe the usual small categories and functors between them. In many applications, $\VV$ is a category of sets equipped with structure (e.g.~groups, metric spaces, etc.). In such cases, $\VV$-categories are small categories whose hom-sets have extra structure preserved by composition, and $\VV$-functors are functors that locally (i.e.~on each hom-set) preserve this structure.
	\end{example}
		\begin{remark}\label{def:vcat_tensor}
	Monoidal categories are defined in ordinary category theory using the data of a bifunctor of type $\C \times \C \to \C$ called the monoidal product that obeys certain coherence conditions. Lifting this concept to enriched categories requires the construction of an enriched category $\C \times \C$ from an enriched category $\C$. When $\VV$ is a symmetric monoidal category, we can use the tensor product of $\VV$-categories defined in \cite[Section~1.4]{KellyBook} recalled below.
	\end{remark}
	\begin{definition}\label{def:app:vcat_tensor}
		Let $\VV$ be a symmetric monoidal category and let $\C$ be a $\VV$-category. The $\VV$-category $\C \times \C$ is defined as follows:
		\begin{itemize}
			\item The objects of $\C \times \C$ are pairs of objects of $\C$, i.e.~$\Ob(\C\times \C) = \Ob(\C)\times \Ob(\C)$.
			\item For each pair $\left((a,a'),(b,b')\right) \in \Ob(\C \times \C) \times \Ob\left(\C \times \C\right)$, we define $(\C\times\C)((a,a'),(b,b')) \coloneqq \C(a,b) \boxtimes \C(a',b')$.
			\item Let $(a,a'),(b,b'),(c,c') \in \Ob(C \times C)$. The composition morphism	$M_{(a,a'),(b,b'),(c,c')}$ 			is given by the following composite, 			where $\cong$ denotes an isomorphism constructed using the symmetric monoidal structure of $\VV$.
						\[\begin{tikzcd}[ampersand replacement=\&]
				\begin{array}{c} (\C\times\C)((b,b'),(c,c')) \otimes_{\VV} (\C\times\C)((a,a'),(b,b'))\\=\C(b,c)\otimes_{\VV} \C(b',c') \otimes_{\VV} \C(a,b) \otimes_{\VV} \C(a',b') \end{array} \\
				{\C(b,c)\otimes_{\VV} \C(a,b) \otimes_{\VV} \C(b',c') \otimes_{\VV} \C(a',b')} \\
				\begin{array}{c} \C(a,c) \otimes_{\VV} \C(a',c')\\= (\C\times\C)((a,a'),(c,c')) \end{array}
				\arrow["\cong", from=1-1, to=2-1]
				\arrow["{M_{a,b,c} \otimes_{\VV} M_{a',b',c'}}", from=2-1, to=3-1]
			\end{tikzcd}\]
			\item For each $(a,a') \in \Ob(\C \times \C)$ we define $j_{(a,a')}$ as the following composite.
															\[\begin{tikzcd}
	{I_{\VV}} & {I_{\VV}\otimes_{\VV} I_{\VV}} & \begin{array}{c} \C(a,a) \otimes_{\VV} \C(a',a') =(\C\times\C)((a,a'),(a,a')) \end{array}
	\arrow["\cong", from=1-1, to=1-2]
	\arrow["{j_a \otimes_{\VV} j_{a'}}", from=1-2, to=1-3]
	\end{tikzcd}\]
		\end{itemize}

	\end{definition}

	\subparagraph*{\texorpdfstring{$\qVCat$}{VHMet} Enrichment} We carefully develop our discussion on the characterisation of $\qVCat$-enriched categories and functors.
	\begin{lemma}\label{lem:nexpcomptoenriched}
		Let $\C$ be a small category such that for each $a,b \in \Ob(\C)$ there is a $\qV$-hemimetric space $(\C(a,b),d_{a,b})$. If for all $f,f' \in \C(a,b)$ and $g,g' \in \C(b,c)$,
				\begin{equation}\label{eqn:nexpcomposition}
			(d_{b,c} \boxtimes d_{a,b})((g,f),(g',f')) \vle d_{a,c}(g \circ f, g' \circ f'),
		\end{equation}
		then there is a $\qVCat$-category with underlying category $\C$. The same result holds for $\qVCatsym$.			\end{lemma}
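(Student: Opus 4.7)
The plan is to construct the $\qVCat$-category data directly from $\C$ and verify the axioms of \cref{def:v_enriched_cat}. Take $\Ob(\C)$ as the set of objects, and for each pair $(a,b)$, take the hom-object to be $(\C(a,b), d_{a,b})$. For composition $M_{a,b,c} \colon \C(b,c) \boxtimes \C(a,b) \to \C(a,c)$, let the underlying function be composition in $\C$; the hypothesis~\eqref{eqn:nexpcomposition} says exactly that this function is nonexpansive with respect to $\boxtimes$, hence a morphism of $\qVCat$. For the identity $j_a \colon 1_{\boxtimes} \to \C(a,a)$, take the underlying function $\bullet \mapsto \id_a$. Since the paper assumes $\qV$ is integral (so $k = \top$), reflexivity of $d_{a,a}$ gives $\top = \top(\bullet, \bullet) \vle d_{a,a}(\id_a, \id_a)$, making $j_a$ nonexpansive.

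Next I would verify the coherence conditions \eqref{vcat:assoc}--\eqref{vcat:runitor}. The key observation is that the monoidal product $\boxtimes$ on $\qVCat$ (either $\boxtimes_{\vten}$ or $\boxtimes_{\vmeet}$) is built on top of the cartesian product on $\Set$: the underlying set of $(X, d_X) \boxtimes (Y, d_Y)$ is $X \times Y$, and the associators/unitors of $\qVCat$ have underlying functions equal to those of $(\Set, \times)$. Consequently, the diagrams of \cref{def:v_enriched_cat} commute in $\qVCat$ if and only if they commute on underlying functions. Since the underlying functions of $M_{a,b,c}$ and $j_a$ are the composition and identities of the classical category $\C$, the associativity and unitor equations already hold in $\Set$, and they lift automatically to equalities of $\qVCat$-morphisms.

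For the symmetric case, it suffices to observe that $\qVCatsym$ is a full monoidal subcategory of $\qVCat$ (\cref{ex:quantalic_sum_metric,ex:quantalic_max_metric}): if every $d_{a,b}$ is a $\qV$-pseudometric, then each hom-object already lives in $\qVCatsym$, and the same construction, with the same verifications, yields a $\qVCatsym$-enriched category.

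The main obstacle, if any, is bookkeeping rather than mathematical content: one must be careful to distinguish morphisms of $\qVCat$ from their underlying set functions, and to invoke the fact that $\boxtimes$ is the cartesian product equipped with a hemimetric (rather than something more exotic) so that the coherence diagrams of enriched category theory reduce, on underlying sets, to the corresponding diagrams in $\Set$ that are already satisfied by $\C$.
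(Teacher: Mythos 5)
Your construction and verification match the paper's own proof: same definitions of $M_{a,b,c}$ and $j_a$, nonexpansiveness of $j_a$ from reflexivity plus integrality ($k=\top$), and the coherence conditions reduced to the category axioms of $\C$ on underlying functions (the paper does this by explicit element-wise computation, which is the same argument phrased concretely). The remark that the pseudometric case follows because $\qVCatsym$ is a full monoidal subcategory is also in line with the paper, so the proposal is correct and essentially identical in approach.
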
		
\begin{proof}
		
		Let $M_{a,b,c} \colon \C(b,c) \boxtimes \C(a,c) \to \C(a,c)$ be the composition morphism defined by $M_{a,b,c}(g,f) \coloneqq g \circ f$ for any $f \in \C(a,b)$ and $g \in \C(b,c)$. Note that \cref{eqn:nexpcomposition} ensures that $M_{a,b,c}$ is a nonexpansive map, i.e.~a morphism in $\qVCat$. For any $a \in \Ob(\C)$, let $j_a \colon 1_{\boxtimes} \to \C(a,a)$ be defined by $j_a(\bullet) \coloneqq \id_a$. We have $\top(\bullet,\bullet)=\top=k\sqsubseteq d_{a,a}(\id_a, \id_a)$ because $d_{a,a}$ satisfies reflexivity (and $\qV$ is integral), thus $j_a$ is nonexpansive as well.	
	It remains to check the coherence conditions of \Cref{def:v_enriched_cat}.
	\begin{itemize}
		\item For \cref{vcat:assoc}, we can explicitly compute how both sides act on $\C(c,d) \boxtimes \C(b,c) \boxtimes \C(a,b)$. They send $(h,g,f)$ to $(h \circ g) \circ f$ and $h \circ (g \circ f)$ which are equal because $\C$ is a category so its composition is associative.		\item For \cref{vcat:lunitor}, both sides act on $1_{\boxtimes} \boxtimes \C(a,b)$ by sending $(\bullet,f)$ to $\id_b \circ f$ and $f$ which are equal because $\C$ is a category so the identity morphism is neutral for composition.		\item The proof of \cref{vcat:runitor} is symmetric to the previous point.
	\end{itemize}
	By construction, $\C$ is the underlying category of the resulting $\qVCat$-category.
									\end{proof}

\begin{lemma}\label{lem:characVfct}
		Let $\C$ and $\D$ be two $\qVCat$-categories and $F\colon \C \rightarrow \D$ be a functor between their underlying categories. If $F$ is locally nonexpansive, namely, the assignment $f \mapsto Ff$ is a nonexpansive map $\C(a,b) \rightarrow \D(Fa,Fb)$ for all $a,b \in \Ob(\C)$, then $F$ is the underlying functor of a $\qVCat$-functor.	\end{lemma}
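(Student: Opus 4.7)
The plan is to construct explicitly the data of a $\qVCat$-functor out of $F$ and verify that the coherence axioms of \cref{def:vfunc} reduce, via the concrete description of $\qVCat$-categories from \cref{lem:nexpcomptoenriched}, to the fact that $F$ is already a functor between the underlying ordinary categories.

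First, I define the object function $F_0 \colon \Ob(\C) \to \Ob(\D)$ to be the action of $F$ on objects. For each pair $(a,b) \in \Ob(\C)^2$, I define the hom morphism $F_{a,b} \colon \C(a,b) \to \D(Fa,Fb)$ in $\qVCat$ by $F_{a,b}(f) \coloneqq F(f)$. The hypothesis that $F$ is locally nonexpansive says exactly that each such $F_{a,b}$ is a nonexpansive function between the hemimetric spaces $(\C(a,b), d_{a,b})$ and $(\D(Fa,Fb), d_{Fa,Fb})$, hence a morphism in $\qVCat$. (The same argument works verbatim for $\qVCatsym$.)

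Second, I check the two coherence conditions of \cref{def:vfunc}. For \cref{vfunc:comp}, both sides are morphisms $\C(b,c) \boxtimes \C(a,b) \to \D(Fa,Fc)$ in $\qVCat$, so it suffices to check equality on elements (since morphisms in $\qVCat$ are functions). On a pair $(g,f)$, the left side gives $F_{a,c}(M_{a,b,c}(g,f)) = F(g \circ f)$, while the right side gives $M_{Fa,Fb,Fc}(F_{b,c}(g), F_{a,b}(f)) = F(g) \circ F(f)$. These coincide because $F$ preserves composition as a functor of the underlying categories. For \cref{vfunc:id}, both sides are morphisms $1_{\boxtimes} \to \D(Fa,Fa)$ and on the unique element $\bullet$ they give, respectively, $\id_{Fa}$ and $F_{a,a}(\id_a) = F(\id_a)$, which agree because $F$ preserves identities.

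Finally, by construction the underlying ordinary functor of the resulting $\qVCat$-functor has object part $F_0$ and maps a morphism $f \in \C(a,b)$ to $F_{a,b}(f) = F(f)$, so it coincides with $F$. There is no substantive obstacle here: the content of the lemma is just the observation that the coherence axioms for a $\VV$-functor, once unrolled in the concrete description of $\qVCat$-enrichment given in \cref{sec:prelim}, are pointwise equations that are already guaranteed by functoriality of $F$ on the underlying categories. The only thing that could fail if $F$ were not locally nonexpansive is that the hom morphisms $F_{a,b}$ would not even live in $\qVCat$, which is exactly what the hypothesis rules out.
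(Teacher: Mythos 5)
Your proposal is correct and follows essentially the same route as the paper: define $F_0$ and $F_{a,b}$ by the action of $F$, use local nonexpansiveness to see that each $F_{a,b}$ is a morphism in $\qVCat$, and reduce the coherence conditions \cref{vfunc:comp} and \cref{vfunc:id} to pointwise equations guaranteed by functoriality of $F$ on the underlying categories. You even spell out the pointwise checks that the paper only gestures at, so there is nothing to correct.
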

\begin{proof}
	Define $F_0 \colon \Ob(\C) \to \Ob(\D)$ by $F_0(a) \coloneqq F a$ for all $a \in \Ob(\C)$. Similarly, for any $a,b \in \Ob(\C)$ and a morphism $f \in \C(a,b)$, we define $F_{a,b}(f) \coloneqq Ff$. Since $F$ is locally nonexpansive, $F_{a,b}$ is indeed a morphism $\C(a,b) \rightarrow \D(F_0a,F_0b)$ in $\qVCat$. The coherence conditions \cref{vfunc:comp} and \cref{vfunc:id} are satisfied because $F$ is a functor between the underlying categories of $\C$ and $\D$ (explicit checks can be made as in the proof of \cref{lem:characVfct}). By construction, $F$ is the underlying functor of the resulting $\qVCat$-functor.
			\end{proof}

\begin{corollary}\label{cor:characenrichediso}
		Let $\C$ and $\D$ be two $\qVCat$-categories and $F\colon \C \rightarrow \D$ be an isomorphism between their underlying categories. If $F$ is locally an isometry, namely, the assignment $f \mapsto Ff$ is an isometry $\C(a,b) \rightarrow \D(Fa,Fb)$ for all $a,b \in \Ob(\C)$, then $F$ is an enriched isomorphism, namely, both $F$ and $F^{-1}$ are $\qVCat$-functors.
		\end{corollary}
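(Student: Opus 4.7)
The plan is to reduce the statement to \cref{lem:characVfct} applied to both $F$ and $F^{-1}$. The key observation is that a local isometry is, in particular, locally nonexpansive, since equality in $\qV$ certainly implies the inequality $d_{a,b}(f,f') \vle d_{Fa,Fb}(Ff,Ff')$ required for nonexpansiveness. Hence \cref{lem:characVfct} directly yields that $F$ is the underlying functor of a $\qVCat$-functor.

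The main point is therefore to show that $F^{-1}\colon \D \to \C$ is also locally an isometry, after which the same argument applied to $F^{-1}$ delivers the conclusion. First I would fix $a,b \in \Ob(\D)$ and $g,g' \in \D(a,b)$. Since $F$ is an isomorphism of underlying categories, there exist unique $f,f' \in \C(F^{-1}a,F^{-1}b)$ such that $Ff = g$ and $Ff' = g'$, namely $f = F^{-1}g$ and $f' = F^{-1}g'$. Using the hypothesis that $F$ is locally an isometry on the hom-set $\C(F^{-1}a,F^{-1}b)$, we get
\[
 d_{a,b}(g,g') = d_{a,b}(Ff,Ff') = d_{F^{-1}a,F^{-1}b}(f,f') = d_{F^{-1}a,F^{-1}b}(F^{-1}g,F^{-1}g'),
\]
so $F^{-1}$ is also a local isometry, hence locally nonexpansive.

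Now invoking \cref{lem:characVfct} a second time produces a $\qVCat$-functor whose underlying functor is $F^{-1}$. Since the enriched structure constructed in \cref{lem:characVfct} is canonical (each $F_{a,b}$ is defined as $f \mapsto Ff$), the enriched functors associated to $F$ and $F^{-1}$ are mutually inverse, giving an enriched isomorphism. I do not expect any genuine obstacle here: the only subtlety is being careful that ``isometry'' in $\qVCat$ means an equality of $\qV$-distances (not just a bi-nonexpansive bijection), so that the chain of equalities in the display above is automatic; once this is observed, the proof is a straightforward symmetric application of \cref{lem:characVfct}. The argument applies unchanged to the $\qVCatsym$ case.
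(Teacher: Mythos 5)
Your proposal is correct and follows essentially the same route as the paper: apply \cref{lem:characVfct} to $F$, show $F^{-1}$ is locally an isometry by transporting the isometry hypothesis along preimages, and apply \cref{lem:characVfct} again, with the two enriched functors being mutually inverse since they act as the underlying functors. No gaps.
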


\begin{proof}
	We know by \cref{lem:characVfct} that since $F$ is locally nonexpansive, it is a $\qVCat$-functor. Let $F^{-1} \colon \D \to \C$ be the inverse of the functor $F$. For any $f,g \in \D(a,b)$, since $F$ is locally an isometry, we have
	$$\scalebox{0.95}{$d_{F^{-1}a,F^{-1}b}(F^{-1}f,F^{-1}F) = d_{a,b}(FF^{-1}f,FF^{-1}g) = d_{a,b}(f,g),$}$$
	hence $F^{-1}$ is locally nonexpansive (it is even locally an isometry). Therefore, by \cref{lem:characVfct} again, $F^{-1}$ is a $\qVCat$-functor, and it is straightforward to check that it is the inverse of $F$ (the $\qVCat$-functors have the same action as their underlying functors).
	\end{proof}

	\begin{corollary}\label{lem:nexpmontoenrichedmon}
		Let $\C$ be a $\qVCat$-category, whose underlying category is equipped with a strict monoidal product $\otimes \colon \C \times \C \to \C$. Then, $\C$ is a strict $\qVCat$-enriched monoidal category if and only if  for all $a,b,c,d \in \Ob(\C)$, $f,f' \in \C(a,b)$, $g,g' \in\C(c,d)$,  
		\begin{equation}\label{eqn:nexptensorenrichedmon}
			(d_{a,b} \boxtimes d_{c,d})((f,f'),(g,g'))\sqsubseteq d_{a \otimes c, b \otimes d}(f \otimes g, f'\otimes g').
		\end{equation}
		The same result holds for $\qVCatsym$.
	\end{corollary}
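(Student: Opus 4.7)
The plan is to reduce this corollary to \cref{lem:characVfct} once we have unfolded \cref{def:enrichedmoncat} using the tensor product of $\qVCat$-categories recalled in \cref{def:app:vcat_tensor}. By \cref{def:enrichedmoncat}, $\C$ is a strict $\qVCat$-enriched symmetric monoidal category precisely when its underlying category is strict monoidal and the bifunctor $\otimes\colon \C \times \C \to \C$ is a $\qVCat$-functor, where the domain $\C \times \C$ carries the $\qVCat$-enrichment from \cref{def:app:vcat_tensor}. In particular, the hom-object from $(a,c)$ to $(b,d)$ in $\C \times \C$ is $\C(a,b) \boxtimes \C(c,d)$, which is the $\qV$-hemimetric space with distance $d_{a,b} \boxtimes d_{c,d}$.

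For the forward direction, assume $\C$ is a strict $\qVCat$-enriched monoidal category. Then $\otimes$ is a $\qVCat$-functor, whose local action sends a pair $(f,g)$ of morphisms to $f \otimes g$. Being a $\qVCat$-functor in particular entails that this local action is a nonexpansive map $\C(a,b) \boxtimes \C(c,d) \to \C(a \otimes c, b \otimes d)$, which, once written out, is precisely inequality \cref{eqn:nexptensorenrichedmon}.

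For the backward direction, assume the inequality holds for all choices of objects and morphisms. Read as a statement about the map $(f,g) \mapsto f \otimes g$ between the hom-objects $\C(a,b) \boxtimes \C(c,d) \to \C(a \otimes c, b \otimes d)$, this is exactly local nonexpansiveness of the underlying functor $\otimes\colon \C \times \C \to \C$. Invoking \cref{lem:characVfct} with the $\qVCat$-category structure on $\C \times \C$ from \cref{def:app:vcat_tensor}, we conclude that $\otimes$ is a $\qVCat$-functor. Combined with the hypothesis that $\otimes$ is a strict monoidal product on the underlying category, this shows $\C$ is a strict $\qVCat$-enriched monoidal category.

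The only delicate point is matching the hom-objects of $\C \times \C$ (as a $\qVCat$-category) with the $\boxtimes$ products appearing in the statement, which is immediate from \cref{def:app:vcat_tensor}; no real obstacle arises. The case of $\qVCatsym$ proceeds identically, since both $\boxtimes_{\vten}$ and $\boxtimes_{\vmeet}$ preserve symmetry of the hemimetric and \cref{lem:characVfct} applies uniformly to $\qVCat$ and $\qVCatsym$.
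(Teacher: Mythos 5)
Your proposal is correct and follows essentially the same route as the paper: equip $\C \times \C$ with the $\qVCat$-structure of \cref{def:app:vcat_tensor}, observe that \cref{eqn:nexptensorenrichedmon} is exactly local nonexpansiveness of $\otimes$, and invoke \cref{lem:characVfct} to conclude it is a $\qVCat$-functor as required by \cref{def:enrichedmoncat}. The only difference is that you also spell out the (immediate) forward direction, which the paper leaves implicit.
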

\begin{proof}
	We simply need to show that the monoidal product bifunctor is enriched. We make $\C \times \C$ into a  $\qVCat$-category using \Cref{def:app:vcat_tensor}. The hemimetric on hom-sets of $\C \times \C$ is defined in terms of distances on hom-sets of $\C$, namely $d_{(a,c), (b,d)} = d_{a,b} \boxtimes d_{c,d}$. Then, \cref{eqn:nexptensorenrichedmon} precisely says that $\otimes$ is locally nonexpansive. Thus, it is a $\qVCat$-functor by \Cref{lem:characVfct}.
\end{proof}

\subsection{Proofs of \Cref{sec:quant-mon-alg}}\label[app]{app:proofquantmonalg}

Recall the definition of the \emph{syntactic distance} $d^{\qU}$.
Let $\qU=(\Sigma,E,E_q)$ be a $\qV$-quantitative monoidal theory. For any $n,m \in \bN$ and $\Sigma$-terms $f,g\colon n \rightarrow m$, let the distance from $f$ to $g$ be given by
\begin{equation}\label{eqn:syntacticdistance}
	d^{\qU}_{n,m}(f,g) \coloneqq \vJoin \{ \varepsilon \mid f =_{\varepsilon} g \in\anyclose{E_q} \}.
\end{equation}
\begin{proof}[Proof of \Cref{lem:loc-hem-space}]
	Let us first prove the reflexivity and triangle inequality on $\Sigma$-terms.
		\begin{itemize}
		\item For any $\Sigma$-terms $f\colon n \rightarrow m$, we know that $f = f$ is provable from $E$, thus $f=_{\top} f \in\anyclose{E_q}$ by \RuleRefl. Hence, 
		\[k = \top = \vJoin \{ \varepsilon \mid f =_{\varepsilon} f \in\anyclose{E_q} \}=d^{\qU}_{n,m}(f,f).\]
		\item For any $\Sigma$-terms $f,g,h\colon n \rightarrow m$, we have the following derivation.
		\begin{align*}
			d^{\qU}_{n,m}&(f,g)\vten d^{\qU}_{n,m}(g,h) \\
			&=\vJoin \lbrace \varepsilon \mid f =_{\varepsilon} g \in\anyclose{E_q} \rbrace\vten\vJoin \lbrace \varepsilon' \mid g =_{\varepsilon'} h \in\anyclose{E_q} \rbrace
			& \\ & (\textrm{def.~of}\,d^\qU\,) \\
															& =
			\vJoin \lbrace \varepsilon\vten\varepsilon' \mid f =_{\varepsilon} g,\,g =_{\varepsilon'} h \in\anyclose{E_q}\rbrace
			& \\ & ( \textrm{join-continuity}) 
		\end{align*}
																																								On the other hand, we recall the definition of $d^\qU_{n,m}(f,h)$.
		$$d^\qU_{n,m}(f,h)\coloneqq\vJoin \lbrace \vartheta\mid f =_{\vartheta} h \in\anyclose{E_q}\rbrace$$
		Now, by \RuleTriang, we have that if $f=_\varepsilon g,\,g=_{\varepsilon'}h\in\anyclose{E_q}$, then $f=_{\varepsilon\vten\varepsilon'}h\in\anyclose{E_q}$. Hence, for any such $\varepsilon$ and $\varepsilon'$, 
		$$\varepsilon\vten\varepsilon'\vle d^\qU_{n,m}(f,h).$$
				We conclude the desired inequality holds:
		\begin{align*}
			d^{\qU}_{n,m}(f,g)&\vten d^{\qU}_{n,m}(g,h) \\
			&=
			\vJoin \lbrace \varepsilon\vten\varepsilon' \mid f =_{\varepsilon} g,\,g =_{\varepsilon'} h \in\anyclose{E_q}\rbrace \\
			&\vle d^\qU_{n,m}(f,h).
		\end{align*}
					\end{itemize}
	We can now show that $d^{\qU}_{n,m}$ is a well-defined function on $\syncat{\Sigma}{E}(n,m) \times \syncat{\Sigma}{E}(n,m)$. Namely, if $f = f'$ and $g = g'$ are provable from $E$, then $d^{\qU}(f,g) = d^{\qU}(f',g')$. By the triangle inequality, we have 
	\[d^{\qU}_{n,m}(f,f') \vten d^{\qU}_{n,m}(f',g') \vten d^{\qU}_{n,m}(g',g) \vle  d^{\qU}_{n,m}(f,g).\]
	But by \RuleRefl, we also have that $f =_{\top} f', g=_{\top} g' \in \anyclose{E_q}$, so $d^{\qU}_{n,m}(f,f') =  d^{\qU}_{n,m}(g',g) = \top = k$. Thus, we obtain $d^{\qU}_{n,m}(f',g') \vle d^{\qU}_{n,m}(f,g)$ and the symmetric inequation is proven similarly. It follows that $d^{\qU}_{n,m}$ is a $\qV$-hemimetric on $\syncat{\Sigma}{E}(n,m)$.

		Moreover, if $\anyclose{E_q}$ is closed under \RuleSym, then for any $\Sigma$-terms $f,g\colon n \rightarrow m$, 
					\[f=_\varepsilon g\in\anyclose{E_q}\,\Longleftrightarrow\,g=_\varepsilon f\in\anyclose{E_q}.\]
				Hence, 
		\begin{align*}
			d^{\qU}_{n,m}(f,g) &= \vJoin \{ \varepsilon \mid f =_{\varepsilon} g \in\anyclose{E_q} \} \\
			& =\vJoin \{ \varepsilon \mid g =_{\varepsilon} f \in\anyclose{E_q} \}=d^{\qU}_{n,m}(g,f) ,
		\end{align*}
				and $d^{\qU}$ is indeed a $\qV$-pseudometric.
																									\end{proof}

	\begin{proof}[Proof of \Cref{lem:subst-non-exp}]
		If we explicitly compute the left-hand side similarly to the triangle inequality part of \Cref{lem:loc-hem-space}, we get the following.
		\begin{align*}
			d^\qU_{n,m}(f_0,f_1) \ast& d^\qU_{m,\ell}(g_0,g_1) \\
			&=  \vJoin \lbrace \varepsilon \ast \varepsilon' \mid g_0 =_{\varepsilon} g_1,\,f_0 =_{\varepsilon' \in \anyclose{E_q}} f_1\rbrace,
		\end{align*} 
				where $\ast$ is one of $\{\vten,\vmeet\}$.
																																								Now, by \RuleSeqsum\ or \RuleSeqmeet, we know that if $g_0 =_{\varepsilon} g_1,\,f_0 =_{\varepsilon'} f_1 \in\anyclose{E_q}$, then $g_0 \circ f_0=_{\varepsilon \ast \varepsilon'}g_1 \circ f_1\in\anyclose{E_q}$. 
		Hence, 
		\begin{align*}
			d^\qU_{n,m}(f_0,f_1)\ast& d^\qU_{m,\ell}(g_0,g_1)   \\
			&=\vJoin \lbrace \varepsilon\ast \varepsilon' \mid f_0 =_{\varepsilon'} f_1,\,g_0=_{\varepsilon} g_1 \in\anyclose{E_q}\rbrace\\
			&\vle \vJoin \lbrace \vartheta \mid f_0;g_0=_{\vartheta}  f_1;g_1 \in\anyclose{E_q}\rbrace\\
			&=d^\qU_{n,\ell}(f_0;g_0\,,\,f_1;g_1).\qedhere
		\end{align*}
		
																					\end{proof}
	
	\begin{proof}[Proof of \Cref{def:synt-enr-cat}]
		This is a simple application of \cref{lem:nexpcomptoenriched,lem:nexpmontoenrichedmon} combined with \cref{lem:subst-non-exp,lem:tens-non-exp}. Let us make an example with $\Pclosesummax{E_q}$, the closure we will need in \cref{sec:axiomatisetotalvariation}. Looking at its row in \cref{tab:closures-quant-eq}, we need to show $\syncat{\Sigma}{E}$ is $\qVCatsymsum$-enriched monoidal.
		
		Since $\anyclose{E_q}$ is closed under \RuleSeqsum, \cref{lem:subst-non-exp} says that for any $f_0,f_1\colon n \rightarrow m$ and $g_0,g_1\colon m \rightarrow \ell$,
		\[d^\qU_{n,m}(f_0,f_1)\vten d^\qU_{m,\ell}(g_0,g_1) \vle d^\qU_{n,\ell}( f_0;g_0\,,\,f_1;g_1).\]
		This implies that sequential composition is a nonexpansive map of type
		\[{;}\colon\syncat{\Sigma}{E}(n,m) \boxtimes_{\vten} \syncat{\Sigma}{E}(m,\ell) \rightarrow \syncat{\Sigma}{E}(n,\ell).\]
		Therefore, we conclude by \cref{lem:nexpcomptoenriched} that $\syncat{\Sigma}{E}$ is $\qVCatsymsum$-enriched.
		
		Next, by closure under \RuleTensmeet, \cref{lem:tens-non-exp} says that for any $f_0,f_1\colon n \rightarrow n'$ and $f_1,g_1\colon m \rightarrow m'$,
		\begin{equation}\label{eqn:nexpparcomppsummeet}
			d^\qU_{n,n'}(f_0,f_1)\vmeet d^\qU_{m,m'}(g_0,g_1)\vle d^\qU_{n+m,n'+m'}(f_0\tensor g_0,f_1\tensor g_1).
		\end{equation}
		By \cref{rem:tensor_below_meet} and transitivity of $\vle$, we obtain
		\[d^\qU_{n,n'}(f_0,f_1)\vten d^\qU_{m,m'}(g_0,g_1)\vle d^\qU_{n+m,n'+m'}(f_0\tensor g_0,f_1\tensor g_1),\]
		so parallel composition is a nonexpansive map of type
		\begin{equation}\label{eqn:parcomppsumsum}
			{\tensor}\colon\syncat{\Sigma}{E}(n,n') \boxtimes_{\vten} \syncat{\Sigma}{E}(m,m') \rightarrow \syncat{\Sigma}{E}(n+n',m+m').
		\end{equation}
		Therefore, we conclude by \cref{lem:nexpmontoenrichedmon} that $\syncat{\Sigma}{E}$ is $\qVCatsymsum$-enriched monoidal.
											\end{proof}
	
	\begin{proof}[Proof of \Cref{thm:model-to-enriched-model}]

						By \cref{lem:characVfct}, to show that $M$ is enriched, it suffices to show that for any $\Sigma$-terms $f,g$,
		\[d^{\qU}(f,g) \vle d^{\C}(Mf,Mg).\]
		By definition of $d^{\qU}$ \cref{eqn:syntacticdistance}, this inequality holds if and only if for any $\varepsilon$ such that $f=_{\varepsilon} g \in \anyclose{E_q}$, $\varepsilon \vle d^{\C}(Mf,Mg)$. Equivalently, we need to show that all of $\anyclose{E_q}$ is true in $M$. This readily follows from the fact that all of $E_q$ is true in $M$, and all the inference rules used to generate $\anyclose{E_q}$ are valid in $\C$.
	\end{proof}
	
		\begin{proof}[Proof of \Cref{thm:completeness}]
				Assume that $f =_{\varepsilon} g$ is true in all $\anyclose{E_q}$-models of the theory $\qU$. Observe that the canonical identity functor $\syncat{\Sigma}{E} \to \syncat{\Sigma}{E}$ can be extended to an enriched functor of type $\sync{\qU} \to \sync{\qU}$, which is a $\anyclose{E_q}$-model of $\qU$. By \Cref{lem:characVfct}, we have that $\varepsilon \vle d^{\qU}_{n,m}(f, g)$. Recall that \[d^{\mathcal{U}}_{n,m}(f,g) = \vJoin \{\varepsilon' \mid f =_{\varepsilon'} g \in\anyclose{E_q} \}.\]
				Using \RuleJoin, we can deduce that $f =_{d^{\mathcal{U}}_{n,m}(f,g)} g \in\anyclose{E_q}$.
				Finally, using \RuleMon, we can derive $f=_\varepsilon g\in\anyclose{E_q}$. \qedhere

			\end{proof}
			\begin{remark}
				We proved that \cref{eqn:nexpparcomppsummeet} implies nonexpansiveness of \cref{eqn:parcomppsumsum}, but the former is strictly stronger than the latter. Indeed, \cref{eqn:nexpparcomppsummeet} is equivalent to parallel composition being a nonexpansive map of type
				\begin{equation}\label{eqn:parcomppmeet}
					\scalebox{0.91}{${\tensor}\colon\syncat{\Sigma}{E}(n,n') \boxtimes_{\vmeet} \syncat{\Sigma}{E}(m,m') \rightarrow \syncat{\Sigma}{E}(n+n',m+m')$}.
				\end{equation}
				Equivalently, it says that \RuleTensmeet\ is valid.
							\end{remark}
			
			\subsubsection{Summary of Different Closures}
			In \Cref{tab:closures-quant-eq}, we give an overview of the various possible closures. The first column lists the closures. The second column lists the additional rules applied beyond \RuleRefl, \RuleBot, \RuleTriang, \RuleMon, and \RuleJoin. The third indicates whether we need the IJD property. 	The fourth specifies the base of enrichment of the syntactic category that will be constructed (in the next section) using each closure. The last column has references to examples in this paper employing the respective closure.
	\begin{center}
		\begin{table}[!ht]
			\begin{center}
				
			\renewcommand{\arraystretch}{1.5}
		
			\begin{tabular}{|c|c|c|c|c|}
				\hline
								\textbf{} & \textbf{Rules} &  \textbf{IJD} &  \textbf{Enrichment} & \textbf{Example} \\ \hline
				
				$\Hclosesumsum{E_q}$	
				& \RuleSeqsum ,\RuleTenssum        
				&  No   
				& $\qVCatsum$          
				& \cref{sec:preoder-matrix}            \\ \hdashline[0.5pt/5pt]
				
				$\Pclosesumsum{E_q}$	
				& '' + \RuleSym       
				&  No 
				& $\qVCatsymsum$          
				& 				\\ \hdashline[0.5pt/5pt]
				
				$\Hclosemaxsum{E_q}$ 
				& \RuleSeqmeet ,\RuleTenssum    
				& Yes
				& $\qVCatsum$                     
				&     \\ \hdashline[0.5pt/5pt]
				
				$\Pclosemaxsum{E_q}$ 
				& '' + \RuleSym   
				& Yes
				& $\qVCatsymmax$                     
				&      \\ \hdashline[0.5pt/5pt]
				
				$\Hclosesummax{E_q}$ 	
				&   \RuleSeqsum ,\RuleTensmeet                              
				& Yes      
				& $\qVCatsum$ 
				&            \\ \hdashline[0.5pt/5pt]
				
				$\Pclosesummax{E_q}$ 	
				&   '' + \RuleSym                              
				& Yes      
				& $\qVCatsymsum$ 
				& \cref{sec:totalvarenrichment}             \\ \hdashline[0.5pt/5pt]
				
				$\Hclosemaxmax{E_q}$ 	
				&   \RuleSeqmeet ,\RuleTensmeet                              
				&   Yes 
				& $\qVCatmax$ 
				&       \\ \hdashline[0.5pt/5pt]
				
				$\Pclosemaxmax{E_q}$ 	
				&   '' + \RuleSym                            
				&   Yes 
				& $\qVCatsymmax$ 
				& 				\\ \hline
			\end{tabular}
			\caption{Different choices of closures for $E_q$.}
			\label{tab:closures-quant-eq}
		
		\end{center}
		\end{table}
	\end{center}

\subsection{Proofs of \Cref{sec:mat}}
\label[app]{app:proofs-matrices}

\begin{proof}[Proof of \Cref{lem:matrix-representation}]
	Given $f\colon n \rightarrow m$, Item 2 gives no choice on the scalars $f_{ij}$, they must be the entries of the matrix $F(f)$.
	By definition, $F$ sends the tensoring $\bigotimes_{i \in \fset{m},j \in \fset{n}} f_{ij}$ to a diagonal matrix containing all the scalars in some fixed order $f_{11}, f_{21}, \dots f_{m1},f_{12},\dots, f_{mn}$.
		We define $w^n_m$ and $b^n_m$ as the preimages of the following matrices:
	\[F(w^n_m) = \begin{bmatrix}
		\mathbf{I}_m & \overset{n}{\cdots}  &\mathbf{I}_m
	\end{bmatrix} \qquad F(b^n_m) = \begin{bmatrix} 1_R\\ \vdots  \\1_R \end{bmatrix} \oplus \overset{n}{\cdots}\oplus  \begin{bmatrix} 1_R\\ \vdots \\1_R \end{bmatrix}.\]			One can verify (by direct computation) that the $(i,j)$-entry of the matrix $F(b^n_m);F\left( \bigotimes_{i \in \fset{m},j \in \fset{n}} f_{ij} \right);F(w^n_m)$ is $f_{ij}$. Therefore,  applying $F$ to both sides of the equation in Item 1 yields the same matrix. Since $F$ is fully faithful by \cref{prop:mataxiom}, Item 1 must hold.
\end{proof}

We restate the definition of the entrywise preorder for future reference:
\begin{equation}\label{eq:preorderdef}
	A\leq B \text{ if and only if, for all }i,j, \ A_{ij}\leq B_{ij} 
	\text{ in } R.
\end{equation}
\begin{proof}[Proof of \Cref{thm:enrichmentMatR}]
	As the enrichment is given by the preorder~\cref{eq:preorderdef}, all we need to check is that both composition and the monoidal product in $\Mat_R$ are order-preserving, then apply \cref{lem:nexpcomptoenriched} and \cref{lem:nexpmontoenrichedmon}.
	
	For composition, this boils down to showing that, for any matrices $A,A'\in\Mat_{R}(n,m)$ and $B,B'\in\Mat_{R}(m,t)$, if $(B,A)\leq (B',A')$ then $BA\leq B'A'$. Let us look at the $(i,j)$-entry of $BA$.
	\begin{align*}
		(BA)_{ij} & =\Sigma_{k=1}^mB_{ik}A_{kj} \\
		& \leq \Sigma_{k=1}^mB'_{ik}A'_{kj} \\
		& =(B'A')_{ij}
	\end{align*}
	The inequality holds by \cref{ass:orderring} using that $(B_{ik},A_{kj})\leq (B'_{ik},A'_{kj})$ implies $A_{kj}\leq A'_{kj}$ and $B_{ik}\leq B'_{ik}$. Therefore, we get $B_{ik}A_{kj}\leq B'_{ik}A'_{kj}$. 
	
	For the monoidal product, we need to check that, for any matrices $N,N'\in\Mat_{R}(n,n')$ and $M,M'\in\Mat_{R}(m,m')$, if $(N,M)\leq (N',M')$, then $N\oplus M\leq N'\oplus M'$. Since $\leq$ is checked entrywise according to~\cref{eq:preorderdef}, this follows directly by definition of~$\oplus$.  
\end{proof}

\subsection{Probability Distributions and the Distribution Monad}\label[app]{app:distrmonad}

\begin{definition}[Probability distributions]
	Given a set $X$, a \emph{(probability) distribution} on $X$ is a function $\dist\colon X \rightarrow [0,1]$ satisfying $\sum_{x \in X} \dist(x) = 1$. We call $\dist(x)$ the \emph{weight} of $\dist$ at $x$, and we write $\dist(S)$ for the weight of $\dist$ on $S \subseteq X$, namely, $\dist(S) = \sum_{x \in S}\dist(x)$. The \emph{support} of $\dist$ is the subset $\supp{\dist} \subseteq X$ containing all elements where $\dist$ assigns nonzero weight. In the sequel, distributions are assumed to have a finite support.
	
	We denote with $\Dset X$ the set of finitely supported distributions on $X$, formally,
	\[\Dset X = \{\dist\colon X \rightarrow [0,1] \mid \scalebox{0.8}{$\displaystyle\sum_{x \in X}$} \dist(x) = 1 \text{ and } |\supp{\dist}| < \infty\}.\]
	For any function $f\colon X \rightarrow Y$, $\Dset f \colon \Dset X \rightarrow \Dset Y$ denotes the \emph{pushforward} map, it sends a distribution $\dist \in \Dset X$ to the distribution $\Dset f(\dist) \colon Y \rightarrow [0,1]$ defined by
	\[\Dset f(\dist)(y) = \dist(f^{-1}(y)) = \sum_{x \in f^{-1}(y)} \dist(x).\]
	In words, the weight of $\Dset f(\dist)$ at $y \in Y$ is the total weight of $\dist$ on the preimage of $y$ under $f$. This yields a functor $\Dset\colon \Set \rightarrow \Set$.
\end{definition}

The functor $\Dset\colon \Set \rightarrow \Set$ has a monad structure. We will not need the details here, but it is useful to recall the notion of convex combination, which is part of the algebraic theory of the monad. For any $p \in [0,1]$ and $\dist, \distb \in \Dset X$, their \emph{convex combination} $\dist +_p \distb$ is a distribution over $X$ defined by 
\[(\dist+_p\distb)(x) = p\dist(x) + (1-p)\distb(x).\]

The Kleisli category of the monad $\Dset\colon \Set \rightarrow \Set$ is often seen as a category of (discrete) stochastic processes (sometimes also called Markov kernels~\cite{Fritz2020}). We will study its full subcategory containing only finite sets $\fset{n} = \{0,\dots, n-1\}$ as objects, which may be defined concretely as follows. Very concisely, $\FStoch$ is the wide symmetric monoidal subcategory of $\Mat_{[0,1]}$ containing all and only the matrices that are \emph{stochastic}, namely, whose columns all sum up to $1$. We report a direct definition of $\FStoch$ below, expanding the one provided in the main text as a subcategory of $\Mat_{[0,1]}$.

\begin{definition}[$\FStoch$]
	The symmetric monoidal category $\FStoch$ has objects the natural numbers, and a morphism $n \rightarrow m$ is an $m \times n$ \emph{stochastic matrix}, i.e.~a matrix with entries in the interval $[0,1]$ such that the sum of the entries in a column always equals $1$. The identity maps are identity matrices, and composition is computed by matrix multiplication, namely, if $A\colon n \rightarrow m$ and $A'\colon m \rightarrow \ell$ are two stochastic matrices, then $A;A'\colon n \rightarrow \ell$ is defined as $A'A$. When $n$ is $0$, there is a unique morphism $0 \rightarrow  m$ that we identify as the empty $m \times 0$ matrix denoted by $[]$. When $m = 0$, there are no morphisms $n \rightarrow 0$.
The monoidal product is given by addition on objects and by direct sum on matrices: given $A\colon n \rightsquigarrow  n'$ and $A' \colon m \rightsquigarrow  m'$, $A \oplus A' \colon (m+m') \times (n+n')$ is defined as $A \oplus A' \coloneqq \left[\begin{smallmatrix}
	\scriptstyle A & \scriptstyle\mathbf{0} \\
			\scriptstyle\mathbf{0} & \scriptstyle A'
\end{smallmatrix}\right]$. For each $n,m$, the symmetry $\sigma_{n,m}$ is $\left[\begin{smallmatrix}
	\scriptstyle\mathbf{0} & \scriptstyle\mathbf{I}_n\\
			\scriptstyle\mathbf{I}_m & \scriptstyle\mathbf{0}
\end{smallmatrix}\right]$.
\end{definition}

\subsection{Proofs of \Cref{sec:axiomatisetotalvariation}}\label[app]{app:proofstv}

The proof of \Cref{lem:fsenriched} relies on nonexpansiveness of sequential and parallel composition in $\FStoch$ with the chosen monoidal product being $\boxtimes_+$. These facts are encapsulated by the following two lemmas.

\begin{lemma}\label{lem:fsnexpcomp}
	For any matrices $A,B \in \FStoch(n,m)$ and $A',B'\in \FStoch(m,\ell)$,
	\[\tvmax(A;A', B;B') \leq \tvmax(A',A) + \tvmax(B',B).\]
\end{lemma}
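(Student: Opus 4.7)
The plan is to reduce the inequality to a column-by-column estimate, and then combine two well-known facts about total variation: (i) applying a stochastic kernel is nonexpansive in $\tv$ (the data-processing inequality), and (ii) $\tv$ is jointly convex in its arguments. Note first that the right-hand side should read $\tvmax(A,B) + \tvmax(A',B')$ (the statement as printed is slightly mis-typed since $A$ and $A'$ have incompatible shapes). Writing $\varepsilon = \tvmax(A,B)$ and $\varepsilon' = \tvmax(A',B')$, what we actually need is
\[
\tvmax(A;A', B;B') \;\leq\; \varepsilon + \varepsilon'.
\]

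First I would unfold the definition \cref{eqn:tvmax}: it suffices to show that for every column index $i \in \fset{n}$, $\tv((A;A')_i, (B;B')_i) \leq \varepsilon + \varepsilon'$. Using that $A;A' = A' \cdot A$ as matrices, the $i$th column of $A;A'$ is the distribution on $\fset{\ell}$ given by the convex combination $\sum_{j \in \fset{m}} A_i(j)\, A'_j$, where $A_i$ denotes the $i$th column of $A$ viewed as a distribution on $\fset{m}$ and $A'_j$ is the $j$th column of $A'$ viewed as a distribution on $\fset{\ell}$; analogously for $B;B'$.

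Next I would apply the triangle inequality for $\tv$ with the intermediate distribution $\sum_j B_i(j)\, A'_j$, splitting the quantity of interest into
\[
\tv\!\left(\textstyle\sum_j A_i(j) A'_j,\, \sum_j B_i(j) A'_j\right) \;+\; \tv\!\left(\textstyle\sum_j B_i(j) A'_j,\, \sum_j B_i(j) B'_j\right).
\]
The first summand is the $\tv$-distance between the images of $A_i$ and $B_i$ under the Markov kernel described by $A'$. The data-processing inequality (easily proved directly by expanding $\tv(\mu,\nu) = \tfrac12 \|\mu-\nu\|_1$ and using the triangle inequality inside the sum over $\ell$, together with the fact that the rows of any stochastic matrix sum to $1$ column-wise) bounds it by $\tv(A_i, B_i) \leq \varepsilon$. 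The second summand is the $\tv$-distance between two convex combinations with the same mixing weights $B_i$; by the standard convexity bound $\tv(\sum_j p_j \mu_j, \sum_j p_j \nu_j) \leq \sum_j p_j \tv(\mu_j, \nu_j)$ it is at most $\sum_j B_i(j)\, \tv(A'_j, B'_j) \leq \sum_j B_i(j)\, \varepsilon' = \varepsilon'$.

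Adding these and maximising over $i$ gives the desired inequality. I expect the two auxiliary inequalities (data processing and joint convexity of $\tv$) to be the only nontrivial pieces; both are classical and admit short proofs via the $\ell^1$-characterisation $\tv(\mu,\nu) = \tfrac12 \sum_x |\mu(x)-\nu(x)|$. The main conceptual obstacle — if any — is simply identifying the right intermediate distribution for the triangle inequality so that each term matches exactly one of the two classical bounds; everything else is routine manipulation of finite sums.
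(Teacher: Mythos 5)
Your proof is correct, and you are right that the right-hand side of the statement is a typo for $\tvmax(A,B)+\tvmax(A',B')$, which is exactly what the paper's own proof establishes. However, your route differs from the paper's. Both proofs start from the same observation that the $i$th column of a composite is the convex combination $(A;A')_i=\sum_{j}A_i(j)\,A'_j$, and both use convexity of $\tv$, i.e.\ $\tv(\sum_j p_j\mu_j,\sum_j p_j\nu_j)\le\sum_j p_j\tv(\mu_j,\nu_j)$. From there you insert the intermediate distribution $\sum_j B_i(j)A'_j$ (the $i$th column of $B;A'$), apply the triangle inequality, bound the first term by the data-processing inequality $\tv(A'\mu,A'\nu)\le\tv(\mu,\nu)$ (immediate from the $\ell^1$ formula \cref{eqn:tvcharacsum} and column-stochasticity of $A'$), and the second term by convexity. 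The paper instead avoids any intermediate composite: for each $i$ it picks a coupling $\omega_i\in\Cpl(A_i,B_i)$ of the mixing weights, rewrites \emph{both} mixtures with weights $\omega_i(j,j')$, applies convexity once, and then splits the resulting sum into the off-diagonal part (bounded by the mass $\omega_i$ puts off the diagonal, which via the coupling characterisation \cref{eqn:tvcoupling} is controlled by $\tv(A_i,B_i)$) and the diagonal part (bounded by $\tvmax(A',B')$). So your argument trades the coupling machinery for the classical data-processing inequality; it is somewhat more elementary, needs only the $\ell^1$ characterisation of $\tv$, and sidesteps the slightly delicate step in the paper where one must pass to the (optimal) coupling realising the infimum in \cref{eqn:tvcoupling}. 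The paper's coupling argument, on the other hand, keeps everything inside the single convexity estimate and reuses a characterisation of $\tv$ it has already recalled. Either way the column-wise bound $\tv((A;A')_i,(B;B')_i)\le\tvmax(A,B)+\tvmax(A',B')$ follows and the maximum over $i$ finishes the proof.
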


\begin{lemma}\label{lem:fsnexptensor}
	For any matrices $A,B \in \FStoch(n,m)$ and $A',B' \in \FStoch(n',m')$,
	\[\tvmax(A\oplus A', B \oplus B') \leq \tvmax(A,B) + \tvmax(A',B').\]
\end{lemma}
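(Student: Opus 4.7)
The plan is to compute $\tvmax(A \oplus A', B \oplus B')$ directly from the definition by analysing the columns of a direct sum. Recall that the $i$th column of $A \oplus A'$, viewed as a distribution on $\fset{m+m'}$, equals $(A_i, \mathbf{0})$ for $i < n$ and $(\mathbf{0}, A'_{i-n})$ for $n \leq i < n+n'$, where $A_i$ denotes the $i$th column of $A$, $A'_j$ the $j$th column of $A'$, and $\mathbf{0}$ a padding by zeros of the appropriate length.

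First I would observe that for any subset $S \subseteq \fset{m+m'}$, writing $S = S_1 \sqcup S_2$ with $S_1 \subseteq \fset{m}$ and $S_2 \subseteq \fset{m'}$ (under the canonical embeddings into $\fset{m+m'}$), the zeros in the padded columns contribute nothing. Hence for $i < n$, the quantity $|\sum_{k \in S}(A \oplus A')_{ki} - (B \oplus B')_{ki}|$ reduces to $|\sum_{k \in S_1} A_{ki} - B_{ki}|$, and the maximum over $S$ is realised by taking $S_2 = \emptyset$. Consequently $\tv((A \oplus A')_i, (B \oplus B')_i) = \tv(A_i, B_i)$. The symmetric analysis for the last $n'$ columns yields $\tv((A \oplus A')_{n+j}, (B \oplus B')_{n+j}) = \tv(A'_j, B'_j)$.

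Taking the maximum over all columns gives
\[
\tvmax(A \oplus A', B \oplus B') = \max\bigl(\tvmax(A,B),\; \tvmax(A',B')\bigr).
\]
Since $\tvmax(A,B)$ and $\tvmax(A',B')$ are both nonnegative reals, their maximum is bounded above by their sum, which yields the desired inequality.

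There is no serious obstacle here: the argument is essentially a bookkeeping exercise exploiting the block-diagonal shape of direct sums of stochastic matrices. The one subtlety worth noting is that the proof in fact establishes the stronger bound with $\max$ in place of $+$ on the right-hand side; this is precisely the nonexpansiveness of $\oplus$ with respect to $\boxtimes_{\vmeet}$ invoked in \cref{rem:FSnotmaxenriched} and exploited in \cref{def:barycentric-theory} to justify the use of \RuleTensmeet\ in the closure $\Pclosesummax{E_q}$. The statement of the lemma gives only the weaker $\boxtimes_{\vten}$-bound, which is all that is needed to complete the proof of \cref{lem:fsenriched} via \cref{lem:nexpmontoenrichedmon}.
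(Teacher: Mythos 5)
Your proof is correct and follows essentially the same route as the paper: both exploit the block-diagonal shape of $\oplus$ to show that zero-padding leaves column-wise total variation unchanged, deduce the stronger identity $\tvmax(A\oplus A', B\oplus B') = \max\{\tvmax(A,B),\tvmax(A',B')\}$, and conclude since a maximum of nonnegative reals is bounded by their sum. The only cosmetic difference is that the paper verifies the padding-invariance via the characterisation $\tv(\dist,\distb)=\tfrac{1}{2}\sum_x|\dist(x)-\distb(x)|$ rather than the subset-maximum definition you use.
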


In our proof of \Cref{lem:fsnexpcomp} we will make use of an equivalent definition of $\tv$, which relies on the notion of coupling. We recall it below.
\begin{definition}[Couplings]
	Given two distributions $\dist,\distb \in \Dset X$, a \emph{coupling} of $\dist$ and $\distb$ is a distribution in $\Dset (X\times X)$ such that, for any $x \in X$, the total weight on $\{x\} \times X$ is $\dist(x)$ and the total weight on $X \times \{x\}$ is $\distb(x)$. More concisely, the set of couplings of $\dist$ and $\distb$ can be defined as
	\[\Cpl(\dist,\distb) = \left\{ \omega \in \Dset(X\times X) \mid  \Dset \pi_1(\omega) = \dist, \Dset\pi_2(\omega) = \distb\right\},\]
	where $\pi_1,\pi_2\colon X\times X \rightarrow X$ are the projections. We also call $\Dset \pi_1(\omega)$ and $\Dset\pi_2(\omega)$ the \emph{marginals} of $\omega$.
\end{definition}
By \cite[Theorem 4]{Gibbs2002} or \cite[Proposition 5.2]{Jacobs2024}, the total variation distance between $\dist$ and $\distb$ is\begin{equation}\label{eqn:tvcoupling}
	\tv(\dist,\distb) = \inf\left\{ \sum_{x \neq x' \in X} \omega(x,x') \mid \omega \in \Cpl(\dist,\distb) \right\}.
\end{equation}

\begin{proof}[Proof of \Cref{lem:fsnexpcomp}]
	For any finite number of coefficients $p_1,\dots, p_m \in [0,1]$ with $\sum_{j \in \fset{m}}p_j = 1$, we use the notation $\bigplus_{j \in \fset{m}}p_j \cdot \dist_j$ for the convex combination of distributions $\dist_j$.

	We need two simple facts.
	
	\textbf{Fact 1.} Given two composable matrices $A \in \FStoch(n,m)$ and $A' \in \FStoch(m,\ell)$, the $i$th column of $A;A'$ is a convex combination of the columns of $A'$ given by
	\[(A;A')_{i} = \bigplus_{j \in \fset{m}}A_{ij}\cdot A'_{j},\]
	where $A_{ij}$ denotes the $j$th entry of the $i$th column. This is readily verified by direct computation (recall that $A;A'$ is the multiplied matrices $A'A$).
	
	\textbf{Fact 2.} The total variation distance is a convex function, namely, for any $p_1,\dots, p_m \in [0,1]$ with $\sum_{j \in \fset{m}}p_j = 1$, and $\dist_1,\dots, \dist_m, \distb_1,\dots, \distb_m, \in \Dset\fset{\ell}$,
	\begin{equation}\label{eqn:tvconvex}
		\tv(\bigplus_{j \in \fset{m}}p_j\cdot \dist_j,\bigplus_{j \in \fset{m}}p_j\cdot \distb_j) \leq \sum_{j \in \fset{m}}p_j\tv(\dist_j,\distb_j).
	\end{equation}
	This holds because $\tv$ is an instance of a Kantorovich metric, and those are always convex (see e.g.~\cite[Lemma 2.6.(4)]{Jacobs2020}).
	
	We are ready to prove that the map ${;}\colon \FS(n,m) \otimes \FS(m,\ell) \rightarrow \FS(n,\ell)$ is nonexpansive. We use the first fact to express the distance between composites as a maximum of distances between convex combinations:
	\begin{align*}
		\tvmax(A;A',B;B') &= \max_{i \in \fset{n}}\tv((A;A')_{i}, (B;B')_{i})\\
		&= \max_{i \in \fset{n}}\tv( \bigplus_{j \in \fset{m}}A_{ij}\cdot A'_{j},  \bigplus_{j' \in \fset{m}}B_{ij'}\cdot B'_{j'}).\\
	\end{align*}
	Then, for every $i \in \fset{n}$, if we pick $\omega_i \in \Cpl(A_{i},B_{i})$, we have the following derivation:
	\begin{align*}
		\tv( &\bigplus_{j \in \fset{m}}A_{ij}\cdot A'_{j},  \bigplus_{j' \in \fset{m}}B_{ij'}\cdot B'_{j'})\\
		&=\tv( \bigplus_{j,j' \in \fset{m}}\omega_i(j,j')\cdot A'_{j},  \bigplus_{j,j' \in \fset{m}}\omega_i(j,j')\cdot B'_{j'})\\
		&\leq \sum_{j,j' \in \fset{m}}\omega_i(j,j')\tv(A'_{j},B'_{j'}).
							\end{align*}
	The equality holds because $\omega_i$ is a coupling of $A_{i}$ and $B_{i}$, and the inequality holds by the second fact. We decompose the last sum in two parts, one where $j\neq j'$ and one where $j=  j'$:
	\[ \sum_{j \neq j' \in \fset{m}}\omega_i(j,j')\tv(A'_{j},B'_{j'}) +  \sum_{j \in \fset{m}}\omega_i(j,j)\tv(A'_{j},B'_{j}).\]
	We can loosely bound the first part because the total variation distance $\tv(A'_{j},B'_{j'})$ is never bigger than $1$. Hence, writing $\ndrel{\fset{m}} = \{(j,j')\mid j\neq j' \in \fset{m}\}$,
	\[\sum_{j \neq j' \in \fset{m}}\omega_i(j,j')\tv(A'_{j},B'_{j'}) \leq \sum_{j \neq j' \in \fset{m}}\omega_i(j,j') = \omega_i(\ndrel{\fset{m}}).\]
	We can also bound the second part because $\tv(A'_{j},B'_{j}) \leq \tvmax(A',B')$ for every $j$ by definition, and because the total weight of $\omega_i$ on $\drel{\fset{m}} = \{(j,j)\mid j \in \fset{m}\}$ is at most $1$:
	\[\sum_{j \in \fset{m}}\omega_i(j,j)\tv(A'_{j},B'_{j}) \leq \omega_i(\drel{\fset{m}})\tvmax(A',B') \leq \tvmax(A',B').\]
	To summarise, we have shown that 
	\[\tvmax(A;A',B;B') \leq \max_{i \in \fset{n}}\omega_i(\ndrel{\fset{m}}) + \tvmax(A',B'),\]
	for any choice of couplings $\omega_i$.
	We can conclude by using the equivalent definition of $\tv$ \cref{eqn:tvcoupling}: for any $i$ and choice of $\omega_i$, \[\omega_i(\ndrel{\fset{m}}) \overset{\cref{eqn:tvcoupling}}{\leq} \tv(A_{i},B_{i}) \overset{\cref{eqn:tvmax}}{\leq} \tvmax(A,B).\]
	Therefore, we have the desired inequation
	\[\tvmax(A;A',B;B') \leq \tvmax(A,B) + \tvmax(A',B').\qedhere\]
			
																				\end{proof}

\begin{proof}[Proof of \Cref{lem:fsnexptensor}]
	Notice that the distance between two columns of a stochastic matrix, seen as distributions, is invariant when adding rows of zeroes. This can easily be inferred from another equivalent formulation of $\tv$ in e.g.~\cite{Gibbs2002}: \begin{equation}\label{eqn:tvcharacsum}
		\tv(\dist,\distb) = \frac{1}{2}\sum_{x \in X} |\dist(x) - \distb(x)|.
	\end{equation}
	If we add (or remove) an element $x \in X$ which is assigned weight $0$ by both $\dist$ and $\distb$, then the distance between the resulting distributions is clearly the same. In particular, the distances between the columns of the direct sums $A\oplus A'$ and $B \oplus B'$ is the distance between the columns of $A$ and $B$ or $A'$ and $B'$. Formally,
	\[\tv((A\oplus A')_{i}, (B\oplus B')_{i}) = \begin{cases}
		\tv(A_i,B_i) & i \leq n\\
		\tv(A'_i,B'_i) & i > n
	\end{cases}.\]
	It readily follows that 
	\begin{equation}\label{eqn:tensorfsmaxenriched}
		\tvmax(A\oplus A', B \oplus B') = \max\{\tvmax(A,B), \tvmax(A',B')\}.
	\end{equation}
	This implies the desired inequation because the maximum of two positive numbers is always smaller than their sum.
	\end{proof}

\begin{proof}[Proof of \Cref{lemma:tvF}]
	Applying $F$ to the left-hand side of \cref{eqn:ca:tv} yields:
	\begin{align*}
		F(\cc \otimes \del) & = F(\cc) \otimes F(\del)
		\\
		&= \begin{bmatrix}
			\lambda\\1-\lambda
		\end{bmatrix} \oplus []
		= \begin{bmatrix}
			\lambda\\1-\lambda\\0
		\end{bmatrix}.
	\end{align*}
	Applying $F$ to the right-hand side of \cref{eqn:ca:tv} yields:
	\begin{align*}
		F(\del \otimes \ccOneMinus) &= F(\del) \otimes F(\ccOneMinus)
		\\
		&= [] \oplus \begin{bmatrix}
			\scriptstyle 1-\lambda\\ \scriptstyle 1-(1-\lambda)
		\end{bmatrix}
		= \begin{bmatrix}
			\scriptstyle 0 \\ \scriptstyle 1-\lambda \\ \scriptstyle \lambda
		\end{bmatrix}.
	\end{align*}
	Viewing these column vectors as distributions on $\fset{3}$, the total variation distance between them is $\lambda$ (compute it with \cref{eqn:tvcharacsum}), so \cref{eqn:ca:tv} is true in $F$.
\end{proof}
This leads to the following result.
\begin{lemma}\label{lem:Ftvisenriched} The functor $F \colon \sync{\ConvAlg} \to \FStoch$ is an enriched functor $F \colon \sync{\BarAlg} \to \FS$. 
\end{lemma}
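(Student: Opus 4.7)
The plan is to invoke \cref{thm:model-to-enriched-model}, applied to the strict monoidal functor $F$ from \cref{thm:fritzmain}, with $\qU = \BarAlg$ and target $\C = \FS$. The conclusion of that theorem is precisely that a classical model of $(\Sigma, E)$ satisfying the appropriate hypotheses lifts to an enriched functor $\sync{\BarAlg} \to \FS$, which is exactly the statement to be proved.

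Three hypotheses need to be checked. First, that $F$ is a model of the underlying monoidal theory $\ConvAlg$: this is exactly the content of \cref{thm:fritzmain}, which in fact produces an isomorphism. Second, that every quantitative equation in $E_q$ is true in $F$: the set $E_q$ consists solely of the axiom schema \cref{eqn:ca:tv}, and these are shown true in $F$ by \cref{lemma:tvF}. Third, that $\FS$ is a $\PMet_\otimes$-enriched SMC in which all the inference rules used to build the closure $\Pclosesummax{E_q}$ are valid.

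The third point is where the most care is needed, and is the only place I anticipate a mild subtlety. The closure $\Pclosesummax{E_q}$ is generated with \RuleSeqsum, \RuleTensmeet, and \RuleSym, whereas the $\PMet_\otimes$-enriched monoidal structure supplied by \cref{lem:fsenriched} only directly validates \RuleSeqsum\ and, because $\PMet$ consists of pseudometric spaces, \RuleSym. Validity of \RuleTensmeet\ is strictly stronger than the \RuleTenssum\ built into $\PMet_\otimes$-monoidal enrichment, so it must be argued separately. For this I would appeal to the equality \cref{eqn:tensorfsmaxenriched} established inside the proof of \cref{lem:fsnexptensor}, namely $\tvmax(A \oplus A', B \oplus B') = \max\{\tvmax(A,B),\tvmax(A',B')\}$, which immediately entails nonexpansiveness of $\oplus$ with respect to $\boxtimes_{\vmeet}$, i.e.~validity of \RuleTensmeet\ in $\FS$. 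With all three hypotheses verified, \cref{thm:model-to-enriched-model} delivers the lemma without further calculation; all technical work has already been carried out in the preceding results.
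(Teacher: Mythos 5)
Your proposal is correct and takes essentially the same route as the paper: apply \cref{thm:model-to-enriched-model} to the model $F$, using \cref{lemma:tvF} for the axioms \cref{eqn:ca:tv} and the validity of \RuleSeqsum, \RuleSym, and \RuleTensmeet\ in $\FS$. The subtlety you flag is real but already handled by the paper: validity of \RuleTensmeet\ in $\FS$ is exactly the content of \cref{eqn:tensorfsmaxenriched} in the proof of \cref{lem:fsnexptensor}, which is the fact the paper invokes when it asserts \RuleTensmeet\ is valid in $\FS$.
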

\begin{proof}[Proof of \Cref{lem:Ftvisenriched}]
	We know that $F$ is a model of $\ConvAlg$, so we want to apply \cref{thm:model-to-enriched-model}. In order to do so, we recall that \RuleSeqsum, \RuleSym, and \RuleTensmeet\ are valid in $\FS$. 	It remains to verify that $F$ satisfies all the quantitative equations in $\BarAlg$, which is what \cref{lemma:tvF} achieves.
	
	We conclude that $F$ is enriched, in particular, for any $s,t \in \sync{\BarAlg}(n,m)$,
	\begin{equation}\label{eq:isometryTV} 
		\tvmax(F(s),F(t)) \leq d^{\BarAlg}(s,t).\qedhere
	\end{equation}
								\end{proof}

\begin{proof}[Proof of \Cref{lem:isoondist}]
													Given two morphisms $f,g\colon 1 \rightarrow m$, let their corresponding distributions be $\dist,\distb \in \Dset \fset{m}$. We will show that $d^{\BarAlg}(f,g) \leq \tv(\dist,\distb) = \tvmax(F(f),F(g))$. 	
	First, letting $\lambda \coloneqq \tv(\dist,\distb)$, \cref{lem:splitting} tells us that the distributions corresponding to $f$ and $g$ are equal to the distributions corresponding respectively to
		\[\scalebox{0.7}{\tikzfig{distsplitalone}} \quad \text{ and } \quad \scalebox{0.7}{\tikzfig{distbsplitalone}}.\]
		Since $F$ is fully faithful (by \cref{thm:fritzmain}), it means the following equalities between diagrams can be proven in $\ConvAlg$.
	\begin{gather*}
		\scalebox{0.7}{\tikzfig{distsplit}} \\ 
		\scalebox{0.7}{\tikzfig{distbsplit}}
	\end{gather*}
	It is straightforward then to apply \cref{eqn:deldistrib}, the thick versions of unitality and associativity, and finally \RuleRefl\ to prove the following in $\sync{\BarAlg}$.
	\begin{equation}\label{eqn:distsplitthree}
		\scalebox{0.85}{\tikzfig{distsplitthree}}
	\end{equation}
	\begin{equation}\label{eqn:distbsplitthree}
		\scalebox{0.85}{\tikzfig{distbsplitthree}}
	\end{equation}
	The only thing that differs in the right-hand sides of \cref{eqn:distsplitthree} and \cref{eqn:distbsplitthree} is the left part. More precisely, 	we recognise both sides of \cref{eqn:ca:tv}. 	Therefore, we have
	\[\scalebox{0.6}{\tikzfig{tvcoupling}} =_{\lambda} \scalebox{0.6}{\tikzfig{tvcouplingr}} \quad \text{and}\]
	\[ \quad \scalebox{0.6}{\tikzfig{distsplitthreealone}} =_0 \scalebox{0.6}{\tikzfig{distbsplitthreealone}},\]	
	so we can apply \RuleSeqsum, then \RuleTriang\ twice with \cref{eqn:distsplitthree} and \cref{eqn:distbsplitthree} to obtain 
	\[\tikzfig{dist}\ =_{\lambda}\ \tikzfig{distb}.\]
	We conclude that $d^{\BarAlg}(f,g) \leq \lambda = \tv(\dist,\distb)$.	
	The converse inequality holds because $F$ is an enriched functor, hence $F$ is an isometry on the hom-sets with domain $1$. We know it is bijective from \cref{thm:fritzmain}.
\end{proof}

\begin{proof}[Proof of \Cref{thm:baralg-enriched-iso}]
	We already know that $F$ is an isomorphism on the underlying categories by \cref{thm:fritzmain}. Moreover, by \cref{lem:Ftvisenriched}, $F$ is also an enriched functor, so for any $f,g \in \sync{\BarAlg}(n,m)$, we already know that $d^{\BarAlg}(f,g) \geq \tvmax(F(f),F(g))$. Therefore, we only need to prove the converse inequation to conclude that $F$ is an isomorphism of enriched categories thanks to \cref{cor:characenrichediso}.
	
	We use \cref{lem:fritzdecompose} to find the following decompositions of $f$ and $g$:
	\[f = (f_1 \otimes \cdots \otimes f_n);p_m^n \text{ and } g = (g_1 \otimes \cdots \otimes g_n);p_m^n,\]
	where $f_i\colon 1 \rightarrow m$ and $F(f_i)$ is the $i$th column of $F(f)$, and similarly for $g$. Since \RuleSeqsum\ and \RuleTensmeet are valid in $\sync{\BarAlg}$, we have
	\begin{align*}
		d^{\BarAlg}(f,g) &\leq \max_{i \in \fset{n}} d^{\BarAlg}(f_i,g_i) + d^{\BarAlg}(p_m^n,p_m^n)\\
		&= \max_{i \in \fset{n}} d^{\BarAlg}(f_i,g_i).
	\end{align*}
	Since the $f_i$s and $g_i$s are morphisms with domain $1$, we can use \cref{lem:isoondist} to find that
	\[d^{\BarAlg}(f_i,g_i) = \tv(F(f_i),F(g_i)).\]
	Thus, we conclude, as desired, that \[d^{\BarAlg}(f,g) \leq \max_{i \in \fset{n}} \tv(F(f_i),F(g_i)) = \tvmax(F(f),F(g)).\qedhere\]
\end{proof}

\subsection{Proofs of \cref{sec:related}}\label[app]{app:related}
We recall the recursive definition of a cartesian term over a signature $\Sigma$ whose operations all have coarity $1$. Any variable $x$ (taken from a fixed countable set) is a cartesian term, and for any cartesian terms $t_1,\dots, t_n$ and $n$-ary operation $o : n \in \Sigma$, $o(t_1,\dots, t_n)$ is a cartesian term.

Since we are working with unconditional quantitative algebraic theories, we are interested in a simpler logic than the quantitative equational logic in \cite{Mardare2016}. We take inspiration from the simpler rules in \cite[Figure~3.1]{Sarkis2024}, but we add back the requirement that operations are nonexpansive with a corresponding rule. Another related logic is fuzzy equational logic in \cite[p.143]{Belholavek2005}. In \cref{fig:qeqlog}, we present the rules of quantitative equational logic.
\begin{figure}[!ht]
	\begin{gather*}
																														\begin{bprooftree}
			\AxiomC{}
			\RightLabel{\textsc{Bot'}}
			\UnaryInfC{$s =_\bot t$}
		\end{bprooftree}\quad \begin{bprooftree}
			\AxiomC{$s =_{\varepsilon} t$}
			\AxiomC{$\varepsilon' \vle \varepsilon$}
			\RightLabel{\textsc{Mon'}}
			\BinaryInfC{$s =_{\varepsilon'} t$}
		\end{bprooftree}\\[0.5em]\begin{bprooftree}
			\AxiomC{$\forall i \in I, s =_{\varepsilon_i} t$}
			\RightLabel{\textsc{Cont'}}
			\UnaryInfC{$s =_{\vJoin_i \varepsilon_i} t$}
		\end{bprooftree}\\[0.5em]
		\begin{bprooftree}
			\AxiomC{$s = t$ is provable in equational logic}
			\RightLabel{\textsc{Refl'}}
			\UnaryInfC{$s =_\top t$}
		\end{bprooftree}\\[0.5em]
		\begin{bprooftree}
			\AxiomC{$s=_{\varepsilon} t$}
			\RightLabel{\textsc{Symm'}\footnotemark}
			\UnaryInfC{$t=_{\varepsilon} s$}
		\end{bprooftree}\quad \begin{bprooftree}
			\AxiomC{$t=_{\varepsilon} t'$}
			\AxiomC{$t' =_{\varepsilon'} t''$}
			\RightLabel{\textsc{Triang}}
			\BinaryInfC{$t=_{\varepsilon\vten\varepsilon'} t''$}
		\end{bprooftree}\\[0.5em]
		\begin{bprooftree}
			\AxiomC{$t=_{\varepsilon} t'$}
			\RightLabel{\textsc{SubQ}}
			\UnaryInfC{$t[s_i/x_i] =_{\varepsilon} t'[s_i/x_i]$}
		\end{bprooftree}\\[0.5em]
		\begin{bprooftree}
			\AxiomC{$o:n \in \Sigma$}
			\AxiomC{$\forall i, s_i =_{\varepsilon_i} t_i$}
			\RightLabel{\textsc{NExp}}
			\BinaryInfC{$o(s_1,\dots, s_n) =_{\vmeet_i \varepsilon_i} o(t_1,\dots, t_n)$}
		\end{bprooftree}
	\end{gather*}
\caption{Rules of unconditional quantitative equational logic. The symbols $s$, $t$, and variants are universally quantified over cartesian terms, while the symbol $\varepsilon$ and variants are universally quantified over $\qV$. The notation $t[s_i/x_i]$ indicates the substitution of all occurrences of $x_i$ with the term $s_i$ inside $t$ for all $i$.}\label{fig:qeqlog}
\end{figure}
\footnotetext{This rule is considered only when working with $\qVCatsym$ instead of $\qVCat$.}
Given an unconditional quantitative algebraic theory $\qU = (\Sigma,E,E_q)$, we write $\qU \vdash s=_{\varepsilon} t$, or say that $s=_{\varepsilon} t$ in $\qU$, if there is proof tree of finite height (possibly infinite branching) that uses the rules in \cref{fig:qeqlog} or the axioms in $E$ and $E_q$ and concludes $s=_{\varepsilon} t$.

With \cref{def:generated-dlawvth}, the above completes the definition of $\LT{\qU}$, so we can start rigorously proving things about it. We restate the definition of $d_{\LT{\qU}}$ for reference:
\begin{equation}\label{defn:distLT}
	d_{\LT{\qU}}(\langle f_i \rangle, \langle g_i \rangle) = \vmeet_{i} \vJoin \left\{ \varepsilon \mid \qU \vdash f_i =_{\varepsilon} g_i \right\}.
\end{equation}
\begin{lemma}
	The inference rule \RuleSeqsum\ is valid in $\LT{\qU}$.
\end{lemma}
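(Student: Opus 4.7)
The plan is to unfold validity of \RuleSeqsum\ according to \cref{defn:soundness}: we must show that sequential composition ${;}\colon \LT{\qU}(n,m) \boxtimes_{\vten} \LT{\qU}(m,\ell) \to \LT{\qU}(n,\ell)$ is nonexpansive. Fix morphisms $\langle f_i^0 \rangle, \langle f_i^1 \rangle \colon n \to m$ and $\langle g_j^0 \rangle, \langle g_j^1 \rangle \colon m \to \ell$, and abbreviate $\alpha_i \coloneqq \vJoin \{\varepsilon \mid \qU \vdash f_i^0 =_\varepsilon f_i^1\}$ and $\beta_j \coloneqq \vJoin \{\varepsilon \mid \qU \vdash g_j^0 =_\varepsilon g_j^1\}$. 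By \cref{defn:distLT} the source distances are $\vmeet_i \alpha_i$ and $\vmeet_j \beta_j$, and since composition is by substitution the target distance is $\vmeet_i \gamma_i$ where $\gamma_i \coloneqq \vJoin \{\vartheta \mid \qU \vdash f_i^0[g_j^0/x_j] =_\vartheta f_i^1[g_j^1/x_j]\}$. The goal is therefore $(\vmeet_i \alpha_i) \vten (\vmeet_j \beta_j) \vle \vmeet_i \gamma_i$.

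First I would collapse each family of derivable distances using \textsc{Cont'}, obtaining $\qU \vdash f_i^0 =_{\alpha_i} f_i^1$ and $\qU \vdash g_j^0 =_{\beta_j} g_j^1$. A direct application of \textsc{SubQ} to the former gives $\qU \vdash f_i^0[g_j^0/x_j] =_{\alpha_i} f_i^1[g_j^0/x_j]$, which substitutes uniformly on both sides. To bridge the remaining gap from $f_i^1[g_j^0/x_j]$ to $f_i^1[g_j^1/x_j]$, I would prove by induction on term structure the auxiliary lemma: given $s_j =_{\beta_j} t_j$ for all $j$ and any cartesian term $r$ with variables among $x_1, \ldots, x_m$, one has $\qU \vdash r[s_j/x_j] =_{\vmeet_j \beta_j} r[t_j/x_j]$. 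The base case $r = x_k$ follows from \textsc{Mon'} applied to $\vmeet_j \beta_j \vle \beta_k$; the inductive case invokes \textsc{NExp} at the root operation, using that the outer meet collapses ($\vmeet_l \vmeet_j \beta_j = \vmeet_j \beta_j$ because the inner quantity does not depend on $l$).

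Instantiating this auxiliary lemma with $r = f_i^1$, $s_j = g_j^0$, $t_j = g_j^1$ yields $\qU \vdash f_i^1[g_j^0/x_j] =_{\vmeet_j \beta_j} f_i^1[g_j^1/x_j]$, and \textsc{Triang} combines the two derivations into $\qU \vdash f_i^0[g_j^0/x_j] =_{\alpha_i \vten \vmeet_j \beta_j} f_i^1[g_j^1/x_j]$. Hence $\alpha_i \vten \vmeet_j \beta_j \vle \gamma_i$ for every $i$. Since $\vten$ is monotone in each argument in any integral quantale (as in the proof of \cref{rem:tensor_below_meet}), $(\vmeet_i \alpha_i) \vten (\vmeet_j \beta_j) \vle \alpha_i \vten \vmeet_j \beta_j \vle \gamma_i$ for each $i$, and taking the meet over $i$ delivers the desired inequality. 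The main obstacle is the auxiliary lemma, since \textsc{NExp} only governs one layer of operation symbol, so an inductive argument on term structure is needed to propagate nonexpansivity through arbitrary cartesian terms under distinct substitutions on each side.
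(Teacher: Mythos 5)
Your proof is correct, and its skeleton matches the paper's: split the comparison of composites at the intermediate term $f_i^1[g_j^0/x_j]$, handle the first half with \textsc{SubQ}, the second half by structural induction on the term using \textsc{Mon'} at variables and \textsc{NExp} at operation symbols, and glue the two with \textsc{Triang}. The one genuine difference is how the infinitary joins are treated. The paper first rewrites $d_{\LT{\qU}}(g^0,g^1) = \vmeet_j \vJoin\{\varepsilon \mid \qU \vdash g_j^0 =_{\varepsilon} g_j^1\}$ as a join of meets over all choices of $\varepsilon_j$ --- this is precisely the ``technical point'' where the IJD assumption on $\qV$ is invoked --- then runs the induction once for every such choice and reassembles with \textsc{Cont'} at the end. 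You instead apply \textsc{Cont'} up front, so that each join value $\beta_j$ is itself derivable, and run the induction a single time with these fixed distances; as a result your argument never uses IJD in this lemma (IJD is still assumed globally for the theorem, e.g.\ for the $\boxtimes_{\vmeet}$ structure, but you show it is not needed at this step). This is legitimate, since \textsc{Cont'} is an inference rule of the logic --- the paper uses the same collapsing move in its completeness proof --- and it is arguably the cleaner route. A second, cosmetic difference: you keep the per-coordinate joins $\alpha_i$ and discharge the meet over $i$ at the very end by monotonicity of $\vten$ in the quantale, whereas the paper pushes $d_{\LT{\qU}}(f^0,f^1)$ into each derivation via \textsc{Mon'}; both work. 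The only glossed point is the nullary-operation case of your auxiliary induction, where the empty meet produced by \textsc{NExp} is $\top$ and \textsc{Mon'} (or \textsc{Refl'}) closes the gap --- the paper's proof glosses this case in exactly the same way.
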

\begin{proof}
	We need to show that for all morphisms $s,t\colon n \rightarrow m$ and $s',t'\colon m \rightarrow \ell$,
	\begin{equation}\label{eqn:seqsumvalid}
		d_{\LT{\qU}}(s,t) \vten d_{\LT{\qU}}(s',t') \vle d_{\LT{\qU}}(s;s',t;t').
	\end{equation}
	Unrolling the definition of $d_{\LT{\qU}}$ \cref{defn:distLT} and of composition in $\LT{\qU}$, we rewrite the R.H.S. as
	\[\vmeet_{i} \vJoin\left\{\varepsilon \mid \qU \vdash s_i[s'_j/x_j] =_{\varepsilon} t_i[t'_j/x_j] \right\}.\]
	Our goal then is to show that for any $i$, we have
	\[\qU \vdash s_i[s'_j/x_j] =_{d_{\LT{\qU}}(s,t) \vten d_{\LT{\qU}}(s',t')} t_i[t'_j/x_j].\]
	We decompose it in two proofs that yield,
	\begin{align}
		\qU &\vdash s_i[s'_j/x_j] =_{d_{\LT{\qU}}(s,t)} t_i[s'_j/x_j], \text{ and}\label{eqn:seqsumvalidfirst}\\
		\qU &\vdash t_i[s'_j/x_j] =_{d_{\LT{\qU}}(s',t')} t_i[t'_j/x_j].\label{eqn:seqsumvalidsecond}
	\end{align}
	then conclude by the triangle inequality rule \textsc{Triang}.

	For \cref{eqn:seqsumvalidfirst}, for any $\varepsilon$ such that $\qU \vdash s_i =_{\varepsilon} t_i$, we can use the substitution rule \textsc{SubQ} to get $\qU \vdash s_i[s'_j/x_j] =_{\varepsilon} t_i[s'_j/x_j]$. Therefore, using the continuity rule \textsc{Cont'}, we also have $\qU \vdash s_i[s'_j/x_j] =_{\vJoin\left\{ \varepsilon \mid s_i =_{\varepsilon} t_i \right\}} t_i[s'_j/x_j]$. We obtain \cref{eqn:seqsumvalidfirst} by the monotonicity rule \textsc{Mon'} because
	\[\vJoin\left\{ \varepsilon \mid \qU \vdash s_i =_{\varepsilon} t_i \right\} \vge \vmeet_{k} \vJoin\left\{ \varepsilon \mid \qU \vdash s_{k} =_{\varepsilon} t_{k} \right\} = d_{\LT{\qU}}(s,t).\]

	For \cref{eqn:seqsumvalidsecond}, we recall that by the IJD property of $\qV$, we have
	\begin{align*}
		d_{\LT{\qU}}(s',t') &= \vmeet_j \vJoin \left\{\varepsilon \mid \qU \vdash s'_j =_{\varepsilon} t'_j \right\}\\
		&= \vJoin \left\{ \vmeet_j \varepsilon_j \mid \forall j, \qU \vdash s'_j =_{\varepsilon_j} t'_j \right\}.
	\end{align*}
	Then, for all choices of $\varepsilon_j$, such that $\qU \vdash s'_j =_{\varepsilon_j} t'_j$ for all $j$, we will prove by induction on the structure of $t_i$ that $\qU \vdash t_i[s'_j/x_j] =_{\vmeet_j\varepsilon_j} t_i[t'_j/x_j]$, and \cref{eqn:seqsumvalidsecond} will follow by \textsc{Cont}.

	In the base case, we suppose that $t_i = x_k$, and we have to show $\qU \vdash s'_k =_{\vmeet_j \varepsilon_j} t'_k$. This holds by using \textsc{Mon'} because by hypothesis we have $\qU \vdash s'_k =_{\varepsilon_k} t'_k$ and $\varepsilon_k \vge \vmeet_j \varepsilon_j$.

	For the inductive case, we suppose that $t_i = o(u_1,\dots, u_p)$ and that $\qU \vdash u_q[s'_j/x_j] =_{\vmeet_j \varepsilon_j} u_q[t'_j/x_j]$ for all $q$. Then, by the nonexpansiveness rule \textsc{NExp} (and the fact that $\vmeet$ is idempotent), we have $\qU \vdash o(u_1,\dots,u_p)[s'_j/x_j] =_{\vmeet_j\varepsilon_j} o(u_1,\dots, u_p)[t'_j/x_j]$ as desired.
\end{proof}
\begin{lemma}
	The inference rule \RuleTensmeet\ is valid in $\LT{\qU}$.
\end{lemma}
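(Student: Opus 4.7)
The plan is to directly unfold the definition of the monoidal product in $\LT{\qU}$ and the definition of $d_{\LT{\qU}}$, and show that the inequality reduces to a basic property of variable renaming. In fact, I expect to obtain equality, which implies the desired inequality for \RuleTensmeet.

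First, I would describe $\otimes$ concretely in $\LT{\qU}$. For $s = \langle s_i \rangle_{i=1}^n \colon n \to m$ and $s' = \langle s'_j \rangle_{j=1}^{n'} \colon n' \to m'$, the monoidal product is the $(n+n')$-tuple
$$s \otimes s' = \langle s_1, \ldots, s_n,\ s'_1[x_{m+k}/x_k],\ \ldots,\ s'_{n'}[x_{m+k}/x_k] \rangle \colon n+n' \to m+m',$$
i.e.\ the tuple obtained by concatenating $s$ with a variable-renamed copy of $s'$ (shifting indices of $s'$ by $m$ so that $s$ and $s'$ use disjoint variables). The same description applies to $t \otimes t'$.

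Second, I would unroll the R.H.S.~of the required inequality using \cref{defn:distLT} and split the meet into two blocks:
$$d_{\LT{\qU}}(s \otimes s',\, t \otimes t') = \Bigl( \vmeet_{i=1}^{n} \vJoin \{ \varepsilon \mid \qU \vdash s_i =_\varepsilon t_i \}\Bigr) \vmeet \Bigl( \vmeet_{j=1}^{n'} \vJoin\{ \varepsilon \mid \qU \vdash s'_j[x_{m+k}/x_k] =_\varepsilon t'_j[x_{m+k}/x_k] \} \Bigr).$$
The first block is by definition $d_{\LT{\qU}}(s,t)$, so the whole statement reduces to showing the second block equals $d_{\LT{\qU}}(s',t')$. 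This in turn reduces, for each $j$, to the set equality
$$\{ \varepsilon \mid \qU \vdash s'_j =_\varepsilon t'_j \} = \{ \varepsilon \mid \qU \vdash s'_j[x_{m+k}/x_k] =_\varepsilon t'_j[x_{m+k}/x_k] \}.$$
The $\subseteq$ direction is an immediate instance of the \textsc{SubQ} rule from \cref{fig:qeqlog}. For $\supseteq$, I would apply \textsc{SubQ} again with the reverse renaming $x_{m+k} \mapsto x_k$; since $s'_j$ and $t'_j$ contain no variable $x_\ell$ with $\ell > m'$, the two successive renamings compose to the identity, returning to the original terms.

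Putting the two blocks back together yields
$$d_{\LT{\qU}}(s \otimes s',\, t \otimes t') = d_{\LT{\qU}}(s,t) \vmeet d_{\LT{\qU}}(s',t'),$$
which is even stronger than the nonexpansiveness required by \RuleTensmeet. The only technical point, and the step I would present most carefully, is the invertibility of the variable renaming via two applications of \textsc{SubQ}; this argument does not use \textsc{NExp} or the IJD property of $\qV$, reflecting the fact that $\otimes$ in $\LT{\qU}$ is simply disjoint juxtaposition of independent terms.
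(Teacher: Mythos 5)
Your proposal is correct and takes essentially the same route as the paper's proof: since $\otimes$ in $\LT{\qU}$ concatenates tuples and $d_{\LT{\qU}}$ is a coordinatewise meet, the required inequality is in fact an equality by associativity of $\vmeet$. The only difference is that you explicitly justify invariance of derivable distances under the variable shift via two applications of \textsc{SubQ}, a (correct and harmless) detail the paper leaves implicit.
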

\begin{proof}
	We need to show that for all morphisms $s,t\colon n \rightarrow m$ and $s',t'\colon m \rightarrow \ell$,
	\begin{equation}\label{eqn:seqmeetvalid}
		d_{\LT{\qU}}(s,t) \vmeet d_{\LT{\qU}}(s',t') \vle d_{\LT{\qU}}(s \otimes s',t \otimes t').
	\end{equation}
	This is relatively simple because the definition of $\otimes$ in $d_{\LT{\qU}}$ concatenates tuples. Hence, the inequation above is in fact an equation because $\vmeet$ is associative.
\end{proof}

Before moving on to the proof of this section's main theorem, we clarify the definition of $\qU'$ \cref{def:associated-cart-quant-theory}. How exactly are the (quantitative) equations in $E$ and $E_q$ between cartesian terms ported to $\qU'$. By \cite[Theorem 6.1]{Bonchi2018}, there is an isomorphism $\Phi\colon \syncat{\Sigma'}{E^c} \rightarrow \LT{\Sigma}$, so that any cartesian term $t$ is assigned a string diagram $\Phi^{-1}(t)$, and also a monoidal $\Sigma'$-term after choosing a representative. Thus, if $s=t \in E$, the equation we include in $E'$ is between representatives of $\Phi^{-1}(s)$ and $\Phi^{-1}(t)$, and similarly for quantitative equations in $E_q$. In the following, we will abusively omit the step of choosing a representative and write, e.g., \[E'_q = \{\Phi^{-1}(s) =_{\varepsilon} \Phi^{-1}(t) \mid s=_{\varepsilon} t \in E_q\}.\] 
\begin{proof}[Proof of \cref{thm:from-quant.mon.th-to-quant.mon.th}]
	By \Cref{cor:characenrichediso} we know it is enough to have an isomorphism of the underlying categories which is locally an isometry. As stated above, \cite[Theorem~6.1]{Bonchi2018} gives us an isomorphism that we denote with $\Phi\colon \syncat{\Sigma'}{E'} \to \LT{\Sigma,E}$.\footnote{We abuse notation and use the same letter $\Phi$ because, as spelled out in \cite[Lemma 6.5]{Bonchi2018}, both functors morally act in the same way.} We have left to prove that for any natural numbers $n$ and $m$, $\Phi$ is locally an isometry.
	
	Our first step is to consider $\Phi$ as a model of $(\Sigma',E')$ valued in $\LT{\qU}$ (recall that its underlying category is $\LT{\Sigma,E}$), where \RuleSeqsum\ and \RuleTensmeet\ are valid by the previous lemmas. Hence, if we can show that the equations in $E'_q$ are true in this model, \cref{thm:model-to-enriched-model} will imply that $\Phi$ is locally nonexpansive.

	A quantitative equation in $E'_q$ has the shape $\Phi^{-1}(s) =_{\varepsilon} \Phi^{-1}(t)$ for some cartesian terms $s$ and $t$. To show it is true in the model $\Phi$, we need to show that $\varepsilon \vle d_{\LT{\qU}}(\Phi(\Phi^{-1}(s)), \Phi(\Phi^{-1}(t)))$. This is true because $\Phi^{-1} \circ \Phi = \id$ and, by hypothesis, $\qU \vdash s=_{\varepsilon} t$ as that equation belongs to the axioms in $E_q$. We conclude by \cref{thm:model-to-enriched-model} that $\Phi: \sync{\qU'} \rightarrow \LT{\qU}$ is locally nonexpansive.

		We still need to show that $\forall f,g\colon n \rightarrow m \in \sync{\qU'}$,
	\begin{equation}\label{eqn:isometrylawvdecomposed}
		d_{\LT{\qU}}(\Phi f,\Phi g) \vle d_{\qU'}(f,g).
	\end{equation}

	Note that for each $i$, the $i$th element of the tuple $\Phi f$ is represented by a diagram $f_i$ (apply $\Phi^{-1}$) such that $f = \bigoplus_i f_i$. Moreover, by \RuleTensmeet, we have that $\vmeet_i d_{\qU'}(f_i,g_i) \vle d_{\qU'}(f,g)$. Therefore, by \cref{defn:distLT}, it suffices to show that $d_{\LT{\qU}}(\Phi f_i,\Phi g_i) \vle d_{\qU'}(f_i,g_i)$ for every $i$ to conclude \cref{eqn:isometrylawvdecomposed}.
	
	Since $d_{\LT{\qU}}$ and $d_{\qU'}$ are both defined as the join of derivable distances, we can prove that inequation by showing that for each $\varepsilon$ such that $\qU \vdash \Phi f_i =_{\varepsilon} \Phi g_i$, $f_i =_{\varepsilon} g_i \in \Hclosesummax{E_q}$.

	We proceed by structural induction on the proof witnessing $\qU \vdash \Phi f_i =_{\varepsilon} \Phi g_i$. For the axioms in $E_q$, we know that $f_i =_{\varepsilon} g_i$ is an axiom in $E'_q$ by constructions. Most of the other rules in \cref{fig:qeqlog} are directly replicated in the inference rules used to define the closure of $E'_q$. 
		The two complicated rules are \textsc{SubQ} and \textsc{NExp}.													
	For \textsc{SubQ}, we note that the substitution $t[\sigma(x_i)/x_i]$ corresponds to the composition \[\Phi^{-1} (\langle \sigma(x_1),\dots, \sigma(x_n) \rangle) ; \Phi^{-1} t,\]
	and similarly the substitution $t'[\sigma(x_i)/x_i]$ corresponds to \[\Phi^{-1} (\langle \sigma(x_1),\dots, \sigma(x_n) \rangle) ; \Phi^{-1} t'.\]
	Therefore, we can apply \RuleRefl\ to obtain \[\Phi^{-1} (\langle \sigma(x_1),\dots, \sigma(x_n) \rangle) =_{\top} \Phi^{-1} (\langle \sigma(x_1),\dots, \sigma(x_n) \rangle),\] the induction hypothesis to obtain $\Phi^{-1}t =_{\varepsilon} \Phi^{-1}t'$, and finally \RuleSeqsum, to obtain the desired $\Phi^{-1}t[\sigma(x_i)/x_i] =_{\varepsilon} t'[\sigma(x_i)/x_i]$.

	For \textsc{NExp}, $o(s_1,\dots, s_n)$ again corresponds to a composition
	\[\raisebox{0.3ex}{\scalebox{0.4}{\tikzfig{copy}}} ; (\Phi^{-1}s_1 \otimes \cdots \otimes \Phi^{-1}s_n) ; o,\]
	where the copy diagram actually represents the copy on $n$ wires. Similarly for $o(t_1,\dots, t_n)$. 
	By the induction hypothesis, we have for all $i$, $\Phi^{-1}s_i =_{\varepsilon_i} \Phi^{-1}t_i$, which we can combine with \RuleTensmeet\ to find
	\[\Phi^{-1}s_1 \otimes \cdots \otimes \Phi^{-1}s_n =_{\vmeet_i\varepsilon_i} \Phi^{-1}t_1 \otimes \cdots \otimes \Phi^{-1}t_n.\]
	Therefore, we can apply \RuleRefl\ to obtain \[\raisebox{0.3ex}{\scalebox{0.4}{\tikzfig{copy}}}  =_\top \raisebox{0.3ex}{\scalebox{0.4}{\tikzfig{copy}}}  \text{ and } o =_{\top} o,\]
	and finally \RuleSeqsum\ twice to get the desired $\Phi^{-1}(o(s_1,\dots, s_n)) =_{\vmeet_i\varepsilon_i} \Phi^{-1}(o(t_1,\dots, t_n))$.
	\end{proof}

\end{document}